\def\red{\color{red}}
\def\black{\color{black}}
\def\blue{\color{blue}}
\numberwithin{equation}{section}
\def\strong{\mathsf{STR}}
\def\BQP{\mathsf{BQP}}
\def\PP{\mathsf{PP}}
\begin{document}
\title{Quantum circuit dynamics via path integrals:\\
 Is there a classical action for discrete-time paths?}
\author{Mark D. Penney}
\email{mark.penney@maths.ox.ac.uk}
\affiliation{Mathematical Institute, University of Oxford, Woodstock Road, Oxford, UK OX2 4GG}
\author{Dax Enshan Koh}
\email{daxkoh@mit.edu}
\affiliation{Department of Mathematics, Massachusetts Institute of Technology, Cambridge, Massachusetts 02139, USA}
\author{Robert W. Spekkens}
\email{rspekkens@perimeterinstitute.ca}
\affiliation{Perimeter Institute for Theoretical Physics, 31 Caroline Street North, Waterloo, Ontario, Canada N2L 2Y5}

\date{\today\ }

\begin{abstract}
It is straightforward to compute the transition amplitudes of a quantum circuit using the sum-over-paths methodology when the gates in the circuit are balanced,
where a balanced gate is one for which all nonzero transition amplitudes are of equal magnitude.
Here we consider the question of whether, for such circuits, the relative phases of different discrete-time paths through the configuration space can be defined in terms of a classical action, as they are for continuous-time paths.  
We show how to do so
 for certain kinds of quantum circuits, namely, Clifford circuits where the elementary systems are continuous-variable systems or discrete systems of odd-prime dimension.
These types of circuit are distinguished by having phase-space representations that serve to define their classical counterparts.  For discrete systems, the phase-space coordinates are also discrete variables. 
We show that for each gate in the generating set, one can associate a symplectomorphism on the phase-space and to each of these one can associate a generating function, defined on two copies of the configuration space.  For discrete systems, the latter association is achieved using tools from algebraic geometry.
Finally, we show that if the action functional for a discrete-time path through a sequence of gates is defined using the sum of the corresponding generating functions, then it yields the correct relative phases for the path-sum expression.  These results are likely to be relevant for
 quantizing physical theories where time is fundamentally discrete, characterizing the classical limit of discrete-time quantum dynamics, and proving complexity results for quantum circuits.
\end{abstract}
\maketitle
\tableofcontents

\section{Introduction and Summary}

\subsection{The question}

The sum-over-paths methodology in quantum mechanics, pioneered by Richard Feynman, offers an alternative to the standard means of expressing quantum dynamics, just as the least-action formulation of classical dynamics offers an alternative to the standard Hamiltonian formulation \cite{FeynmanHibbs}. 
  In particular, it allows one to determine the probability amplitude of making a transition among states for any given (possibly time-dependent) Hamiltonian operator describing the quantum dynamics of the system.
There is, however, a second type of problem to which it can be applied. Here, one is given a \emph{modular} description of the quantum system's dynamics---for instance, a description of a quantum circuit with gates that are drawn from some fixed set of possibilities---and the goal is to compute the transition amplitudes of the overall circuit from a knowledge of the transition amplitudes of each gate.  

The distinction between these two types of problems is best illustrated by an example.
Suppose one is interested in the transverse position of an atom as it passes through an interferometer.  It is then useful to treat different components in the interferometer as gates in a circuit. Determining the propagator associated to a particular gate given a knowledge of the Hamiltonian governing the dynamics of the atom as it passes through that gate is a problem of the first sort.  Determining the propagator associated to the entire interferometric set-up given a knowledge of the propagators associated to each gate is a problem of the second sort.  We shall refer to the two sorts of problems henceforth as the {\em continuous-time scenario} and the {\em circuit scenario} respectively. 
 In either scenario, one can consider the system's degrees of freedom to be discrete or continuous.  An interferometer is an example of a circuit acting on continuous degrees of freedom, while the circuits that are most commonly studied in the field of quantum computation involve discrete degrees of freedom. 

A quantum circuit can be specified by a sequence of gates, where each gate is characterized by a unitary operator. 
The dynamics that occurs \emph{within} each gate is generally not specified.  This is because only the overall functionality of the gate is important for the functionality of the circuit as a whole, and there are many different choices of the dynamics within the gate that lead to the same functionality.  For instance, a piece of polaroid and a birefringent crystal both allow one to achieve the overall functionality of a polarization filter, even though the evolution of the light within the two sorts of components is quite different.  Given that the dynamics internal to each gate is irrelevant---and may in fact be \emph{unknown}---the problem of computing the overall functionality of a circuit cannot be cast into the sum-over-paths methodology of the continuous-time scenario.  
Instead, one requires a sum-over-paths methodology that is explicitly catered to the circuit scenario, wherein each gate in the circuit is treated as a black box.

It is straightforward to express the transition amplitude of a circuit in terms of a sum or integral over discrete-time paths.  
Suppose $q$ is a label for the basis relative to which we compute amplitudes on a given system---called the {\em configuration} of that system.  For a circuit acting on $n$ systems, the configuration of the $n$ systems is a vector $\vec{q} \equiv (q^{(1)},\dots,q^{(n)})$, where $q^{(i)}$ is the configuration of the $i$th system.  Suppose the circuit is a sequence of $N$ unitaries, $\{ \hat U_k \}_{k=1}^N$,
 so that the total unitary is $\hat U= \hat U_N \hat U_{N-1} \cdots \hat U_2 \hat U_1$.  
It is then appropriate to discretize time into $N$ steps. Denoting the configuration at time step $k$ by $\vec{q}_k \equiv (q_k^{(1)},\dots,q_k^{(n)})$, a discrete-time path through the configuration space is a sequence of $N+1$ configurations,
\be{
\gamma = ( \vec q_0, \vec q_1, \ldots, \vec q_{N}).
}  
Fig.~\ref{gencircuit} depicts a circuit acting on $n$ systems with a gate depth of $N$ and illustrates our labelling convention for the discrete-time paths.

 \begin{figure}[ht]
 \resizebox{9cm}{!}{ \input{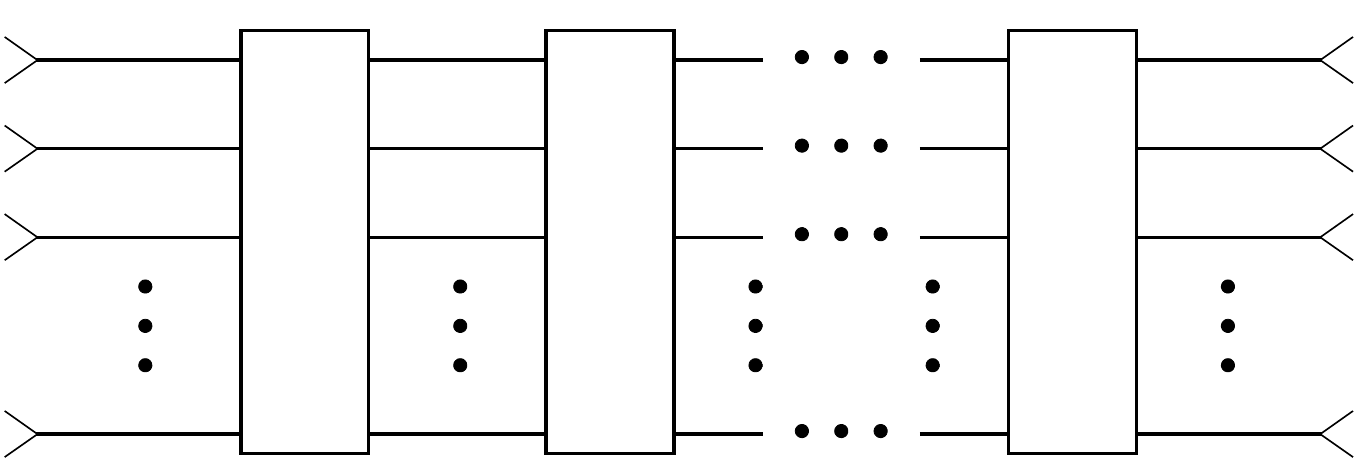_t} }
 \caption{A circuit consisting of a sequence of $N$ unitaries, together with a parameterisation of the path through configuration space. 
 }
 \label{gencircuit}
 \end{figure}

If the configuration is a continuous variable---for instance, if the Hilbert space of each system is $L^2(\R)$ so that  $q^{(i)}\in \R$---then
we can insert  resolutions of the identity between every pair of adjacent unitaries to obtain
\be{
\langle \vec q_N | \hat U |  \vec q_0 \rangle 
= \int \prod_{k=1}^N \avg{ \vec q_k}{\hat U_k}{\vec q_{k-1}} \dee \vec q_{N-1} \cdots \dee \vec q_1,
\label{jjj}
}
where $\dee \vec q_{k} \equiv \dee q_k^{(1)} \ldots \dee q_k^{(n)}$. 
Defining the amplitude associated with the path $\gamma \in \R ^{n(N+1)}$ as
\begin{equation}
\label{amplitude}
A(\gamma) =\prod_{k=1}^N \avg{\vec q_k}{ \hat{U}_k}{\vec q_{k-1}},
\end{equation}
the amplitude $\langle \vec q_N | \hat{U} | \vec q_0 \rangle $ can be expressed as the following integral over discrete-time paths
\be{
\langle \vec q_N | \hat{U} |  \vec q_0 \rangle  = \int_{\mathbb P_0(\vec q_0,\vec q_N)} A(\gamma) \dee \gamma,
\label{jjjintegral}
}
where $\mathbb P_0(\vec q_0,\vec q_N)$ denotes the space of discrete-time paths that begin at $\vec q_0$ and end at $\vec q_N$ and where $\int_{\P_0(\vec q_0,\vec q_N)} (\cdot) \dee \gamma$ denotes $\int (\cdot) \dee \vec q_{N-1} \cdots \dee \vec q_1$.

If, on the other hand, the configuration is a discrete variable---for instance, if the Hilbert space for each system is $\C^d$ so that our label is discrete, $q^{(i)}\in \Z_d$---then we have
\be{
\langle \vec q_N | \hat{U} |  \vec q_0 \rangle 
= \sum_{\vec q_{N-1} \in (\Z_d)^n} \cdots \sum_{\vec q_{1} \in (\Z_d)^n}  \prod_{k=1}^N \avg{ \vec q_k}{\hat{U}_k}{ \vec q_{k-1}}.
\label{jjj2}
}
Defining the amplitude for a discrete-time path $\gamma \in (\Z_d) ^{n(N+1)}$ by Eq.~\eqref{amplitude}, we have 
\be{
\langle \vec q_N | \hat{U} |  \vec q_0 \rangle  = \sum_{\gamma \in  \P_0(\vec q_0,\vec q_N)} A(\gamma),
\label{jjjsum}
}
where $\P_0(\vec q_0,\vec q_N)$ denotes the space of discrete-time paths that begin at $\vec q_0$ and end at $\vec q_N$.

\black



Under various circumstances, it is possible to restrict the set of paths appearing in the sum or integral.  The paradigm example of this occurs in an interference experiment, where if the particle is known to pass through a screen containing slits, then one can restrict the path integral to those paths that pass through one of the slits.  For instance, if the particle reaches the plane of the screen at the $k$th time step and the slit in the screen is at position $x$, then the transition amplitude for step $k$ has the form $\langle  q_k |\hat U_k | q_{k-1}\rangle \propto  \delta(q_{k} -q_{k-1})\delta(q_{k} -x)$, where $\delta$ represents the Dirac-delta function.  Integrating over $q_{k}$, the delta function forces all paths to pass through the point $q_{k}=x$, so that one can restrict the integral to these paths alone.

Another example of a restriction on the set of paths---the one that will be important here---is when a gate in the circuit maps the set of configurations to itself via some bijective map.  In this case, we have $\langle \vec q_k |\hat U_k |\vec q_{k-1}\rangle \propto  \delta(\vec q_{k} -f(\vec q_{k-1}))$ for some bijective function $f$, and it is sufficient to restrict the sum over paths to those paths for which $\vec q_k = f(\vec q_{k-1})$.

\color{black}

In the continuous-time scenario, one seeks to determine the transition amplitudes for a unitary that is generated by a Hamiltonian (possibly time-dependent) 
over some time interval.  
This is achieved by partitioning the time interval into a large number of small intervals and factorizing the unitary into a sequence of unitaries, one for each time step.  In the limit of small step size, it is well known that the functional over paths appearing in the path sum has the form
\begin{equation}
A(\gamma) = \N e^{i S[\gamma]},
\label{kkk}
\end{equation}
where $ S[\gamma]$ is the classical action\footnote{Here and elsewhere in the article, any action for a path on a continuous-variable configuration space will be assumed to be dimensionless when it appears in a path integral expression, that is, it will be expressed in units of $\hbar$ in that context.} of the path $\gamma$ and $\N$ is a complex number that is independent of the path.
Much of the success of the sum-over-paths methodology relies on the fact that only the {\em phase} and not the {\em magnitude} of the amplitude $A(\gamma)$ is path-dependent, and this in turn is a consequence of taking the limit of small step size.

This fact generally fails to hold in the circuit scenario.  If the gates of the circuit are black boxes, then the
most fine-grained sequence into which the overall unitary associated with the circuit can be factorized is one wherein each element of the sequence corresponds to a gate in the circuit.
But for an arbitrary gate, the associated unitary $ U_k$ 
has matrix elements $\avg{\vec q_k}{U_k}{ \vec q_{k-1}}$ for which the phase {\em and} the magnitude may be dependent on $\vec q_k$ and $\vec q_{k-1}$.  Therefore, in general, both the phase and magnitude of the amplitude $A(\gamma)$ may be dependent on the path $\gamma$\footnote{
 Functionals $A(\gamma)$ wherein both the phase {\em and the magnitude} are path-dependent (corresponding to a complex action functional) have seen applications in the sum-over-paths methodology for continuous-time dynamics (e.g.,\cite{Balian,Behtash}). Whether similar generalizations of the standard sum-over-paths methodology can be of use in the circuit scenario is an interesting question which we do not pursue here.}


Nonetheless,  the functional that appears in the path sum can have  a form analogous to Eq.~\eqref{kkk} {\em for specific types of quantum circuits}.
 This occurs when all of the gates appearing in the circuit have the property of being {\em balanced}.  
 The property of being balanced is defined relative to the orthogonal basis used to define the configuration space in the path sum.  It holds when every transition among basis elements that has a nonzero probability of occurence under the gate has the {\em same} probability of occurence. 
As an example, for a qubit wherein the configuration space in the path sum is the basis of eigenstates of the $Z$ Pauli operator, 
the gate associated to the $Z$ Pauli operator is balanced, as is every gate associated to a linear combination of the $X$ and $Y$ Pauli operators. 
The $Z$ Pauli leaves each basis element invariant, so that only the trivial transition has nonzero probability, while for the $X$ and $Y$ Paulis, each basis element is taken to an equal superposition of basis elements, so that the state transitions that have nonzero probability of occurence is the full set, and each occurs with equal probability. 
 
We now provide a precise definition of the balanced property. 
 Although each unitary in the sequence $\{\hat U_k\}_{k=1}^N$ may in general act on all $n$ systems, it is common to consider gate sets with gates that act on small subsets of the systems  (as will be the case in the Clifford circuits we study further on).  We therefore define the property of being balanced for a gate $\hat{U}$ where the number of inputs and outputs is $m$, which may differ from $n$.
 \black
 \begin{definition}
\label{defnbalancedgate}
Let $|\vec q\rangle $ denote the basis elements for the inputs of the gate and let $|\vec Q\rangle$ denote the basis elements for its outputs.  The gate $\hat{U}$ is said to be balanced if 
\begin{equation}\label{eq:balancedgate}
\forall \vec q,\vec Q\in \R^m \ (or\ (\Z_d)^m): \  \langle \vec Q| \hat{U} | \vec q\rangle = \mathfrak{N} e^{i S( \vec q,\vec Q)}\delta(g(\vec q,\vec Q))
\end{equation}
where $\mathfrak{N}$ is a complex constant\footnote{Because $\mathfrak{N}$ is complex, this decomposition is not unique.  In what follows, we will adopt the convention of excluding any constant terms from $S(\vec Q, \vec q)$ and putting the resulting factor into $\mathfrak{N}$.}, $S(\vec q,\vec Q)$ is a function of $\vec q$ and $\vec Q$ with values in the field $\R$, $g$ is a smooth map\footnote{Due to technical aspects of the theory of distributions, there are restrictions that one should place on this function so that the resulting distribution is well-defined. As these restrictions are satisfied by all of the functions that we explicitly consider in this paper, we shall ignore these conditions and direct the interested reader to Ref.~\cite{Hormander}~(Chapter 6).} $\R^m \times \R^m \to \R^m$  (or $(\Z_d)^m\times(\Z_d)^m \to (\Z_d)^m$ ) and $\delta$ is a Dirac-delta function \footnote{ Here we make the convention that when $g(\vec q, \vec Q)=0$ for all $(\vec q, \vec Q)$, $\delta(g(\vec q, \vec Q))$ is the constant distribution with value $1$. Indeed, one can think of the $\delta$ function as a map which takes functions $g(\vec q, \vec Q)$ to distributions on $(\vec q, \vec Q)$ and so we are free to adopt this convention.} on $\R^m$ (or a Kronecker delta function on $(\Z_d)^m$).  In other words, for the subset of values of $\vec q$ and $\vec Q$ where the amplitude $\langle \vec Q| \hat{U} | \vec q\rangle $ is nonzero---a subset that one can always specify through a condition of the form  $g(\vec q,\vec Q)=0$---this amplitude is equal in magnitude and differs only in phase. 
 \end{definition}

For a circuit composed entirely of balanced gates, the functional over discrete-time paths appearing in the path sum has the form
\begin{equation}
\label{functional4balancedgates}
\gamma \mapsto A(\gamma) = \mathfrak{N} e^{i S(\gamma)} \delta(g(\gamma))
\end{equation}
where $\mathfrak{N}$ is a complex number that is path-independent, $S(\gamma)$ is a real-valued function of  $\gamma$, and $\delta(g(\gamma))$ is a Dirac delta (or Kronecker delta) function that specifies the paths of nonzero amplitude.   \black
Specifically, if the unitary at time step $k$ is made up entirely of balanced gates, 
so that $\avg{\vec q_k}{ \hat{U}_k}{ \vec q_{k-1}}=\mathfrak{N}_k e^{i S_k(\vec q_{k-1}, \vec q_{k})} \delta(g_k(\vec q_{k-1},\vec q_{k}))$, then we have
\begin{equation}
\label{contribution}
\mathfrak{N} \equiv \prod_{k=1}^N \mathfrak{N}_k, \quad  S(\gamma) \equiv \sum_{k=1}^N S_k(\vec q_{k-1}, \vec q_{k}), \quad  \delta(g(\gamma)) \equiv \prod_{k=1}^N \delta(g_k(\vec q_{k-1} , \vec q_{k})).
\end{equation}
Denoting by $\P(\vec{q}_0, \vec{q}_N)$ the space of paths of nonzero amplitude that start at $\vec{q}_0$ and end at $\vec{q}_N$, one then has for continuous variables,
\begin{equation}
\label{PathSumExpression}
 \langle \vec{q}_N | \hat{U} | \vec{q}_0 \rangle = \mathfrak{N} \int_{\P(\vec{q}_0, \vec{q}_N)} e^{i S(\gamma)} \ \dee \gamma,
\end{equation}
where $\int_{\P(\vec{q}_0, \vec{q}_N)} (\cdot) \dee \gamma$ is integration over $\P(\vec{q}_0, \vec{q}_N)$ with respect to the measure induced by $\delta(g(\gamma))$\footnote{This measure is not simply the induced measure on $\P(\vec{q}_0, \vec{q}_N)$ as a subspace of $\P_0(\vec{q}_0, \vec{q}_N) = \R^{n(N+1)}$. It also takes into account the gradients of the functions $g_k$ at those points. For more details, see  Ref.~\cite{Hormander}~(Chapter 6).}. For discrete variables,
\begin{equation}
\label{PathSumExpression2}
 \langle \vec{q}_N | \hat{U} | \vec{q}_0 \rangle =  \mathfrak{N} \sum_{\P(\vec{q}_0, \vec{q}_N)} e^{i S(\gamma)}.
\end{equation}


The notion of balanced gates was introduced in the context of discrete systems by Dawson {\em et al.}\,\cite{Dawson}, who were also the first to consider the sum-over-paths methodology in the circuit scenario\footnote{The definition of a balanced gate used in \cite{Dawson}, that the non-zero matrix elements $\langle Q|\hat{U} |q\rangle$ all have the same {\em absolute value}, is equivalent to the one we have provided.}.  They noted that certain gate sets that are universal for quantum computation---such as the gate set consisting of only the Hadamard and Toffoli gates---are comprised entirely of balanced gates.  As such, circuits built from this gate set can be analyzed by a sum-over-paths methodology, and this was used to provide simple proofs of some known complexity results, for example that $\BQP\subseteq \PP$.
Bacon {\em et al.}\,\cite{AlgCircuits} extended their work by considering {\em algebraic circuits} defined by a gate set consisting of three phase-changing gates and a Fourier transform gate.  Because the elements of this gate set are also all balanced, it is possible to apply the sum-over-paths methodology to algebraic circuits as well. 



Certain well-studied families of circuits, known as Clifford circuits, also have gate sets comprised entirely of balanced gates\footnote{It turns out that for the quopit Clifford circuits and continuous-variable circuits that we consider, balancedness is a property not just of the Clifford gates, but also of any unitary that is implemented by a circuit composed of those gates. For a proof of the balancedness of unitary operations implemented by quopit Clifford gates, see \cite{GK}. The proof for CV Clifford gates is similar.}.
Clifford circuits were first introduced in the context of qubits \cite{Gottesman96}, but were subsequently generalized to continuous variable systems \cite{Barnes} and $d$-level systems for $d>2$ (qudits) \cite{GottesmanCliff, Hostens}.  Dawson {\em et al.}\@ noted that the balanced property held for qubit Clifford circuits.  It is not difficult to see that it holds for continuous variable (CV) and qudit Clifford circuits as well.

In all such circuits---indeed, any circuit consisting entirely of balanced gates---the sum-over-paths methodology provides an alternative way of computing transition amplitudes for the whole circuit from a knowledge of the transition amplitudes of each gate.

In this article, we are {\em not}, however, interested in the sum-over-paths approach for its use as an alternative technique of solving quantum dynamics for circuits, but rather for the novel perspective that it offers on the difference between quantum and classical theories of that circuit. 

In the case of continuous-time dynamics, the bridge between the classical and the quantum theory is made through the phase factor $S[\gamma]$ that is assigned to a path $\gamma$ in the path integral expression for the dynamics; it is simply the classical action of the path $\gamma$.  Specifically, in the case of $n$ systems described by continuous variables undergoing continuous-time dynamics over a time interval $[0,T]$,  a path $\gamma$ is specified, in the limit of small step-size, as a function $\vec q: [0,T] \to \R^n$,
and  the classical action of this path is the integral of the Lagrangian of the system along the path,
\begin{equation}
S[\gamma]\equiv \int_0^T  \mathcal{L} (\vec q(t), \dot{\bt q} (t) )\;\dee t.
\label{actionfunctionaldefn}
\end{equation}

We here address the following question:   for {\em discrete-time dynamics},
can the phase $S[\gamma]$ appearing in the path integral expression 
be understood as the {\em action functional} for a discrete-time path $\gamma$ in a classical counterpart of the quantum circuit?  The main result of our article is a demonstration that it {\em can} be so understood for certain kinds of circuits.  

\subsection{Summary of results}

Our demonstration that the phase of a discrete-time path can be understood in terms of an action functional necessitates solving two problems about discrete-time classical dynamics:
\begin{enumerate}
\item Defining an action functional for discrete-time classical dynamics, \label{step1}
\item Determining the discrete-time {\em classical} dynamics associated to a given type of discrete-time {\em quantum } dynamics. 
\end{enumerate}
Our solutions to these two problems will be outlined in the rest of the introduction and elaborated upon in the main text.  Once these two foundations are laid, it becomes straightforward to verify that for certain types of quantum circuits, the action functional for their classical counterparts is what determines the relative phases of paths in the sum-over-paths expression for the quantum dynamics.  

{\bf Defining an action functional for discrete-time dynamics.}
To understand what counts as a good definition of an action functional in discrete-time dynamics, we review the role that the action functional plays in continuous-time dynamics.  In fact, it plays two related roles:
\begin{itemize}
\item It determines the classical trajectories via a least-action principle; 
\item It generates the symplectomorphism associated with evolution over a time interval via its evaluation on the classical trajectories over that interval;
\end{itemize}
The first role is well-known, while the second role is worth reviewing.
Suppose that one is considering continuous-time dynamics defined by a Hamiltonian $\tt H(\bt q,\bt p)$ over a time interval $[0,T]$  and let $(\bt q_\mathrm{cl}(t),\bt p_\mathrm{cl}(t))$ be the solution to Hamilton's equations with initial value $(\bt q, \bt p)$. Then the dynamics over the given time interval is described by the symplectomorphism $\phi:(\bt q,\bt p) \mapsto (\bt Q,\bt P)$ by setting $\bt Q = \bt q_\mathrm{cl}(T)$ and $\bt P = \bt p_\mathrm{cl}(T)$. 
 Note that $\bt q_\mathrm{cl}(t)$ can also be characterized as the solution to the Euler-Lagrange equations with boundary conditions $\bt q_\mathrm{cl}(0)= \bt q$ and $\bt q_\mathrm{cl}(T)=\bt Q$ for the Lagrangian ${\cal L}(\bt q, \dot{\bt q})$ associated to $\tt H(\bt q, \bt p)$.
 Consider the function of $\bt q$ and $\bt Q$ that one obtains by evaluating the action functional on the classical trajectory through configuration space that begins at $\bt q$ and ends at $\bt Q$, denoted $\gamma_{\rm cl}$, 
\begin{equation}\label{generatingfunction}
\tt G_{\phi}(\bt q, \bt Q) \equiv S[\gamma_{\rm cl}] = \int_{0}^{T} {\cal L}(\bt q_{\mathrm{cl}}(t), \dot{\bt q}_\mathrm{cl}(t)) \ \dee t.
\end{equation}
 It is well-known (\cite{Arnold}, Chapter 9) that this function generates the symplectomorphism $\phi$ in the sense that
\begin{equation}
\label{genfunchar}
p^{(i)} = - \frac{\partial G_{\phi}(\vec q,\vec Q)}{\partial q^{(i)}} , \ P^{(i)} = \frac{\partial G_{\phi}(\vec q,\vec Q)}{\partial Q^{(i)}}.
\end{equation}
In other words, the generating function of a symplectomorphism induced by Hamiltonian dynamics is exactly the action functional evaluated on the classical trajectories. \color{black}

Now consider {\rm discrete-time} classical dynamics for a continuous-variable system, such as the classical interferometer described earlier.  Suppose that the nature of the dynamics is specified modularly, as a sequence of symplectomorphisms on the system's phase space. 
For both the case of a physical theory wherein time is fundamentally discrete and the case of circuit dynamics wherein the internal dynamics of each gate is unknown, there is no Lagrangian describing a continuous-time dynamics within a given time-step.  Nonetheless, the association between generating functions and actions described above provides a means of defining an action without reference to such a Lagrangian.  Specifically, if the symplectomorphism for the $k$th time-step is denoted $\phi_k$ and has a generating function $G_{\phi_k}(\bt q_{k-1}, \bt q_k)$, one can simply interpret the latter as the total action of the transition from $\bt q_{k-1}$ to $\bt q_k$ under $\phi_k$.  One can then define the action functional for the discrete-time path $\gamma = (\bt q_0,\ldots, \bt q_N)$ as the sum over time-steps of the actions associated to each transition,
\begin{equation}\label{CVactionfunctional}
S(\gamma) = S(\bt q_0,\ldots, \bt q_N)  = \sum_{k=1}^N \tt G_{\phi_k}(\bt q_{k-1}, \bt q_k).
\end{equation}
The reason it is appropriate to identify this as the discrete-time action functional is that, as we will prove in Proposition~\ref{actclasstraj}, it plays one of the roles that the continuous-time action functional plays, namely, that if one evaluates it on the classical discrete-time trajectories, one obtains the generating function for the symplectomorphism associated to the overall transition between the initial time and the final time. 

We will show that this result about discrete-time dynamics goes through not just for continuous variables but for discrete variables as well.  In the latter case, we imagine that a single system has a phase space with coordinates taking values in the ring $\Z_d$, so that the phase space of $n$ such systems is $(\Z_d)^{2n}$, and the discrete-time dynamics is specified as a sequence of symplectomorphisms on this phase space.  At first glance, it is not obvious that the connection between generating functions and actions that was described above can be leveraged to define a discrete-time action functional in this case.
The difficulty is that, as is evident from Eq.~\eqref{genfunchar}, generating functions are typically characterised in a way that makes explicit use of differential calculus, and for discrete variables, one does not have a notion of differentiation in the usual sense. 


To resolve this issue, we call upon the {\em algebra-geometry correspondence}. While this has many incarnations in various areas of mathematics, the underlying idea is that one can frequently establish a dictionary which translates geometric structures of a space into structures on its algebra of functions, and vice-versa. The particular instantiation we need is the duality between the geometry of so-called affine schemes and the algebra of commutative rings (\cite{Hartshorne}, Chapter II.2). 

Under this duality, the algebraic counterpart of an affine space over $\Z_d$ is an algebra of polynomials with coefficients in $\Z_d$, where the number of variables equals the dimension of the affine space. Explicitly, for the classical phase space of $n$ systems with coordinates $q^{(i)}, p^{(i)}$, each taking values in $\Z_d$, one has the correspondence
\begin{equation}
(\Z_d)^{2n} \leftrightsquigarrow \Z_d[q^{(1)},\ldots, q^{(n)}, p^{(1)}, \ldots, p^{(n)}] =: \Z_d[\vec q, \vec p].
\end{equation}
We exploit this correspondence by identifying what structure on the algebra of polynomials is the dual of a differential structure on the discrete space. Fortunately this question has long been answered by algebraic geometers. The structure goes by the name of the K\"ahler differential forms on the algebra $\Z_d[\vec q, \vec p]$ (\cite{Hartshorne}, Chapter II.8). 

 The algebra-geometry correspondence {\em forces} us to define generating functions for the dynamics of discrete systems not as polynomial functions over the reals,
 but rather as polynomial functions over $\Z_d$,
\begin{equation}
G(\vec{g}, \vec{Q}) \in \Z_d[\vec q, \vec Q].
\end{equation}
For example, for any symplectomorphism $\phi$ associated to a single system, that is, $\phi:\Z_d \times \Z_d \to \Z_d \times \Z_d$ and $( Q,P)=\phi[(q,p)]$, the generating function is a polynomial in two variables, $q$ and $Q$, with coefficients in $\Z_d$, 
\begin{equation}
G_{\phi}(q,Q): \Z_d \times \Z_d \to \Z_d.
\end{equation}
Similarly for symplectomorphisms on pairs of systems, $\phi:(\Z_d)^2 \times (\Z_d)^2 \to (\Z_d)^2 \times (\Z_d)^2$, the associated generating function is a polynomial with coefficients in $\Z_d$,
\begin{equation}
G_{\phi}(\vec q,\vec Q): (\Z_d)^2 \times (\Z_d)^2 \to \Z_d.
\end{equation}

It follows that if we define the total action of a given transition under a symplectomorphism by evaluating the associated generating function on the given initial and final configurations, then the action will be $\Z_d$-valued.  We can then define a $\Z_d$-valued action functional for a discrete-time path as the sum over time-steps of the action of the transition at each time-step, so that Eq.~\eqref{CVactionfunctional} applies, but is $\Z_d$-valued.  Just as in the continuous-variable case, if one evaluates this discrete-time action functional on the classical discrete-time trajectories, one obtains the generating function for the symplectomorphism associated to the overall dynamics of the whole circuit, as we show in Lemma~\ref{actclasstraj_discrete}.  It is in this sense that it is appropriate to understand the sum of the generating functions as an action functional, even for discrete variables.\footnote{Note that our proposal agrees with that of Baez and Gilliam \cite{Baez} concerning what sorts of mathematical object should represent a discrete-time and discrete-variable analogue of the action functional. }

We have shown that our discrete-time action functional plays the second role that its continuous-time counterpart plays, but what about the first role?  Does it also define classical discrete-time trajectories by a least-action principle?


A bit of reflection shows that the question is not well-posed without substantial revision to the statement of the principle. For discrete degrees of freedom in particular, the normal manner of thinking about it---in terms of an {\em extremization} of the action functional---does not work.  As we noted above, the algebra-geometry correspondence forces the action to be $\Z_d$-valued in this case, and $\Z_d$ is not an ordered set, so that there is no possibility of defining a principle of extremality in terms of it.  It may be possible, however, to recast the principle in terms of stationary points in the space of paths rather than in terms of extremization. To do so, one must translate to the discrete-variable context the normal story about how to express the variation of the action induced by a small variation in the path.  But it is not even clear what is the appropriate sort of variation of the path to consider.  
Presumably, if this question can be answered, then stationary points can be identified using a scheme wherein Kahler differential forms play the role of normal derivatives. 
In the case of continuous degrees of freedom, where the action functional is real-valued, the question is well-posed, but we do not have the answer.  Whether our proposal for a discrete-time action functional can recover classical paths by some principle of stationary action (for either discrete or continuous degrees of freedom) remains an open question.

Because it remains unclear whether it is possible to understand discrete-time classical dynamics in terms of a principle of stationary action, 
in this work we focus on demonstrating that our proposed functional over discrete-time paths plays the {\em second role} that the action functional over continous-time paths plays in classical dynamics, that of defining a generating function.  It is in this sense that our proposed functional does indeed merit the title of an action functional.

\color{black}

{\bf Determining the classical counterpart of quantum discrete-time dynamics.} For certain kinds of quantum circuits involving only balanced gates, we can identify the natural classical counterpart and its description in terms of a phase space and symplectomorphisms.  In particular, we do so for CV Clifford circuits and for qudit Clifford circuits where the dimension $d$ of the elementary systems is an odd prime. 



The CV case is the easiest to consider because there it is obvious which classical phase space to associate to a given quantum state space and
(at least for the CV Clifford gates) 
which symplectomorphisms to associate to given quantum unitaries.   
This is achieved using the Wigner representation, which defines a classical model of the quantum dynamics (specifically, a noncontextual hidden variable model) ~\cite{Bartlett2012}.
The Wigner representation associates a symplectomorphism to each of the elementary gates of a CV Clifford circuit, and from the latter we obtain a generating function.  In this way, we can identify the action functional for the overall circuit.  We then proceed to show (in Theorem~\ref{MainCV}) that this action functional yields the correct phases in the path sum expression for the quantum dynamics, that is, we show that it yields the functional $S(\gamma)$ of Eq.~\eqref{functional4balancedgates}.
 







To achieve an analogous interpretation of $S(\gamma)$ in Eq.~\eqref{functional4balancedgates} for discrete rather than continuous-variable systems in a circuit scenario, one must first of all determine what  symplectic space should be associated with a given discrete quantum system.  
This is not evident a priori because for discrete systems, such as the intrinsic spin degree of freedom, the quantum dynamics was not obtained historically by quantizing a classical theory of discrete variables.

However, it turns out that for certain qudit Clifford circuits, there is clarity about what is the natural classical counterpart.  These are the so-called  ``quopit'' Clifford circuits, where a quopit is a qudit where the dimension $d$ is an odd prime~\cite{Emerson}.  What is special about quopit Clifford circuits is that they are known to admit a noncontextual hidden variable model, and this model provides the classical counterpart of the quantum dynamics\footnote{This is not at odds with the fact that the full quantum theory fails to admit of such a model because Clifford circuits do not realize arbitrary unitaries.}.
For quopit Clifford circuits, the model is provided by a discrete analogue of the Wigner representation proposed by Gross, where the Clifford operations are represented as transformations of an affine space over the finite field $\Z_d$~\cite{Gross}.\footnote{In fact, Gross's Wigner representation is defined for all odd dimensions, not just odd prime. In that case $\Z_d$ is no longer a field, which significantly increases the difficulty of working with this representation. \black Nonetheless, the elementary gate set for composite dimensions has been worked out by Hostens \cite{Hostens}, but it is much more complicated and we do not consider it here. We believe that similar results as ours should hold for arbitrary odd dimensions.
}


The discrete Wigner representation associates to every elementary gate of a quopit Clifford circuit a symplectomorphism on the discrete phase space.\footnote{The symplectomorphisms arising from both the discrete and continuous variable Wigner representations are all {\em affine}, that is, a composition of a linear symplectic map and a phase-space translation. We nonetheless propose, in Sections \ref{genfunctions} and \ref{genfundisc}, the discrete-time analogues of action functionals for dynamics given by arbitrary (i.e., not necessarily affine) symplectomorphisms. As the formalism employed naturally accommodates this broader class of dynamics it adds no further mathematical technicalities while greatly expanding the range of applicability of our proposal.} 
Using the techniques described earlier, one can define generating functions in terms of these and consequently also an action functional. 
We show (in Theorem~\ref{MainDisc}) that this choice does indeed yield the correct expressions for the quantum dynamics, that is, we show that the functional $S(\gamma)$ of Eq.~\eqref{functional4balancedgates} can indeed be written as a sum of the generating functions of the symplectomorphisms associated to the gates. 


\black

The idea of looking at sums of generating functions as generalisations of the action functional is in part inspired by a little-known paper of Dirac \cite{Dirac} wherein he explores the possibility of a Lagrangian approach to quantum mechanics.  While Dirac was not successful at reformulating quantum mechanics, he did notice certain {\em formal similarities} between the generating functions of symplectomorphisms and the infinitesimal generators of unitary operations that implement a change of basis, and it was this work that ultimately inspired Feynman's formulation of the path integral (as Feynman notes in his Nobel lecture).  By exploring this connection in the discrete-time scenario, our results serve to clarify the precise role of generating functions in the sum-over-paths formulation of quantum theory.
\newline

\subsection{Significance for understanding the quantum-classical distinction}

 The physical relevance of whether or not the exponent in the path integral expression can be interepreted as a classical action functional is this: if it can be so interpreted, then one has built a bridge between a classical theory and its corresponding quantum theory for discrete-time dynamics.  
In particular, one obtains insights into two aspects of this bridge:
\begin{enumerate}
\item Schemes for quantization, that is, how to define the quantum counterpart of a given classical evolution,
\item The definition of intrinsically quantum behaviour, that is, determining when a given experiment fails to admit of a classical explanation.
\end{enumerate}
Although the first aspect concerns a map from classical to quantum, while the second concerns a map from quantum to classical (or the lack thereof), there remains an important distinction between them.
For the problem of quantization, one has a particular classical phase-space and Hamiltonian in mind. For the problem of determing whether an experiment admits of a classical explanation, on the other hand, one would like to be permissive about the nature of the classical phase-space and Hamiltonian that can appear in the explanation.

We discuss each aspect in turn.  

{\bf Quantization.} For continuous-time dynamics,  the sum-over-paths approach provides a means of making inferences {\em from} the Lagrangian description of the classical dynamics {\em to} its quantum dynamics. Indeed, it is this sort of problem that has driven the development of the vast technical machinery and wide-ranging applications of the path integral.  The sum-over-paths methodology arguably provides the most fruitful approach to the problem of quantization.  In particular, recall that, unlike canonical quantization, the path integral formulation of quantum theory can be applied to classical theories that have a Lagrangian formulation but no Hamiltonian formulation.  Consequently, the path integral approach has a broader scope.


In terms of applications, understanding the sum-over-paths methodology for the circuit scenario may provide a means to directly ``quantize'' certain classical circuits.  Studying the quantum generalization of various types of circuits, codes, and other tools from classical computer science has been a source of many innovations in the field of quantum computation.  The present work deviates from the traditional approach to such generalizations insofar as it requires a phase-space description of the classical circuit.  We hope that the novelty of this perspective may offer new insights. 
 

A sum-over-paths methodology for the circuit scenario can also address the problem of quantizing theories of physics wherein time is fundamentally discrete.  
Many have espoused the idea that discrete-time dynamics might be the correct basis for physics while the standard continuous-time dynamics might be merely a useful approximation thereto.  In the classical context, the idea has been pursued through the study of cellular automata.  The fact that cellular automata have a dynamical law that is similar to laws of physics in being both time-independent and spatially local, and the fact that many choices of this law yield dynamics having features that are strongly reminiscent of physics, including
a fundamental limit to the speed of propagation of influences, the possibility of evolving stable structures and complex structures, and computational universality, 
has motivated many to pursue a reconstruction of physical theories in terms of them~\cite{fredkin1990informational,Toffoli, Wolfram}.
 In the quantum context, the fact that it is in principle impossible to resolve spatial distances and times arbitrarily finely~\cite{ng1995limitation,amelino1994limits} and the idea that lengths and times, like other observables in quantum theory, ought to take values in a discrete spectrum~\cite{rovelli1995discreteness,Wallden} also motivate researchers to pursue formulations of physics wherein space and time are fundamentally discrete, implying discrete-time rather than continuous-time dynamics.  Some have also considered quantum cellular automata (see, e.g., \cite{watrous1995one,schumacher2004reversible}) as a basis for physics~\cite{d2017quantum}. 

Generally, researchers pursuing discrete-time dynamics as the basis of physics favour the assumption that the internal degrees of freedom of systems should be taken to be discrete.  Nonetheless, a discrete-time dynamics for {\em continuous} degrees of freedom is another consistent option. 
For instance, in the quantum context, one can consider scalar fields as the systems which evolve over discrete time,
and in the classical context, one can consider cellular automata where the internal state of each cell is a continuous variable \cite{Wolfram}. 

It is our hope that the results in this article might provide some insight into how to achieve quantization in such discrete-time classical theories. 
 
 For instance, if one constructs a cellular automaton wherein the state-space of a cell can be understood as a phase space and the update rule (the discrete-time dynamics) can be understood as a symplectomorphism, then our results provide a way of determining the quantized version of that cellular automaton.  In this sense, our work connects most with an approach to classical discrete-time dynamics termed ``discrete mechanical systems'', where, unlike standard cellular automata, one aims for a discrete generalization of Hamiltonian and Lagrangian descriptions of the dynamics~\cite{Baez}.  

 Note that if one starts from classical discrete-time theories wherein the dynamics is generated by a set of  symplectomorphisms, and one uses the sum-over-paths methodology outlined here to determine the quantum counterpart of this theory, then the discrete-time quantum theories that one obtains can always be formulated as a kind of quantum circuit.  Importantly, however, not every sort of quantum circuit that the Hilbert-space formalism of quantum theory permits us to define can arise in this fashion.  This method of quantization can only yield quantum circuits wherein each of the fundamental gates has the property of being balanced.  Our work therefore provides some reason for thinking that this restricted class of quantum circuits may be a better starting point for any research program that aims to build a quantum formulation of physics wherein space and time are fundamentally discrete. \color{black}

{\bf Intrinsically quantum behaviour.}   
Discussions of the distinction between quantum and classical dynamics often appeal to the path integral methodology: a given quantum dynamics is thought to admit of a classical model (hence to not be intrinsically quantum) if the typical action scale is large compared to Planck's constant, such that the amplitudes of paths which deviate from the action-extremizing path tend to cancel. It is not at all clear, however, how this notion of quantum-classical correspondence might be applied to a {\em discrete-time} evolution, such as arises in a quantum circuit. 
 
Meanwhile, the notions of {\em local causality}~\cite{Bel64a} and of {\em noncontextuality}~\cite{Bel66a,KS} are naturally suited to the question of whether a given quantum circuit admits of a classical explanation.
 In this approach, one considers the set of possible circuits that can be built up from the elementary gate set, and one inquires about the possibility of explaining the experimental data for each of these in terms of a locally causal model or in terms of a noncontextual model.  The notion of a noncontextual model of an experiment was generalized in Ref.~\cite{Spekkens05} and the generalization was shown to be equivalent to the existence of a nonnegative quasiprobability representation~\cite{Spekkens08,FerrieEmerson}, another popular notion of classicality.  
 Note that the notion of classicality that emerges in such work  does {\em not} involve the ratio of typical actions to Planck's constant. 

It is our view that we will not have understood the quantum-classical distinction until such time as we have a notion of intrinsically quantum behaviour that is independent of how one formulates quantum theory (whether it be with path integrals or Schr\"{o}dinger dynamics, for instance) and of whether one is considering continuous-time or discrete-time dynamics.  Insofar as our work contributes to understanding the quantum-classical distinction in the path integral formulation of discrete-time dynamics, it is a step on the path towards such a unified notion of intrinsic quantumness.


The idea that a quantum dynamics should be deemed effectively classical only if the associated sum over paths is dominated by a single path (the action-extremizing one) 
has recently been challenged by Kent~\cite{Kent}.  Our work provides further reason to be sceptical of this notion of classicality, independently of the reasons provided by Kent.  If one understands a quantum dynamics to admit of a classical explanation when it admits of a noncontextual model (or, equivalently, a nonnegative Wigner representation), then Clifford circuit quantum dynamics admit of a classical explanation.  And yet, as we will show here,  
the path-sum expression for such dynamics cannot be reduced to the contribution of a single path. 
Our work therefore provides a starting point for a re-assessment of what is the correct notion of the classical limit in the path integral formulation of quantum theory.

Finally, the question of which families of quantum circuits admit of a classical model has practical significance: it can help to identify the resources that are responsible for quantum-over-classical advantages for information processing. 
It has recently been shown that certain types of nonclassicality 
can constitute resources for cryptographic and computational tasks.  For instance, Bell-inequality violations are a resource for device-independent cryptography\cite{Barrett, Pironio}.  Furthermore, failing to admit of a noncontextual model (equivalently, failing to admit of a positive quasiprobability representation) has recently been implicated in quantum computational speed-up \cite{Anders, Raussendorf, HowardEtAl}.  
A broader perspective on nonclassicality in the circuit scenario promises more such insights. It may also help to determine whether a given computational architecture, such as the one implemented by D-wave, has intrinsically quantum features or not\cite{Albash}.   

\color{black}

\vspace{.5cm}
{\bf Outline.}
The outline of the paper is as follows.
The continuous variable and quopit Clifford circuits are considered separately in Sections \ref{CVClifford} and \ref{Clifford}, respectively.  
This is because, although the end results are quite similar, the mathematics involved is quite different.  After introducing these circuits, in Sections \ref{CVPathSum} and \ref{PathSum}, we explicitly describe the resulting sum-over-paths expressions for transition amplitudes in Theorems \ref{CVClPathSumFinal} and \ref{penult}. The remainder of each Section is devoted to showing that the functionals $S(\gamma)$ (of \eq{functional4balancedgates}) are the generating functionals of the corresponding symplectic representations. More specifically, for the continuous variable case, we introduce generating functions in Section \ref{genfunctions} and then, in Section \ref{CVGenAction}, we introduce the symplectic representation and prove Theorem \ref{MainCV}. For the quopit case, we introduce the symplectic representation via the discrete Wigner transform in Section \ref{Symprep}. Then, in Section \ref{genfundisc}, we introduce K\"ahler differentials and use them to define the corresponding generating functional. Finally, in Section \ref{DiscGenAction}, we prove Theorem \ref{MainDisc}. Section \ref{Conclusion} offers some concluding remarks.  As some of the mathematics used in Section \ref{Clifford} may not be familiar to some readers, we have included an Appendix  with some additional background information.

\vspace{.5cm}
\noindent \textbf{Notational Conventions.} 
Arrows ($\vec{x}$) indicate vector quantities and bracketed superscripts ($x^{(i)}$) their components. Hats ($\hat{x}$) indicate operators. Subscripts ($x_k$) will be used to index time steps.  $\Z_d$ denotes the ring of integers modulo $d$.  Complex conjugation will be represented by an overbar ($\bar{z}$).

\section{Continuous variable Clifford circuits} 
\label{CVClifford}
\subsection{Sum-over-paths expression for CV Clifford circuits}
\label{CVPathSum}
We turn our attention to applying the sum-over-paths methodology to a particular example of a family of quantum circuits for which every gate in the generating set is balanced: the subset of quantum circuits known as {\em continuous variable (CV)  Clifford} circuits. These have previously been studied as the appropriate generalization of qubit Clifford circuits for continuous variables \cite{Barnes}.
 In fact, it has been shown that such circuits can be efficiently simulated on a classical computer, an extension of the Gottesman-Knill Theorem from qubits to CV systems~\cite{Bartlett}. Our interest in CV Clifford circuits comes from a more foundational perspective, namely, that they can be described by a noncontextual hidden variable model \cite{Bartlett2012}, which provides the means by which we identify an action functional over the paths, as we shall see in Sec.~\ref{CVGenAction}.

The goal of this section is to determine, for an arbitrary CV Clifford circuit, a sum-over-paths expression for its transition amplitudes as in Eq.~\ref{PathSumExpression}.
We then show that the exponent of the phase factor associated to each allowed path can be understood as a discrete-time generalisation of the action functional. 
We introduce this notion in Section \ref{genfunctions} and then,  in Section \ref{CVGenAction}, we prove that it agrees with our calculation of the aforementioned phase factor.

An $n$-system CV Clifford circuit consists of preparations and measurements in the configuration basis of $L^2(\R^n)$ and an elementary gate set consisting of the following $1$-system and $2$-system gates:
\be{\label{CVGateSet}
\htt F &=& e^{- i \frac{\pi}{4} (\htt q^2 + \htt p^2) } \nn
\htt P(\eta) &=& e^{-i \frac{\eta}{2} \htt q^2}, \quad \eta \in \R \nn
\htt X(\tau)&=& e^{-i \tau \htt p}, \quad \tau \in \R \nn
\SUM &=& e^{-i \htt q^{(1)} \otimes \htt p^{(2)}},
}
along with $\htt F^\dag$ and $\SUM^\dag$ ($\htt P^{\dag}(\eta)$ and $\htt X^{\dag}(\tau)$ are already included as $\htt P(-\eta)$ and $\htt X(-\tau)$). 
$\htt F$ is called the Fourier gate and corresponds to evolution for unit duration under the Hamiltonian for a harmonic oscillator with mass $\frac{2}{\pi}$ and frequency $\frac{\pi}{2}$ (it is the analogue of the Hadamard gate in a qubit Clifford circuit).  It is intuitively understood as a rotation in phase space by $\pi/2$.  $\htt P(\eta)$ is called a phase gate (by analogy to the phase gate in a qubit Clifford circuit), and corresponds to a phase-space squeezing operation (via a position-dependent boost).  The $\htt X(\tau)$ gate (a generalization of the Pauli-$X$ gate in a qubit Clifford circuit) implements a translation of the configuration by $\tau$. Finally, $\SUM$ is called the sum gate (it is the analogue of the CNOT gate in a qubit Clifford circuit) and can be understood as a translation of the second system by an amount equal to the coordinate of the first.  The intuitive phase-space accounts  that we have just provided for these quantum gates will be shown,  in Sec.~\ref{CVGenAction},  to be an accurate description of the associated symplectomorphisms.

Bartlett {\em et al.} \cite{Bartlett} have shown that CV Clifford circuits can (up to a global phase factor) implement all and only those unitaries lying in the so-called $n$-system CV Clifford group, $\tt C_n$. To define $\tt C_n$, one must first introduce the $n$-system CV Pauli group ${\cal G}_n$. 
Denote the group of unitaries on $L^2(\R^n)$ by $U\left(L^2(\R^n)\right)$.  ${\cal G}_n$ is the subgroup of  $U\left(L^2(\R^n)\right)$ that is generated by 
$\{ \htt X_i(\tau), \htt Z_i(\sigma) : \tau,\sigma \in \R, i\in \{1,\dots,n\}\}$ 
where $\htt X_i(\tau)$ is the operator that translates system $i$ by $\tau$ 
and $\htt Z_i(\sigma) := e^{i\sigma \htt q_i}$ is the operator that boosts system $i$ by $\sigma$.
\begin{definition}
 The $n$-system CV Clifford group, $\tt C_n$, is defined to be the normaliser of the $n$-system CV Pauli group inside $U\left(L^2(\R^n)\right)$, that is, $\tt C_n := N\left({\cal G}_n\right)$.
\end{definition}

Note that $\tt C_n$ has $U(1)$ as a subgroup given by the operators $e^{i \phi} {\mathds 1}$. Then Bartlett {\em et al.} \cite{Bartlett} prove that the set
\begin{equation}
 \left\{ \htt F_i,\htt P(\eta)_i, \htt X(\tau)_i, \Sigma_{i,j} \ : \ \eta, \tau \in \R, i, j \in \{1,\ldots, n\} \right\}
\end{equation}
are a generating set for the group $\tt C_n / U(1)$.

Consider a given $n$-system CV Clifford circuit $\mf C$ implementing a unitary $\htt U \in \tt C_n$. To calculate the amplitudes for each path through the configuration space, we first need the matrix elements for the elementary gates.
\begin{lemma}
\label{CVMatElems}
 The matrix elements for the elementary CV Clifford gates are:
 \begin{itemize}
  \item $\langle \tt Q | \htt F | \tt q \rangle = \frac{1-i}{2 \sqrt{\pi}} e^{-i \tt Q \tt q}$;
    \item $\langle \tt Q | \htt F^\dag | \tt q \rangle = \frac{1+i}{2 \sqrt{\pi}} e^{i \tt Q \tt q}$;
  \item $\langle \tt Q | \htt P(\eta) | \tt q \rangle = e^{- i \frac{\eta}{2} \tt q^2} \delta\left(\tt Q - \tt q\right)$;
  \item $\langle \tt Q | \htt X(\tau) | \tt q \rangle = \delta\left(\tt Q - (\tt q + \tau)\right)$;
  \item $\langle \tt Q^{(1)}, \tt Q^{(2)} | \SUM |\tt q^{(1)}, \tt q^{(2)} \rangle = \delta\left(\tt Q^{(1)} - \tt q^{(1)}\right) \delta\left( \tt Q^{(2)} - (\tt q^{(2)} + \tt q^{(1)})\right)$;
  \item $\langle \tt Q^{(1)}, \tt Q^{(2)} | \SUM^\dag | \tt q^{(1)}, \tt q^{(2)} \rangle =\delta\left(\tt Q^{(1)} - \tt q^{(1)}\right) \delta\left( \tt Q^{(2)} - (\tt q^{(2)} - \tt q^{(1)})\right)$.
 \end{itemize}
 It follows that all of these gates are balanced.
\end{lemma}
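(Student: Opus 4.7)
The plan is to compute each matrix element directly from the defining operator expressions and then read off balancedness. Three of the five gates are multiplication operators or translations in the configuration basis, and so essentially trivial; the Fourier gate $\hat F$ is the only genuinely nontrivial case.

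First, I would dispatch $\hat P(\eta)$, $\hat X(\tau)$, and $\SUM$. Since $\hat P(\eta) = e^{-i\eta \htt q^2/2}$ is diagonal in the configuration basis, $\hat P(\eta)\ket q = e^{-i\eta q^2/2}\ket q$ and the matrix element follows. For $\hat X(\tau) = e^{-i\tau \htt p}$, the Baker--Campbell--Hausdorff identity (or, equivalently, the fact that $\hat p$ generates translations via $[\htt q,\htt p]=i$) gives $\hat X(\tau)\ket q = \ket{q+\tau}$, so $\bra Q \hat X(\tau) \ket q = \delta(Q-q-\tau)$. For $\SUM = e^{-i \htt q^{(1)}\otimes \htt p^{(2)}}$, the fact that the generator commutes with $\htt q^{(1)}$ and that $\htt q^{(1)}\otimes \htt p^{(2)}$ generates a translation of system $2$ by the eigenvalue of $\htt q^{(1)}$ yields $\SUM\ket{q^{(1)},q^{(2)}} = \ket{q^{(1)}, q^{(2)}+q^{(1)}}$, from which the matrix element is immediate. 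The matrix elements of $\SUM^\dagger$ are obtained from the inverse translation.

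For $\hat F = e^{-i\frac{\pi}{4}(\htt q^2 + \htt p^2)}$, I would recognize this as the propagator $e^{-it\hat H}$ of the unit-mass harmonic oscillator with $\hat H = (\htt q^2+\htt p^2)/2$ at time $t = \pi/2$. I would invoke the Mehler kernel
\begin{equation}
\bra Q e^{-it\hat H}\ket q = \frac{1}{\sqrt{2\pi i \sin t}}\,\exp\!\left(\frac{i}{2\sin t}\bigl[(Q^2+q^2)\cos t - 2Qq\bigr]\right),
\end{equation}
which at $t=\pi/2$ collapses to $\bra Q \hat F\ket q = \frac{1}{\sqrt{2\pi i}}\,e^{-iQq}$. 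Using $1/\sqrt{i} = e^{-i\pi/4} = (1-i)/\sqrt{2}$, this becomes $\frac{1-i}{2\sqrt{\pi}}e^{-iQq}$. The matrix element of $\hat F^\dagger$ then follows by complex conjugation and exchanging input/output labels. If one wished to avoid citing Mehler's formula, one could instead expand $\ket q$ in the Hermite-function basis $\{\ket{n}\}$ of eigenstates of $\hat H$ with eigenvalues $n+1/2$, use the parity $\hat F \ket n = e^{-i\pi (n+1/2)/2}\ket n$, and resum using the generating identity for Hermite polynomials; this requires more calculation but no additional tools.

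Finally, I would verify balancedness by inspection of Definition~\ref{defnbalancedgate}: each expression is of the form $\mathfrak N\, e^{iS(q,Q)}\,\delta(g(q,Q))$, where for $\hat F$ and $\hat F^\dagger$ one uses the convention that $\delta$ of a function identically zero is the constant distribution $1$, and for the other gates the exponential prefactor is either constant in $(q,Q)$ or the multiplicative phase $e^{-i\eta q^2/2}$. The main (and only) obstacle is the phase and normalization in the $\hat F$ case, which is handled by the Mehler kernel computation; everything else is routine bookkeeping from the action on position eigenstates.
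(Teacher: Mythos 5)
Your proposal is correct and follows essentially the same route as the paper: the diagonal and translation gates are read off directly from their action on configuration eigenstates, and the Fourier gate is identified with a harmonic-oscillator propagator whose known (Mehler) kernel at phase $\pi/2$ gives $\frac{1-i}{2\sqrt{\pi}}e^{-iQq}$, with $\hat F^\dagger$ and $\hat\Sigma^\dagger$ obtained by conjugation. The only cosmetic difference is that the paper evaluates $\hat X(\tau)$ and $\hat\Sigma$ by inserting momentum-basis resolutions of the identity and doing the resulting integrals explicitly, whereas you argue abstractly from the fact that $\hat p$ generates translations; both are standard and equivalent.
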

\begin{proof}
\begin{itemize}
 \item We use the fact that $\htt F$ corresponds to evolution for unit duration under the Hamiltonian for a harmonic oscillator with mass $\frac{2}{\pi}$ and frequency $\frac{\pi}{2}$, the matrix elements of which are well-known (e.g. problem 3-8 in \cite{FeynmanHibbs}), to infer that
\be{\langle \tt Q | \htt F | \tt q \rangle = \frac{1-i}{2 \sqrt{\pi}} e^{-i \tt Q \tt q}. }
It follows that
\be { \langle \tt Q | \htt F^\dag | \tt q \rangle &=& \overline{\langle \tt q | \htt F | \tt Q \rangle} \nn
 &=& \frac{1+i}{2 \sqrt{\pi}} e^{i \tt Q \tt q}.} 
 \item The matrix elements of $\htt P(\eta)$ are trivial to compute:
\be { \langle \tt Q | \htt P(\eta) | \tt q \rangle = e^{- i \frac{\eta}{2} \tt q^2} \delta\left(\tt Q - \tt q\right). }
\item For $\htt X(\tau)$, one has
\be { \langle \tt Q | \htt X(\tau) | \tt q \rangle &=& \int \dee \tt p  \ \langle \tt Q | e^{-i \tau \htt p} | \tt p \rangle \langle \tt p | \tt q \rangle \nn
&=& \frac{1}{2 \pi} \int \dee \tt p \ e^{i \tt p ( \tt Q - \tt q -\tau)} \nn
 &=& \delta\left(\tt Q - (\tt q + \tau)\right). }
 \item Finally, for $\SUM$ one has
\be{\langle \tt Q^{(1)}, \tt Q^{(2)} | \SUM |\tt q^{(1)}, \tt q^{(2)} \rangle &=& \int \dee \tt p \ \langle \tt Q^{(1)}, \tt Q^{(2)} | \SUM | \tt q^{(1)}, \tt p \rangle \langle \tt p | \tt q^{(2)} \rangle \nn
&=& \frac{1}{\sqrt{2 \pi}} \int \dee \tt p \ e^{i \tt p \left( - \tt q^{(2)} - \tt q^{(1)}\right)} \langle \tt Q^{(1)}, \tt Q^{(2)} | \tt q^{(1)}, \tt p \rangle \nn
&=& \frac{1}{2 \pi} \delta\left(\tt Q^{(1)} - \tt q^{(1)} \right) \int \dee \tt p \ e^{i \tt p \left( - \tt q^{(2)} - \tt q^{(1)} + \tt Q^{(2)}\right)} \nn
&=& \delta\left(\tt Q^{(1)} - \tt q^{(1)}\right) \delta\left( \tt Q^{(2)} - (\tt q^{(2)} + \tt q^{(1)})\right). }
It follows that
\be{ \langle \tt Q^{(1)}, \tt Q^{(2)} | \SUM^\dag | \tt q^{(1)}, \tt q^{(2)} \rangle &=& \overline{\langle \tt q^{(1)}, \tt q^{(2)} | \SUM | \tt Q^{(1)}, \tt Q^{(2)} \rangle} \nn
&=&  \delta\left(\tt Q^{(1)} - \tt q^{(1)}\right) \delta\left( \tt Q^{(2)} - (\tt q^{(2)} - \tt q^{(1)})\right).}
\end{itemize}
\end{proof}
For any system at any time-step where the circuit has no gate acting, we shall describe the gate as {\em identity} and denote it by $\mathds{1}$.  The identity gate is a special case of $\htt X(\tau)$ where $\tau=0$ and a special case of $\htt P(\eta)$ where $\eta =0$, so that we can infer from Lemma \ref{CVMatElems} that its contribution to the amplitude is simply $\delta\left(\tt Q - \tt q\right)$.  (Note that although we could have simply represented the identity gate by $\htt X(0)$ or $\htt P(0)$ in a description of the circuit, it is more straightforward to treat it distinctly).

For a CV Clifford circuit $\mf C$, the amplitude of any path $\gamma$ through configuration space is given, according to \eq{functional4balancedgates},  by 
$$A_{\mf C}(\gamma) = \mathfrak{N}_{\mf C} e^{i S_{\mf C}(\gamma)} \delta(g_{\mf C}(\gamma))$$
 where $\mathfrak{N}_{\mf C}$, $S_{\mf C}(\gamma) $ and $\delta(g_{\mf C}(\gamma))$ can be decomposed into contributions from each time-step in the manner described by \eq{contribution}.  It should be noted that gates acting on different systems during the same time-step contribute to the overall amplitude exactly as they would if they acted at consecutive time-steps, and hence their contributions also combine in the fashion described by \eq{contribution}.  
 Lemma \ref{CVMatElems} then allows us to express the contribution of each gate to the amplitude explicitly.  We see that each $\htt F$ gate introduces a path-independent complex factor of $\frac{1-i}{2 \sqrt{\pi}}$ to the normalization and each $\htt F^{\dag}$ gate a factor of $\frac{1+i}{2 \sqrt{\pi}}$.   Let $\tt q({\rm gate})$ denote the configuration at the input of a gate for path $\gamma$, while $\tt Q({\rm gate})$ denotes the configuration at its output for path $\gamma$. 
In terms of this notation, each $\htt F$ gate introduces a phase factor of $e^{-i \tt Q({\rm gate}) \tt q({\rm gate})}$, each $\htt F^{\dag}$ gate a phase factor of $e^{i \tt Q({\rm gate}) \tt q({\rm gate})}$, and each $\htt P(\eta)$ gate a phase factor of $e^{-i\frac{\eta}{2} \tt q(\rm gate)^2}$.   Finally, we get nontrivial constraints on the allowed paths from  the $\mathds{1}$, $\htt P(\eta)$, $\htt X(\tau)$, $\SUM$ and $\SUM^{\dag}$ gates.  

These results are summarized in the following theorem.  Here, $\sum_{\htt F  \;{\rm gates}}$ denote a sum over all $\htt F$ gates in $\mf C$, and similarly for any other sort of gate, and $\#(\htt F)$ ($\#(\htt F^{\dag})$) denotes the number of $\htt F$ gates ($\htt F^{\dag}$ gates).

\begin{theorem}
\label{CVClPathSumFinal}
For an $n$-system CV Clifford circuit $\mf C$ implementing the overall unitary $\htt U \in \tt C_n$, the transition amplitudes can be computed by the sum-over-paths expression
 \begin{equation}\label{amplitudesumpathsCV}
\langle \bt q_{N} | \htt U | \bt q_0 \rangle = 
\mathfrak{N}_{\mf C}  \int_{ \P_{\mf C}(\bt q_{\rm 0}, \bt q_N)} e^{i \tt S_{\mf C}(\gamma)} \ \dee \gamma,
\end{equation}
where 
$$
\mathfrak{N}_{\mf C} = \left(\frac{1-i}{2 \sqrt{\pi}}\right)^{\#(\hat{F})} \left(\frac{1+i}{2 \sqrt{\pi}}\right)^{\#(\hat{F}^\dag)},
$$
and
\begin{equation}
 \label{CVClActionFinal}
 \tt S_{\mf C}(\gamma) = 
 - \sum_{\htt F  \;{\rm gates}}  \tt q({\rm gate})\tt Q({\rm gate}) 
 + \sum_{\htt F^{\dag} \;{\rm gates}}  \tt q({\rm gate})\tt Q({\rm gate}) 
 - \sum_{\htt P(\eta)\; {\rm gates}} \frac{\eta}{2} \tt q(\rm gate)^2.
  \end{equation}
and where $\int_{\P_{\mf C}(\vec{q}_0, \vec{q}_N)} (\cdot) \dee \gamma$ denotes integration over $\vec{q}_1, \dots, \vec{q}_{N-1}$ subject to the following constraints
\begin{align}\label{allowedpathCV}
&\forall \;\; \mathds{1} , \htt P(\eta) \;{\rm gates} \;: \;\tt Q({\rm gate}) = \tt q({\rm gate}). \;\; \nonumber\\
&\forall \;\; \htt X(\tau) \;{\rm gates} \;: \;\tt Q({\rm gate}) = \tt q({\rm gate}) +\tau. \;\; \nonumber\\
&\forall \;\; \SUM \;{\rm gates} \;:\; \tt Q^{(1)}({\rm gate}) = \tt q^{(1)}({\rm gate}),\;\;\; Q^{(2)}({\rm gate}) = \tt q^{(1)}({\rm gate}) +\tt q^{(2)}({\rm gate}).\;\;\nonumber\\
&\forall \;\; \SUM^{\dag} \;{\rm gates} \;:\; \tt Q^{(1)}({\rm gate}) = \tt q^{(1)}({\rm gate}),\;\;\; Q^{(2)}({\rm gate}) = \tt q^{(2)}({\rm gate}) -\tt q^{(1)}({\rm gate}).
\end{align}  
 \end{theorem}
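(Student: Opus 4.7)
The plan is to derive the theorem as a direct application of the general path-sum expression in Eq.~\eqref{PathSumExpression} combined with the gate-by-gate matrix elements supplied by Lemma \ref{CVMatElems}. First, I would observe that the overall circuit unitary $\hat U$ can be written as an ordered product of elementary Clifford gates drawn from the generating set. Whenever two gates at the same nominal time-step act on disjoint tensor factors, their contributions to $\langle \vec q_k | \hat U_k | \vec q_{k-1}\rangle$ factorize, so that each gate can be treated as a separate ``atomic'' step. This means the structure of Eq.~\eqref{contribution}, which decomposes $\mathfrak{N}$, $S(\gamma)$, and $\delta(g(\gamma))$ into a product/sum over time-steps, extends unchanged to a product/sum over \emph{gates}.

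The second step is pure bookkeeping: for each elementary gate type, I read off from Lemma \ref{CVMatElems} its decomposition in the balanced form
$\mathfrak{N}_k e^{iS_k(\vec q_{k-1},\vec q_k)}\delta(g_k(\vec q_{k-1},\vec q_k))$.
The Fourier gates $\hat F$ and $\hat F^\dagger$ contribute the complex constants $\tfrac{1-i}{2\sqrt\pi}$ and $\tfrac{1+i}{2\sqrt\pi}$ to the normalization, together with the phases $\mp \tt q(\mathrm{gate})\tt Q(\mathrm{gate})$ and no configuration-space constraint (the integration over their intermediate configurations remains free). The phase gate $\hat P(\eta)$ contributes the phase $-\tfrac{\eta}{2}\tt q(\mathrm{gate})^2$ and the constraint $\tt Q=\tt q$. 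The gates $\hat X(\tau)$, $\SUM$, $\SUM^\dagger$, and the identity contribute no normalization factor and no phase, only delta-function constraints enforcing the affine maps listed in Eq.~\eqref{allowedpathCV}.

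The third step is to assemble these ingredients via Eq.~\eqref{contribution}. Taking the product of normalization factors gives $\mathfrak{N}_{\mf C}=(\tfrac{1-i}{2\sqrt\pi})^{\#(\hat F)}(\tfrac{1+i}{2\sqrt\pi})^{\#(\hat F^\dagger)}$, since only Fourier gates contribute nontrivially. Summing the phase contributions over all gates in the circuit produces exactly Eq.~\eqref{CVClActionFinal}. Finally, multiplying together all the delta functions and viewing them, as in the footnote after Eq.~\eqref{PathSumExpression}, as defining the measure on the restricted path space $\P_{\mf C}(\vec q_0,\vec q_N)$, yields the constraints \eqref{allowedpathCV} that carve out the allowed paths inside $\P_0(\vec q_0,\vec q_N)$.

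I do not anticipate any conceptual obstacle, since the theorem is essentially a packaging of Lemma \ref{CVMatElems} into the general framework of Eq.~\eqref{PathSumExpression}. The only point requiring a small amount of care is the treatment of the $\hat F$ and $\hat F^\dagger$ gates: because their matrix elements contain no delta function, the corresponding intermediate configurations are unconstrained and must be freely integrated over in $\int_{\P_{\mf C}(\vec q_0,\vec q_N)}(\cdot)\,\dee\gamma$. The remaining constraints from the other gates then cut out a lower-dimensional subspace of $\P_0(\vec q_0,\vec q_N)$ on which the integration effectively collapses via the deltas, consistent with the measure convention stated after Eq.~\eqref{PathSumExpression}.
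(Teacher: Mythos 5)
Your proposal is correct and takes essentially the same approach as the paper: the paper's own argument likewise treats each elementary gate as an atomic step (noting explicitly that gates acting in parallel on disjoint systems contribute exactly as if they acted at consecutive time-steps), reads off each gate's normalization factor, phase, and delta-function constraint from Lemma~\ref{CVMatElems}, and assembles them via Eq.~\eqref{contribution} into the form of Eq.~\eqref{functional4balancedgates}. Your added remark about the free integration over the outputs of $\htt F$ and $\htt F^{\dag}$ gates matches the paper's subsequent discussion of free configuration parameters.
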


 What is important for us is the functional form of  $\tt S_{\mf C}(\gamma)$ because we seek to show that it can be interpreted as a generalised action functional through the theory of generating functions.  Before doing so, however, we pause to present a scheme for implementing the constraint on the allowed paths, inspired by the one presented in Dawson {\em et al.}~\cite{Dawson}, and which proceeds by providing an explicit parameterization of the space of allowed paths.   

\red
\black

For the gates $\mathds{1}$, $\htt P(\eta)$, $\htt X(\tau)$, $\SUM$ and $\SUM^\dag$, one sees that for each configuration of the input(s), there is a unique configuration of the output(s) having non-zero amplitude. For the gates $\htt F$ and $\htt F^\dag$, on the other hand, for any configuration of the input, {\em every} configuration of the output has a non-zero amplitude.  For the latter gates, therefore, we must introduce a free parameter for the configuration at their output.


It follows that the number $L$ of parameters sufficient to describe the space of allowed paths is just the sum of the number of $\htt F$ gates and the number of $\htt F^\dag$ gates, $L\equiv \#(\htt F)+\#(\htt F^{\dag}).$
 We will call these the {\em free configuration parameters} and denote them by $\tt x_1, \dots,\tt x_L$, with the collection represented by the vector $\bt x \equiv (\tt  x_1, \dots, \tt x_L)$.  (Note that, unlike elsewhere in this article, the subscript in this notation does not indicate the time-step to which the configuration pertains; it is merely an index for the free parameters.)  It follows that every allowed path can be expressed as a function of these parameters, $\gamma(\bt x)$.  However, the path $\gamma(\bt x)$ may not be an allowed path for all choices of values for the $L$ parameters, and so the parameters are constrained to live in some subspace of $\R^L$.

To visualize the free configuration parameters and the constraints they satisfy, it is useful to annotate the circuit.  The general prescription, which we illustrate with a concrete example in Fig.~\ref{unlabelled}, is  as follows:
\begin{enumerate}
\item Label the configurations of the input systems by $\{\tt q_{0}^{(i)}\}_{i=1}^n$, and the configurations of the output systems by $\{\tt q_{\rm f}^{(i)}\}_{i=1}^n$.
\item For every system immediately following an $\htt F$ gate or an $\htt F^\dag$ gate, 
label the configuration of that system by $\tt x_l$, where $l $ ranges from 1 to $L$, the number of such gates in the circuit.
\item For every system immediately following a $\htt P(\eta)$ or $\htt X(\tau)$ gate, and every pair of systems immediately following a $\SUM$ or $\SUM^\dag$ gate, do not introduce a new label for the configurations of those systems, but rather specify, for each output of the gate, its functional dependence on the inputs of the gate (according to the functional relations determined in Lemma \ref{CVMatElems}).
\end{enumerate}


\begin{figure}[ht]
\resizebox{8cm}{!}{ \input{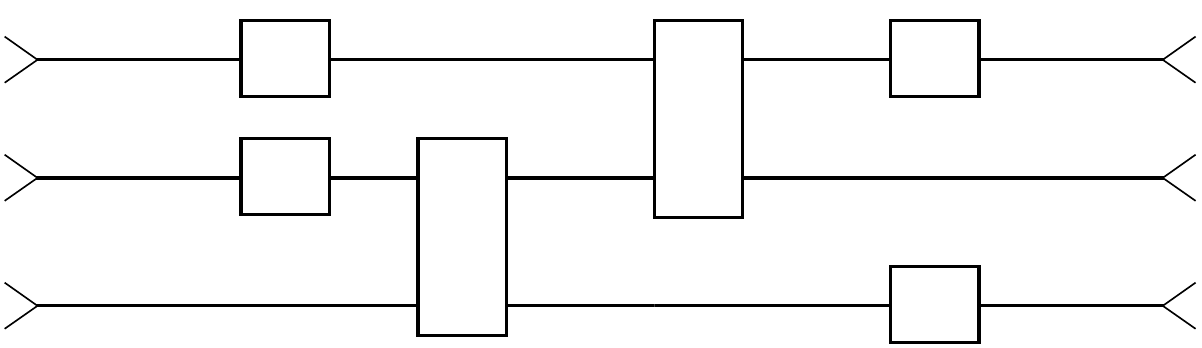_t} }
\caption{An example of a $3$-system CV Clifford circuit consisting of the following sequence of elementary gates. First, an $\htt F$ gate is implemented on the first system and a $\htt P(\eta)$ gate is implemented on the second system. Next, one has a $\SUM$ gate controlled on the second system and acting on the third system and then a $\SUM^\dag$ gate controlled on the first system and acting on the second system. Finally, one has an $\htt X(\tau)$ gate acting on the first system and an $\htt F^\dag$ gate acting on the third system.  Also indicated is the labelling of the configurations of the systems described in the text. \label{unlabelled}}
\end{figure}

Finally, constraints on the free configuarion parameters arise from the final boundary condition at the output of the circuit.
For every $i \in \{1,\dots, n\}$, define $\tt B^{(i)}(\bt x)$
to be the configuration of the $i$th system at the output of the circuit as a function of the free configuration parameters, $\bt x$.  The form of this function can depend on the configurations of the input systems, $\bt q_0$, which are given as initial conditions, as well as the $\tau$ and $\eta$ parameters of the $X(\tau)$ and $P(\eta)$ gates which are given by the specification of the circuit. In our example of Fig.~\ref{unlabelled}, for instance,
\begin{eqnarray}
\tt B^{(1)}(\bt x) &=& \tt x_1+\tau,\nonumber\\
\tt B^{(2)}(\bt x) &=& \tt q_0^{(2)} - \tt x_1,\nonumber\\
\tt B^{(3)}(\bt x) &=& \tt x_2.\nonumber\\
\end{eqnarray}

For a general circuit $\mf C$, the vector of free configuration parameters, $\vec x$, is constrained to the set $\mathbb{F}_{\mf C}(\bt q_0,\bt q_f) $, where 
\be{\label{CVClAllowedFinal} 
\mathbb{F}_{\mf C}(\bt q_0,\bt q_f) \equiv  \left\{ \bt x \ | \ \tt B^{(i)}(\bt x) = \tt q_{\rm f}^{(i)}, \ \forall i \in \{1,\dots, n\}\right\}.}

In general, each constraint equation on $\bt x$ defines an affine hyperplane in $\R^{L}$. As such, $\mathbb{F}_{\mf C}(\bt q_0,\bt q_f)$ describes the (possibly empty) intersection of these affine hyperplanes. 

For the example of Fig.~\ref{unlabelled}, for instance, the set is
\begin{equation}\label{allowedpathsexample}
 \{ \bt x \ | \ \tt x_1 + \tau = \tt q_{\rm f}^{(1)}, \ \tt q_0^{(2)} - \tt x_1 = \tt q_{\rm f}^{(2)}, \ \tt x_2  = \tt q_{\rm f}^{(3)} \}. 
 \end{equation}\black
  Note that in our example, the free configuration parameter $\tt x_2$ is fixed directly by the final boundary condition, so that one need not have introduced it.  Indeed, one can restrict the free configuration parameters to the systems that are at the output of {\em nonterminal} $\htt F$ and $\htt F^\dag$ gates (where nonterminal means not the last gate acting on a given system).  This does not, however, change the complexity of solving the constraints.

 Given this parameterization of the allowed paths, we can rewrite \eq{amplitudesumpathsCV} as
 \begin{equation}
\langle \bt q_{\rm f} | \htt U | \bt q_0 \rangle = 
\mathfrak{N}_{\mf C}  \int_{\mathbb{F}_{\mf C}(\bt q_0,\bt q_f)} e^{i \tt S_{\mf C}(\gamma(\bt x))} \ \dee^L x,
\end{equation}
where $\int_{\mathbb{F}_{\mf C}(\bt q_0,\bt q_f)} (\cdot) \dee^L x$ denotes integration over the subspace given as the intersection of affine hyperplanes within $\R^L$ that are picked out by the constraints on $x_1, \dots, x_L$ in the definition of $\mathbb{F}_{\mf C}(\bt q_0,\bt q_f)$.\black



Within this integral, the phase of an allowed path is specified as a function of the free parameters $\bt x$ by adapting the functional form of \eq{CVClActionFinal} to the labelling scheme described above.  For instance, in our example, the phase of the path detemined by free parameters $\bt x$ is 
$$S_{\mf C}(\gamma(\bt x))= -\tt q_0^{(1)} \tt x_1 + (\tt q_0^{(2)}+\tt q_0^{(3)})\tt x_2 - \frac{\eta}{2} (\tt q_0^{(2)})^2.$$\black


\subsection{The discrete-time analogue of the action functional for CV systems}
\label{genfunctions}
Let $\Omega^j(\R^{2n})$ denote the vector space \black of all $j$-forms on the phase space $\R^{2n}$, and let $\Omega(\R^{2n}) = \oplus_{j=0}^{2n} \Omega^j(\R^{2n})$ denote the algebra of {\em all} differential forms on $\R^{2n}$. 
Introducing canonical coordinates $({\bt q},{\bt p})$, the $2$-form $\omega \in \Omega^2(\R^{2n})$ defined by
\begin{equation}
\omega= \sum_{i=1}^n\dee \tt q^{(i)} \land \dee \tt p^{(i)}
\end{equation}
is a symplectic form because it is non-degenerate and $\dee \omega = 0$. A smooth function $\phi:\R^{2n} \to \R^{2n}$, $({\bt q},{\bt p}) \mapsto ({\bt Q}, {\bt P})$ is said to be a symplectomorphism if
\begin{equation}
\label{def:symp}
\sum_{i=1}^n\dee \tt q^{(i)} \land \dee \tt p^{(i)} = \sum_{i=1}^n \dee \tt Q^{(i)} \land \dee \tt P^{(i)}.
\end{equation}

There is a canonical $1$-form, 
\begin{equation}
\theta = \sum_{i=1}^n \tt p^{(i)} \dee \tt q^{(i)}.
\end{equation}
which satisfies $\omega = - \dee \theta$.
One can restate the condition for $\phi$ to be a symplectomorphism, \eq{def:symp}, in terms of this canonical $1$-form $\theta$: $\phi$ is a symplectomorphism if and only if there is a $\tilde{\tt G}({\bt q}, {\bt p}) \in C^\infty(\R^{2n})$ such that
\begin{equation}
\label{taut}
\sum_{i=1}^n \tt P^{(i)} \dee \tt Q^{(i)} - \sum_{i=1}^n \tt p^{(i)} \dee \tt q^{(i)} = \dee \tilde{\tt G}({\bt q},{\bt p}).
\end{equation}
We call such a $\tilde{\tt G}(\bt q,\bt p)$ a {\em generating function} associated to the symplectomorphism $\phi$ (\cite{Arnold}, Chapter 9). We note that the existence of a generating function for every symplectomorphism depends on the fact that every closed $1$-form on $\R^{2n}$ is exact. Note that generating functions are only unique up to addition of scalars.

If the ${\bt q}$ and ${\bt Q}$ variables can be taken to be independent, then one can express $\tilde{\tt G}({\bt q},{\bt p})$ purely in terms of ${\bt q}$ and ${\bt Q}$, i.e., ${\tt G}({\bt q},{\bt Q}):=\tilde{\tt G}({\bt q},{\bt p}({\bt q},{\bt Q}))$. It then follows from \eq{taut} that
\begin{equation}
\tt p^{(i)} = - \frac{\partial \tt G}{\partial \tt q^{(i)}}, \quad \tt P^{(i)} = \frac{\partial \tt G}{\partial \tt Q^{(i)}},
\label{genfndefn}
\end{equation}
which is the sense in which $\tt G({\bt q},{\bt Q})$ generates the symplectomorphism $\phi$. Because ${\bt q}$ and ${\bt Q}$ can indeed be taken to be independent  in all of the cases we will consider, whenever we refer to the generating function, we mean ${\tt G}({\bt q},{\bt Q})$.

Note that for a symplectomorphism $\phi:(\bt q,\bt p) \mapsto (\bt Q,\bt P)$ that results from a continuous-time Hamiltonian dynamics acting over a finite time interval, the generating function of that symplectomorphism, $G(q,Q)$, is simply the action of the classical trajectory over that time interval which has $q$ as the initial configuration and $Q$ as the final configuration, as in \eq{generatingfunction}.   


We now apply the proposal of Eq.~\eqref{CVactionfunctional} from the introduction, namely, that the analogue of the action functional for discrete-time dynamics is the sum of the generating functions associated to the symplectomorphisms that make up the discrete-time dynamics. 

Consider an $n$-system CV Clifford circuit with $N$ time-steps, so that the space of paths is $\R^{n(N+1)}$.  Let $\Phi = \{ \phi_k \}_{k=1}^N$ be the sequence of symplectomorphisms of $\rx^{2n}$ associated to the circuit, and denote the generating function associated to $\phi_k$ by\footnote{The fact that the $G_{\phi_k}(\vec{q}_{k-1},\vec{q}_{k})$ can be written as real polynomials over $\vec q_{k-1}$ and $\vec q_k$ will be demonstrated explicitly in Lemma~\ref{CVClElemGen}.}
\begin{equation}
G_{\phi_k}(\vec{q}_{k-1},\vec{q}_{k}) \in \R[\vec q_{k-1}, \vec q_k].
\end{equation}
The definition of an action functional for discrete-time paths of CV systems, proposed in  \eq{CVactionfunctional} of the introduction, specifies that the action functional for this circuit should be as follows.
\begin{definition}
\label{def:genfunctionalcont}
The action functional over paths in $\R^{n(N+1)}$ that is associated to the sequence of symplectomorphisms $\Phi = \{\phi_k\}_{k=1}^N$, denoted $\gamma \mapsto S_{\Phi}(\gamma)$,  is 
\begin{equation}\label{proposalSPhi}
\tt S_{\Phi} (\bt q_0,\ldots, \bt q_N) \equiv \sum_{k=1}^N \tt G_{\phi_k}(\bt q_{k-1},\bt q_k).
\end{equation}
\end{definition}

Note that $\tt S_{\Phi}$ defines a functional on the set of discrete-time paths through configuration space. Given any such path $\gamma = (\bt q_0,\ldots, \bt q_N)$, $\tt S_\Phi$ can be evaluated on $\gamma$ to obtain the real number $\tt S_\Phi(\gamma)$ which is the {\em total action} of the path $\gamma$ under the dynamics $\Phi$.

As discussed in the introduction, at Eq. \ref{generatingfunction}, for continuous-time dynamics,
the evaluation of the action functional on the classical trajectory yields the generating function of the overall symplectomorphism encoding the dynamics. As will now be demonstrated, the same property holds true for the discrete-time action functional of Definition \ref{def:genfunctionalcont}.

Given a path $\Gamma = (\bt q_0, \bt p_0 \ldots, \bt q_N, \bt p_N) \in (\R^{2n})^{N+1}$ through phase space one can extract a path $\Gamma^q = (\bt q_0,\ldots, \bt q_N) \in (\R^n)^{N+1}$ through configuration space.  For each choice of initial phase space value $\Gamma_0=(\bt q_0, \bt p_0)$, a sequence of symplectomorphisms $\Phi = \{\phi_k\}_{k=1}^N$ defines a canonical path through phase space 
\begin{equation}
 \Gamma^{\rm cl}_\Phi (\bt q_0, \bt p_0) = \left(\Gamma_0, \Phi_1(\Gamma_0), \Phi_2 (\Gamma_0), \ldots, \Phi_N (\Gamma_0) \right),
\end{equation}
where $\Phi_i = \phi_i \circ \cdots \circ \phi_1$ is the overall symplectomorphism implemented after the $i$-th time-step. We then define a {\em classical trajectory having initial configuration $\bt q_0$ and final configuration $\bt q_N$} to be a path through configuration space 
\begin{equation}
 \gamma^{\rm cl}_\Phi(\bt q_0, \bt q_N) = (\Gamma^{\rm cl}_\Phi)^q(\bt q_0, \bt p_0)
\end{equation}
for some choice of $\bt p_0$ such that the $q$-component of $\phi_N \phi_{N-1} \cdots \phi_1(\Gamma_0)$ equals $q_N$.

Note that for specific dynamics $\Phi$ and boundary values $\bt q_0$ and $\bt q_N$, $\gamma_{\rm cl}^\Phi(\bt q_0, \bt q_N)$ may not exist, and when it does exist it is not necessarily unique.\footnote{Examples are found in the next section: 
for the single symplectomorphism $\phi_{\hat{F}}$, for a given choice of initial and final boundary conditions on the configuration space, there are many choices of the initial momentum that are consistent with the symplectomorphism, and for  the single symplectomorphism $\phi_{\hat{\Sigma}}$, it is straightforward to find examples of initial and final boundary conditions on the configuration space which are inconsistent with this symplectomorphism.}


\begin{proposition}
\label{actclasstraj}
For a sequence of symplectomorphisms $\Phi = \{\phi_k\}_{k=1}^N$, let a classical discrete-time trajectory through a continuous-variable configuration space $\R^{n}$ 
having initial configuration $\bt q_0$ and final configuration $\bt q_N$ be denoted $\gamma^{\rm cl}_\Phi(\bt q_0, \bt q_N)$.  The action functional of Definition~\ref{def:genfunctionalcont}, evaluated on $\gamma^{\rm cl}_\Phi(\bt q_0, \bt q_N)$, yields  exactly the generating function of the overall symplectomorphism $\Phi_N$;  $S_\Phi(\gamma^{\rm cl}_\Phi(\bt q_0, \bt q_N)) = G_{\Phi_N}(\bt q_0, \bt q_N)$.
\end{proposition}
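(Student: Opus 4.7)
The plan is to argue by induction on the number of time-steps $N$. The base case $N=1$ is immediate: $\Phi_1 = \phi_1$ and $\gamma^{\rm cl}_\Phi(\bt q_0, \bt q_1) = (\bt q_0, \bt q_1)$, so Definition~\ref{def:genfunctionalcont} gives $S_\Phi(\gamma^{\rm cl}) = G_{\phi_1}(\bt q_0, \bt q_1) = G_{\Phi_1}(\bt q_0, \bt q_1)$. For the inductive step, observe that the truncation of $\gamma^{\rm cl}_\Phi(\bt q_0, \bt q_N)$ to its first $N$ configurations is itself a classical trajectory for the dynamics $\{\phi_k\}_{k=1}^{N-1}$ with boundary values $(\bt q_0, \bt q_{N-1})$. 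Applying the inductive hypothesis and splitting off the last time-step reduces the claim to the single identity
\begin{equation}
G_{\Phi_{N-1}}(\bt q_0, \bt q_{N-1}) + G_{\phi_N}(\bt q_{N-1}, \bt q_N) \;=\; G_{\Phi_N}(\bt q_0, \bt q_N)
\label{eq:compositionID}
\end{equation}
modulo the additive constant that generating functions are always defined up to (which is absorbed into the path-independent factor $\mathfrak{N}$).

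The key observation is that the intermediate configuration $\bt q_{N-1}$ on the classical trajectory is fixed by the momentum-matching condition: the output momentum of $\Phi_{N-1}$ must coincide with the input momentum of $\phi_N$. Using Eq.~\eqref{genfndefn} applied to each of the two generating functions on the left of \eqref{eq:compositionID}, this condition reads
\begin{equation}
\frac{\partial G_{\Phi_{N-1}}(\bt q_0, \bt q_{N-1})}{\partial q_{N-1}^{(i)}} + \frac{\partial G_{\phi_N}(\bt q_{N-1}, \bt q_N)}{\partial q_{N-1}^{(i)}} = 0,
\end{equation}
which is precisely the stationarity condition, in $\bt q_{N-1}$, of the sum on the left of \eqref{eq:compositionID} with $\bt q_0$ and $\bt q_N$ held fixed. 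This lets us define implicitly $\bt q_{N-1}^{\star}(\bt q_0, \bt q_N)$ as the stationary point, and then introduce
\begin{equation}
H(\bt q_0, \bt q_N) \;\equiv\; G_{\Phi_{N-1}}(\bt q_0, \bt q_{N-1}^\star) + G_{\phi_N}(\bt q_{N-1}^\star, \bt q_N).
\end{equation}

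To conclude, I would apply the chain rule (an envelope-theorem argument): differentiating $H$ with respect to $\bt q_0$ or $\bt q_N$ produces a direct term plus a term proportional to $\partial\bt q_{N-1}^\star/\partial \bt q_0$ (respectively $\partial\bt q_{N-1}^\star/\partial \bt q_N$), and the latter vanishes by stationarity. One is left with $\partial H/\partial q_0^{(i)} = -p_0^{(i)}$ and $\partial H/\partial q_N^{(i)} = P_N^{(i)}$, which are exactly the characterising equations \eqref{genfndefn} for $G_{\Phi_N}(\bt q_0, \bt q_N)$. Hence $H$ and $G_{\Phi_N}$ differ only by a constant, establishing \eqref{eq:compositionID} and completing the induction.

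The main obstacle I anticipate is purely technical rather than conceptual: one must justify that $\bt q_{N-1}^\star$ is indeed well-defined as a function of $(\bt q_0, \bt q_N)$, which requires nondegeneracy conditions on the Hessian of the sum (equivalently, the transversality conditions implicit in being able to use $(\bt q, \bt Q)$ as independent coordinates on the graph of the symplectomorphism, as assumed around Eq.~\eqref{genfndefn}). A conceptually cleaner but logically equivalent route is to work directly with the tautological 1-form: on the classical trajectory one has $\bt P_k = \bt p_k$, so summing the defining identity \eqref{taut} for each $\phi_k$ telescopes to $\sum_i p_N^{(i)} dq_N^{(i)} - \sum_i p_0^{(i)} dq_0^{(i)} = d\!\sum_{k} \tilde G_{\phi_k}$, identifying $\sum_k \tilde G_{\phi_k}$ as a generating function of $\Phi_N$. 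This route bypasses the envelope argument but requires care in passing between the $(\bt q,\bt p)$ and $(\bt q,\bt Q)$ presentations of each $G_{\phi_k}$.
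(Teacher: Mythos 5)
Your proposal is correct, but your primary route is genuinely different from the paper's. The paper proves Proposition~\ref{actclasstraj} in one stroke via the telescoping identity you only sketch at the very end: it works throughout with the phase-space presentation $\tilde G_{\phi_k}(\vec q_{k-1},\vec p_{k-1})$, sums the defining relation \eqref{taut} over $k$ so that the intermediate boundary terms cancel along the classical phase-space path, and obtains $\mathrm{d}\tilde G_{\Phi_N} = \mathrm{d}\tilde S_\Phi(\Gamma^{\rm cl}_\Phi)$ directly, converting to the $(\vec q_0,\vec q_N)$ presentation only at the last step. Your induction-plus-envelope argument instead establishes the composition law $G_{\Phi_{N-1}}(\vec q_0,\vec q_{N-1}^\star)+G_{\phi_N}(\vec q_{N-1}^\star,\vec q_N)=G_{\Phi_N}(\vec q_0,\vec q_N)$ by identifying the classical intermediate configuration as a stationary point of the two-term sum and invoking \eqref{genfndefn}. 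That route is valid and has the conceptual payoff of exhibiting the intermediate configurations of a classical trajectory as stationary points of the discrete action --- which speaks directly to the "stationary action principle" question the authors leave open in the introduction --- but it is logically more demanding: it requires, at \emph{every} intermediate stage $k$, that $(\vec q_0,\vec q_k)$ be usable as independent coordinates for the partial composition $\Phi_k$ (note that this already fails for a single $\hat X(\tau)$ or $\hat\Sigma$ gate, whose symplectomorphisms fix $\vec Q$ as a function of $\vec q$ alone), plus the Hessian nondegeneracy you flag to define $\vec q_{N-1}^\star$ implicitly. The paper's telescoping argument sidesteps all of this by never leaving the $(\vec q,\vec p)$ presentation until the end, which is why it goes through for the degenerate gates in the Clifford generating set; if you retain your induction as the main argument, you should state these transversality hypotheses explicitly or fall back on the 1-form route, which is both weaker in its assumptions and is, in substance, the proof the authors give.
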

\begin{proof}

For simplicity of presentation, we will work with the action functional defined on discrete-time paths through {\em phase space} rather than through configuration space. This results in no loss of generality as this is in fact the most natural domain of definition (eq. \ref{taut}). The action functional for paths through phase space is given by
\begin{equation}
 \tt \tilde{S}_\Phi (\bt q_0, \bt p_0, \ldots, \bt q_{N-1}, \bt p_{N-1}) \equiv \sum_{k=1}^N \tt \tilde{G}_{\phi_k}(\bt q_{k-1}, \bt p_{k-1}).
\end{equation}
In a similar manner to how $\tt S_\Phi$ can be evaluated for paths through configuration space, $\tt \tilde{S}_\Phi$ can be evaluated on paths $\Gamma = (\bt q_0, \bt p_0, \ldots, \bt q_N, \bt p_N)$ through phase space to obtain a real number $\tt \tilde{S}_\Phi(\Gamma)$. Let $q$ and $p$ superscripts denote, respectively, the position and momentum components of a phase-space object, so that  $\Gamma^q$ is the path through configuration space defined by the path $\Gamma$ through phase space.  Note that $\tilde{S}_\Phi(\Gamma) = S_\Phi(\Gamma^q)$ since for any symplectomorphism $\phi:(\bt q, \bt p) \mapsto (\bt Q, \bt P)$ for which $q$ and $Q$ can be taken to be independent, $\tilde{G}_\phi(\bt q, \bt p) = G_\phi(\bt q, \bt Q(\bt q, \bt p))$.

 Making use of the shorthand $\bt x \dee \bt y \equiv \sum_{i=1}^n x^{(i)} \dee y^{(i)}$, one has the following
\begin{eqnarray*}
& & \dee \tilde{G}_{\Phi_{N}} (\bt q_0, \bt p_0) \equiv \bt \Phi_{N}(\Gamma_0)^p \dee \bt \Phi_{N}(\Gamma_0)^q - \bt p_0 \dee \bt q_0 \\
 &=& \left( \bt \Phi_{N}(\Gamma_0)^p \dee \bt \Phi_{N}(\Gamma_0)^q - \bt \Phi_{N-1}(\Gamma_0)^p \dee \bt \Phi_{N-1}(\Gamma_0)^q\right) + \cdots + \left( \bt \Phi_{1}(\Gamma_0)^p \dee \bt \Phi_{1}(\Gamma_0)^q - \bt p_0 \dee \bt q_0\right) \\
 &=& \dee \tilde{G}_{\phi_{N}}(\Phi_{N-1}(\Gamma_0)) + \dee \tilde{G}_{\phi_{N-1}}(\Phi_{N-2}(\Gamma_0)) + \cdots + \dee \tilde{G}_{\phi_1}(\Gamma_0) \\
 &\equiv& \dee \tilde{S}_\Phi\left(\Gamma^{\rm cl}_\Phi (\bt q_0, \bt p_0)\right).
\end{eqnarray*}
By assumption one has that $\tilde{G}_{\Phi_N}(\bt q_0, \bt p_0) = G_{\Phi_N}(\bt q_0, \bt q_N)$ and $\tilde{S}_\Phi\left(\Gamma^{\rm cl}_\Phi (\bt q_0, \bt p_0)\right) = S_\Phi(\gamma^{\rm cl}_\Phi(\bt q_0, \bt q_N))$, and so the above calculation proves that $G_{\Phi_N}(\bt q_0, \bt q_N)$ and $S_\Phi(\gamma^{\rm cl}_\Phi(\bt q_0, \bt q_N))$ differ by the addition of a constant. Since the generating function is itself only defined up to this same ambiguity the claimed result follows.
\end{proof}

We now turn to determining the explicit form of the generating functions associated to  each of the gates in the generating set of the CV Clifford group.


\subsection{Symplectomorphisms and generating functions for CV Clifford gates}
\label{CVGenAction}
To see that our proposal for the action functional of a path, \eq{proposalSPhi}, does indeed yield the functional that appears in the path-sum expression, \eq{CVClActionFinal}, we must specify the classical counterpart of each gate in the generating set of the CV Clifford group, that is, both the classical symplectomorphism associated to it and the generating function of this symplectomorphism. 
For an $n$-system CV Clifford circuit, it is natural to take $\R^{2n}$ with basis $(\tt q^{(1)}, \ldots, \tt q^{(n)}, \tt p^{(1)}, \ldots, \tt p^{(n)})$ and the usual symplectic form as the underlying symplectic manifold.  Looking at the expressions for the unitaries associated to the elementary gate set for CV Clifford circuits, \eq{CVGateSet}, one sees that a Hamiltonian operator can be associated to each gate. 
Each such Hamiltonian operator defines a classical Hamiltonian function on $\rx^2$ (or $\rx^4$ for the $\SUM$ gate) in the usual way, and thus a symplectomorphism on $\rx^2$ (or $\rx^4$).  (Note that one can equally well deduce the symplectomorphism associated to each CV Clifford gate by determining its Wigner representation.)  

\begin{lemma}
\label{CVClElemSymp}
  The elementary CV Clifford gates of \eq{CVGateSet} are associated to the following symplectomorphisms:
 \begin{itemize}
  \item $\phi_{\htt F}: (\tt q, \tt p) \mapsto(\tt p, - \tt q)$;
  \item $\phi_{\htt F^\dag}:( \tt q, \tt p) \mapsto (-\tt p, \tt q)$;
  \item $\phi_{\htt P(\eta)}: ( \tt q, \tt p) \mapsto (\tt q, \tt p - \eta \tt q)$;
  \item $\phi_{\htt X(\tau)}: (\tt q, \tt p) \mapsto (\tt q +\tau, \tt p)$;
  \item $\phi_{\SUM}:(\tt q^{(1)}, \tt q^{(2)}, \tt p^{(1)}, \tt p^{(2)}) \mapsto (\tt q^{(1)}, \tt q^{(2)} + \tt q^{(1)}, \tt p^{(1)} - \tt p^{(2)}, \tt p^{(2)})$;
  \item $\phi_{\SUM^\dag}:(\tt q^{(1)}, \tt q^{(2)}, \tt p^{(1)}, \tt p^{(2)}) \mapsto (\tt q^{(1)}, \tt q^{(2)} - \tt q^{(1)}, \tt p^{(1)} + \tt p^{(2)}, \tt p^{(2)})$.
 \end{itemize}
\end{lemma}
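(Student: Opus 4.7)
The plan is to exploit the fact that each gate in the generating set \eq{CVGateSet} has the form $\htt U_k = e^{-i \htt H_k}$ with $\htt H_k$ a polynomial of degree at most two in the canonical operators $\htt q^{(i)}$ and $\htt p^{(i)}$. Replacing each operator by its classical phase-space coordinate produces a classical Hamiltonian $H_k$; no ordering ambiguity arises because the only product of distinct factors appearing is $\htt q^{(1)} \otimes \htt p^{(2)}$, whose factors commute. The associated symplectomorphism on $\R^{2n}$ is then the time-$1$ map of the classical Hamiltonian flow generated by $H_k$. This identification is justified because for Hamiltonians at most quadratic in the canonical variables, the Heisenberg-picture evolution of $\htt q^{(i)}$ and $\htt p^{(i)}$ coincides with the classical flow; equivalently one could read off the same maps from the Wigner representation of each Clifford gate, as noted parenthetically in the statement.

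I would then carry out the six flow calculations in turn. For $\htt F$, the Hamiltonian $H = \tfrac{\pi}{4}(q^2 + p^2)$ describes a harmonic oscillator of angular frequency $\pi/2$, whose time-$1$ flow is rotation by $\pi/2$ in the $(q,p)$-plane, yielding $(q,p)\mapsto (p,-q)$; the adjoint reverses the rotation. For $\htt P(\eta)$, $H = \tfrac{\eta}{2}q^2$ gives $\dot q = 0$ and $\dot p = -\eta q$, integrating to $(q,\,p-\eta q)$. For $\htt X(\tau)$, $H = \tau p$ yields pure translation in $q$ by $\tau$. For $\SUM$, $H = q^{(1)} p^{(2)}$ produces the linear system $\dot q^{(1)} = 0$, $\dot q^{(2)} = q^{(1)}$, $\dot p^{(1)} = -p^{(2)}$, $\dot p^{(2)} = 0$, whose time-$1$ map is precisely the stated symplectomorphism; $\SUM^{\dag}$ is its inverse. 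Since each of the six resulting maps is affine, one may additionally verify preservation of $\omega = \sum_i \dee q^{(i)} \land \dee p^{(i)}$ by a one-line Jacobian computation as a sanity check, beyond the automatic symplecticity of Hamiltonian flows.

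There is no substantive obstacle in this proof: its content is entirely computational and rests on the standard observation that at-most-quadratic Hamiltonians generate affine classical flows that agree with the quantum Heisenberg evolution of the canonical coordinates. The only point requiring a small amount of care is bookkeeping the numerical factors for $\htt F$: the coefficient $\pi/4$ in the Hamiltonian, combined with the factor of $2$ arising in Hamilton's equations, produces the quarter-turn at unit time. These conventions are already fixed by the earlier remark in the paper that $\htt F$ corresponds to evolution for unit duration under the harmonic oscillator with mass $2/\pi$ and frequency $\pi/2$, so the calculation is consistent with the matrix element given in Lemma \ref{CVMatElems}.
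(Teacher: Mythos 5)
Your proposal is correct and follows essentially the same route as the paper: the paper likewise reads off the classical Hamiltonian for each gate, solves Hamilton's equations, and evaluates the flow at the appropriate time (unit time for $\htt F$ and $\SUM$, duration $\eta$ or $\tau$ for $\htt P$ and $\htt X$, which is equivalent to your rescaled-Hamiltonian convention). The additional remarks on Heisenberg evolution of quadratic Hamiltonians and the Jacobian sanity check are fine but not needed.
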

These will be called the elementary CV Clifford symplectomorphisms.
\begin{proof}
\begin{itemize}
\item The $\htt F$ gate corresponds to evolution under the Hamiltonian $H_{\htt F} (\tt q, \tt p) =  \frac{\pi}{4}(\tt p^2 + \tt q^2)$ for a time interval of unit duration.  
Solving Hamilton's equation for this Hamiltonian yields,
\begin{equation}
 \tt q(t) = \tt q(0) \cos\left( \frac{\pi t}{2} \right) + \tt p(0) \sin\left( \frac{\pi t}{2} \right), \quad \tt p(t) = \tt p(0) \cos\left( \frac{\pi t}{2} \right) - \tt q(0)\sin\left( \frac{\pi t}{2} \right),
\end{equation}
and substituting $t=1$ yields
\be{\phi_{\htt F}: (\tt q, \tt p) \mapsto(\tt p, - \tt q).}
Similarly, the $\htt F^{\dag}$ gate corresponds to evolution under the negative of this Hamiltonian for a unit time interval, and so
\be{\phi_{\htt F^\dag}:( \tt q, \tt p) \mapsto (-\tt p, \tt q).}
\item The gate $\htt P(\eta)$ corresponds to evolution under the Hamiltonian $H_{\htt P}(\tt q, \tt p) =  \frac{1}{2} \tt q^2$ for a time interval of duration $\eta$.
Solving Hamilton's equations for this Hamiltonian yields,
\begin{equation}
\tt q(t) = \tt q(0), \quad \tt p(t) =  -\tt q(0) t + \tt p(0),
\end{equation}
and substituting $t=\eta$ gives
\be{\phi_{\htt P(\eta)}: ( \tt q, \tt p) \mapsto (\tt q, \tt p - \eta \tt q).}
\item The gate $\htt X(\tau)$ corresponds to evolution under the Hamiltonian $H_{\htt X}(\tt q, \tt p) = \tt p$ for a time interval of duration $\tau$.
Solving Hamilton's equations for this equation yields,
\begin{equation} 
\tt q(t) = t + \tt q(0), \quad \tt p(t) = \tt p(0),
 \end{equation}
and substituting $t=\tau$ gives
\be{\phi_{\htt X(\tau)}: (\tt q, \tt p) \mapsto (\tt q +\tau, \tt p).}
\item Finally, the $\SUM$ gate corresponds to evolution under the Hamiltonian $H_{\SUM} (\bt q, \bt p) =  \tt q^{(1)} \tt p^{(2)}$ for a time interval of unit duration.
Again, solving Hamilton's equations yields,
\begin{equation}
\begin{array}{cc}
\tt q^{(1)}(t) = \tt q^{(1)}(0) & \tt q^{(2)}(t) = \tt q^{(1)}(0) t + \tt q^{(2)} (0) \\
\tt p^{(1)}(t) =  -\tt p^{(2)}(0) t + \tt p^{(1)}(0) & \tt p^{(2)}(t) = \tt p^{(2)}(0)
\end{array},
\end{equation}
and substituting $t=1$ gives
\be{\phi_{\SUM}:(\tt q^{(1)}, \tt q^{(2)}, \tt p^{(1)}, \tt p^{(2)}) \mapsto (\tt q^{(1)}, \tt q^{(2)} + \tt q^{(1)}, \tt p^{(1)} - \tt p^{(2)}, \tt p^{(2)}).}
The $\SUM^{\dag}$ gate corresponds to evolution under the negative of this Hamiltonian for the same duration, so that
\be{\phi_{\SUM^\dag}:(\tt q^{(1)}, \tt q^{(2)}, \tt p^{(1)}, \tt p^{(2)}) \mapsto (\tt q^{(1)}, \tt q^{(2)} - \tt q^{(1)}, \tt p^{(1)} + \tt p^{(2)}, \tt p^{(2)}).}
\end{itemize}
\end{proof}

We now turn to the generating functions associated to each of these symplectomorphisms.

\begin{lemma}
\label{CVClElemGen}
The following are the generating functions of the elementary CV Clifford symplectomorphisms:
\begin{itemize}
\item $\tt G_{\phi_{\htt F}} (\tt q, \tt Q) = - \tt q \tt Q;$
\item $\tt G_{\phi_{\htt F^\dag}} (\tt q, \tt Q) = \tt q \tt Q;$
\item $\tt G_{\phi_{\htt P(\eta)}} (\tt q, \tt Q) = -\frac{\eta}{2} \tt q^2;$
\item $\tt G_{\phi_{\htt X(\tau)}} (\tt q, \tt Q) = 0;$
\item $\tt G_{\phi_{\SUM}}(\tt q^{(1)}, \tt q^{(2)}, \tt Q^{(1)}, \tt Q^{(2)}) = 0;$
\item $\tt G_{\phi_{\SUM^\dag}}(\tt q^{(1)}, \tt q^{(2)}, \tt Q^{(1)}, \tt Q^{(2)}) = 0.$
\end{itemize}
\end{lemma}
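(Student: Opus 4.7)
The strategy is to apply the defining relation \eqref{taut} for a generating function, namely $\sum_i P^{(i)}\,\dee Q^{(i)} - \sum_i p^{(i)}\,\dee q^{(i)} = \dee \tilde{G}(\vec{q},\vec{p})$, to each of the elementary CV Clifford symplectomorphisms produced in Lemma~\ref{CVClElemSymp}, and then re-express the resulting $\tilde{G}(\vec q,\vec p)$ as a function $G(\vec q,\vec Q)$ of the ``configuration-in, configuration-out'' variables. The six cases split naturally according to whether the map $\phi$ allows $\vec q$ and $\vec Q$ to be used as independent coordinates on phase space.

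For the two Fourier gates, $\phi_{\hat F}$ sends $(q,p)\mapsto(p,-q)$, so $Q=p$ is independent of $q$; similarly for $\phi_{\hat F^\dag}$. In these cases I will simply verify the ansatz $G_{\phi_{\hat F}}(q,Q)=-qQ$ (respectively $G_{\phi_{\hat F^\dag}}(q,Q)=qQ$) by differentiating it according to \eqref{genfndefn} and comparing against the explicit symplectomorphism from Lemma~\ref{CVClElemSymp}: e.g.\ $-\partial_q(-qQ)=Q=p$ and $\partial_Q(-qQ)=-q=P$, which matches $\phi_{\hat F}$ exactly, and analogously for $\phi_{\hat F^\dag}$. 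This is a one-line check per gate.

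For the remaining four gates $\hat P(\eta)$, $\hat X(\tau)$, $\hat\Sigma$, $\hat\Sigma^\dag$, the coordinates $\vec q$ and $\vec Q$ are \emph{not} independent: each $Q^{(i)}$ is pinned to an affine function of $\vec q$ (this is exactly the constraint enforced by the $\delta$-functions in Lemma~\ref{CVMatElems}). Here I will work directly from \eqref{taut} by substituting $\dee Q^{(i)}$ in terms of the independent differentials $\dee q^{(i)}$ (and, for the symplectomorphisms, reading $P^{(i)}$ off from Lemma~\ref{CVClElemSymp}), then computing the $1$-form $\sum_i P^{(i)}\dee Q^{(i)}-\sum_i p^{(i)}\dee q^{(i)}$ and showing it is exact. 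For $\phi_{\hat P(\eta)}$ one has $\dee Q=\dee q$ and $P=p-\eta q$, giving the $1$-form $-\eta q\,\dee q=\dee(-\tfrac{\eta}{2}q^2)$, so $\tilde G=-\tfrac{\eta}{2}q^2$, which (being independent of $p$) is already a function of $(q,Q)$ on the constraint surface and extends trivially to all of $\R^2$ as $G_{\phi_{\hat P(\eta)}}(q,Q)=-\tfrac{\eta}{2}q^2$. For $\phi_{\hat X(\tau)}$, $\phi_{\hat\Sigma}$, and $\phi_{\hat\Sigma^\dag}$ the analogous substitution produces the zero $1$-form, yielding $G=0$ in each case; the $\hat\Sigma$ calculation is the most involved, requiring one to use $\dee Q^{(1)}=\dee q^{(1)}$ and $\dee Q^{(2)}=\dee q^{(1)}+\dee q^{(2)}$ together with $P^{(1)}=p^{(1)}-p^{(2)}$ and $P^{(2)}=p^{(2)}$ to see that all cross terms cancel.

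The only conceptual subtlety (and what I would flag as the place to be careful) is the extension of $G$ off the constraint surface in the degenerate cases: strictly speaking, \eqref{taut} only determines $G$ up to a function vanishing on the image of the section $(\vec q,\vec p)\mapsto(\vec q,\vec Q(\vec q,\vec p))$. This ambiguity is harmless for the use to which the lemma will be put, because in the path-sum expression these generating functions are evaluated only on allowed paths, which by construction satisfy the constraints $g_k(\vec q_{k-1},\vec q_k)=0$ imposed by the $\delta$-functions of Theorem~\ref{CVClPathSumFinal}. I will note this explicitly and adopt, in each case, the simplest representative (namely the one obtained by eliminating $\vec p$ in favour of $\vec q$ alone), which is the choice reproducing the stated formulas and matching the phase $S_{\mf C}(\gamma)$ in \eqref{CVClActionFinal} term by term.
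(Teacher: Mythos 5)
Your proposal is correct and follows essentially the same route as the paper, which likewise computes the one-form $\sum_i P^{(i)}\dee Q^{(i)}-\sum_i p^{(i)}\dee q^{(i)}$ for each symplectomorphism of Lemma~\ref{CVClElemSymp} and exhibits it as $\dee\tilde G$ (e.g.\ $-q\,\dee p-p\,\dee q=\dee(-qp)=\dee(-qQ)$ for $\htt F$, and exact cancellation for $\htt X(\tau)$, $\SUM$, $\SUM^\dag$). The only cosmetic difference is that for the Fourier gates you verify the ansatz via \eqref{genfndefn} rather than deriving $\tilde G$ from \eqref{taut}, and your remark about the ambiguity of $G$ off the constraint surface in the degenerate cases is a sensible clarification that does not change the argument.
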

\begin{proof}
These follow by direct computation using the definition in \eq{taut}
\begin{itemize}
\item For the $\htt F$ gate,
\begin{equation}
\tt P \dee \tt Q - \tt p \dee \tt q = -\tt q \dee \tt p - \tt p \dee \tt q = \dee ( - \tt q \tt p) = \dee ( -\tt q \tt Q).
\end{equation}
\item For the $\htt F^\dag$ gate,
\begin{equation}
\tt P \dee \tt Q - \tt p \dee \tt q = \tt q \dee (- \tt p) - \tt p \dee \tt q = \dee (- \tt q \tt p) = \dee ( \tt q \tt Q).
\end{equation}
\item For the $\htt P(\eta)$ gate,
\begin{equation}
\tt P \dee \tt Q - \tt p \dee \tt q = (\tt p - \eta \tt q) \dee \tt q - \tt p \dee \tt q = - \eta \tt q \dee \tt q = \dee \left(- \frac{\eta}{2} \tt q^2\right).
\end{equation}
\item For the $\htt X(\tau)$ gate,
\begin{equation}
\tt P \dee \tt Q - \tt p \dee \tt q = \tt p \dee ( \tt q + \tau) - \tt p \dee \tt q = \dee (0).
\end{equation}
\item For the $\SUM$ gate,
\begin{align}
 &     \tt P^{(1)} \dee \tt Q^{(1)} + \tt P^{(2)} \dee \tt Q^{(2)}- \tt p^{(1)} \dee \tt q^{(1)} - \tt p^{(2)} \dee \tt q^{(2)} \nonumber \\
&= (\tt p^{(1)} - \tt p^{(2)}) \dee \tt q^{(1)} + \tt p^{(2)} \dee (\tt q^{(2)} + \tt q^{(1)}) - \tt p^{(1)} \dee \tt q^{(1)} - \tt p^{(2)} \dee \tt q^{(2)} \nonumber \\
&= \dee (0).
\end{align}
\item For the $\SUM^\dag$ gate,
\begin{align}
 &     \tt P^{(1)} \dee \tt Q^{(1)} + \tt P^{(2)} \dee \tt Q^{(2)}- \tt p^{(1)} \dee \tt q^{(1)} - \tt p^{(2)} \dee \tt q^{(2)} \nonumber \\
&= (\tt p^{(1)} + \tt p^{(2)}) \dee \tt q^{(1)} + \tt p^{(2)} \dee (\tt q^{(2)} - \tt q^{(1)}) - \tt p^{(1)} \dee \tt q^{(1)} - \tt p^{(2)} \dee \tt q^{(2)} \nonumber \\
&= \dee (0).
\end{align}
\end{itemize}
\end{proof}

 Note that instances of the identity gate correspond to the identity symplectomorphism $( \tt q, \tt p) \mapsto (\tt q, \tt p)$ and have generating function equal to 0.  Note also that if gates act in parallel on different systems, the symplectomorphism for the overall gate is simply the composition of the symplectomorphisms of the component gates. and the generating function for the overall gate is simply the sum of the generating functions of the component gates.

\subsection{Main result}

Let $\mf C$ be an $n$-system CV Clifford circuit consisting of $N$ time-steps and wherein all the gates are CV Clifford gates.  To such a circuit, there is an associated sequence  of symplectomorphisms of $\rx^{2n}$, denoted $\Phi=\{ \phi_k\}_{k=1}^N$, where each of the $\phi_k$ is composed of the elementary CV Clifford symplectomorphisms described in Lemma \ref{CVClElemSymp}. \black Then, according to our Definition \ref{def:genfunctionalcont}, the action functional over discrete-time paths 
associated to $\Phi$, denoted $\gamma \mapsto \tt S_{\Phi}(\gamma)$,
 is the sum of the generating functions associated to these symplectomorphisms. 
Specifically, Lemma  \ref{CVClElemGen} implies that
\begin{equation}
 \tt S_{\Phi}(\gamma) = 
 - \sum_{\htt F  \;{\rm gates}}  \tt q({\rm gate})\tt Q({\rm gate}) 
 + \sum_{\htt F^{\dag} \;{\rm gates}}  \tt q({\rm gate})\tt Q({\rm gate}) 
 - \sum_{\htt P(\eta)\; {\rm gates}} \frac{\eta}{2} \tt q(\rm gate)^2,
  \end{equation}
 where we have adopted the notational convention introduced above Theorem \ref{CVClPathSumFinal}.
Comparison with  \eq{CVClActionFinal} then establishes our main result for CV Clifford circuits.

\begin{theorem}
\label{MainCV}
Consider an $n$-system CV Clifford circuit $\mf C$, associated in quantum theory with a unitary $\htt U \in C_n$ 
and associated, in its classical counterpart, to a symplectomorphism $\Phi$.
The functional $\tt S_{\Phi}(\gamma)$ that specifies, via Definition~\ref{def:genfunctionalcont}, the action of the discrete-time path $\gamma$
through the classical counterpart of the circuit 
 is precisely equal to the functional $\tt S_{\mf C}(\gamma)$ that defines the phase assigned to $\gamma$ in the sum-over-paths expression for the transition amplitude of the quantum circuit, \eq{PathSumExpression}.
\end{theorem}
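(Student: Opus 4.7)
The plan is to observe that the heavy lifting for this theorem has already been done in Theorem~\ref{CVClPathSumFinal} and Lemma~\ref{CVClElemGen}, so that the proof reduces to a term-by-term comparison of two sums, each indexed by the gates of the circuit $\mathfrak{C}$.

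First I would unpack the definition of $S_\Phi(\gamma)$. By Definition~\ref{def:genfunctionalcont}, $S_\Phi(\gamma) = \sum_{k=1}^N G_{\phi_k}(\vec q_{k-1}, \vec q_k)$, where $\phi_k$ is the symplectomorphism associated to the $k$th time-step of $\mathfrak{C}$. At each time-step, the unitary $\hat U_k$ is a tensor product of elementary CV Clifford gates acting in parallel on disjoint sets of systems (with any idle systems assigned the identity gate), so $\phi_k$ is the Cartesian product of the elementary symplectomorphisms listed in Lemma~\ref{CVClElemSymp}. Since the canonical $1$-form $\theta$ decomposes as a sum over systems, the defining relation \eqref{taut} implies that the generating function of a Cartesian product of symplectomorphisms on disjoint systems is the sum of the generating functions of its components. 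Consequently $S_\Phi(\gamma)$ can be reorganised as a sum over all the individual elementary gates $g$ appearing in $\mathfrak{C}$, each contributing $G_{\phi_g}(\vec q(g), \vec Q(g))$ with $\vec q(g)$ and $\vec Q(g)$ denoting the configurations at the input and output of $g$ along $\gamma$.

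Next I would substitute the explicit values from Lemma~\ref{CVClElemGen}. The identity, $\hat X(\tau)$, $\hat \Sigma$ and $\hat \Sigma^{\dagger}$ gates all have generating function zero and therefore make no contribution. An $\hat F$ gate contributes $-\tt q(\text{gate})\tt Q(\text{gate})$, an $\hat F^{\dagger}$ gate contributes $+\tt q(\text{gate})\tt Q(\text{gate})$, and an $\hat P(\eta)$ gate contributes $-\tfrac{\eta}{2}\tt q(\text{gate})^2$. Summing these contributions over all gates of the circuit yields
\[
S_\Phi(\gamma) \;=\; -\!\!\sum_{\hat F \;\text{gates}} \tt q(\text{gate})\tt Q(\text{gate}) \;+\; \sum_{\hat F^{\dagger}\;\text{gates}} \tt q(\text{gate})\tt Q(\text{gate}) \;-\; \sum_{\hat P(\eta)\;\text{gates}} \tfrac{\eta}{2}\,\tt q(\text{gate})^2,
\]
which is exactly the expression for $S_{\mathfrak{C}}(\gamma)$ given in \eqref{CVClActionFinal}.

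There is essentially no obstacle: the proof is a direct comparison once one has carefully justified that generating functions of symplectomorphisms composed in parallel on disjoint subsystems add. The only conceptual point that deserves a careful sentence is this additivity under parallel composition, which follows immediately from the fact that the defining equation \eqref{taut} for the generating function is linear in $\theta$ and $\theta$ itself is additive over the systems. Once this is noted the theorem follows, and it is this correspondence between the gate-by-gate accounting of phases in the path sum (Theorem~\ref{CVClPathSumFinal}) and the gate-by-gate accounting of actions via generating functions (Lemma~\ref{CVClElemGen}) that justifies calling $S_\Phi$ an action functional for the classical counterpart of the circuit.
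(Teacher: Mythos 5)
Your proposal is correct and follows essentially the same route as the paper: the paper likewise expands $S_\Phi(\gamma)$ gate-by-gate using the additivity of generating functions under parallel composition (asserted in the remark after Lemma~\ref{CVClElemGen}), substitutes the explicit generating functions from that lemma, and compares directly with Eq.~\eqref{CVClActionFinal} of Theorem~\ref{CVClPathSumFinal}. Your explicit justification of the additivity via the decomposition of the canonical $1$-form $\theta$ over systems is a small but welcome elaboration of a point the paper only states.
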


\black

\section{Quopit Clifford circuits}
\label{Clifford}

\subsection{Sum-over-paths expression for quopit Clifford circuits}
\label{PathSum}
 We turn now to quopit Clifford circuits.  Clifford circuits for collections of discrete systems of arbitrary dimension $d$ were first introduced by Gottesman in \cite{GottesmanCliff} as a higher dimensional version of the qubit stabiliser codes for fault-tolerant quantum computation (and it was shown that the Gottesman-Knill theorem extends to these higher dimensions).  A qudit of dimension $d$ equal to an odd prime has been termed a ``quopit''~\cite{Emerson}.  Hence, quopit Clifford circuits are Clifford circuits wherein the elementary systems are dimension $d$ for $d$ an odd prime.

This Section will follow a structure similar to Section \ref{CVClifford}. We begin by determining a sum-over-paths expression for transitions amplitudes of quopit Clifford circuits, as in \eq{PathSumExpression},  thus identifying the functional over paths appearing in the exponent of the phase factor.  We then address the question of whether this functional
admits of an interpretation in terms of a generalised action functional, just as was done in Sections \ref{genfunctions} and \ref{CVGenAction}. A discrete phase space representation of quopit Clifford circuits is described in Section \ref{Symprep}, 
In Section \ref{genfundisc}, it is shown how to define generating functions for symplectomorphisms on a discrete phase space using tools from algebraic geometry. 
 In Section \ref{DiscGenAction},  the symplectomorphisms associated to the gates in the elementary gate set are identified, and, using the tools of Section \ref{genfundisc}, we find the associated generating functions. 
 Finally, we show that an action functional defined via the sum of these generating functions coincides with the functional appearing in the exponent of the phase factor for the sum-over-paths expression of the circuit dynamics.

An $n$-quopit Clifford circuit consists of preparations and measurements in the computational basis of $(\C^d)^{\otimes n}$, where $d$ is an odd prime, and has elementary gate set
\be{\label{def:CliffGateSet}
\hat{F} &=& \frac{1}{\sqrt{d}} \sum_{q,q' \in \Z_d} \chi(qq') |q \rangle \langle q'| \nn
\hat{R} &=& \sum_{q \in \Z_d} \chi(2^{-1}q(q-1)) |q \rangle \langle q| \nn
\Sig &=& \sum_{q,q' \in \Z_d} |q,q+q' \rangle \langle q,q'|,
}
where 
\be{ \chi(p) \equiv e^{\frac{2 \pi i p}{d}}}
 and arithmetic operations on elements of $\Z_d$ are done modulo $d$. We call $\hat{F}$ the Fourier gate, $\hat{R}$ the Phase gate and $\Sig$ the Sum gate.

It has been shown by Clark  \cite{Clark} that this gate set can (up to a global phase factor) implement any unitary lying in the $n$-quopit Clifford group, which we denote $C_{d,n}$. To define this group, we must introduce the $n$-quopit Pauli group, denoted ${\cal G}_{d,n}$, the $d$-dimensional generalization of the qubit Pauli group. This is the subgroup of $U\left((\C^d)^{\otimes n} \right)$ generated by $\{ \chi(\htt q_i), \chi(\htt p_i): i\in \{1,\dots,n\}\}$ and $e^{\frac{i \pi}{d}}\mathds{1}$, where $i$ labels the quopits and where for a given quopit, 
\begin{equation}
 \chi(\hat q) \equiv \sum_{q\in \Z_d} \chi(q) \kbra qq, \quad \chi(\hat p) \equiv \sum_{q \in \Z_d} \kbra{q+1}q.
\end{equation}
\begin{definition}
\label{def:Cliffgroup}
The $n$-quopit Clifford group, $C_{d,n}$, is defined to be the normaliser of the $n$-quopit Pauli group ${\cal G}_{d,n}$ inside $U\left((\C^d)^{\otimes n} \right)$, that is, $C_{d,n} := N\left({\cal G}_{d,n}\right)$.
\end{definition}

Note that $C_{d,n}$ has $U(1)$ as a subgroup given by the operators $e^{i \phi} {\mathds 1}$. Clark has proven~\cite{Clark} that the set
\begin{equation}
 \left\{ \hat F_i,\hat R_i, \hat \Sigma_{i,j} \ : \ i, j \in \{1,\ldots, n\} \right\}
\end{equation}
are a generating set for the group $ C_{d,n} / U(1)$.

Let ${\mf C}$ be a given $n$-quopit Clifford circuit implementing a unitary $\hat{U} \in C_{d,n}$. To calculate the corresponding transition amplitudes one must first know the matrix elements for each of the elementary gates. 
\begin{lemma}
The matrix elements for the elementary quopit Clifford gates are:
\begin{itemize}
 \item $\langle Q | \hat{F} | q \rangle =\frac{1}{\sqrt{d}} \chi(q Q)$;
 \item $\langle Q | \hat{R} | q \rangle = \chi( 2^{-1} q(q-1)) \delta_{Q,q}$;
 \item $\langle Q^{(1)}, Q^{(2)} | \hat{\Sigma} | q^{(1)}, q^{(2)} \rangle = \delta_{Q^{(1)},q^{(1)}} \delta_{Q^{(2)}, q^{(1)}+q^{(2)}}$.
\end{itemize}
It is evident, therefore, that these gates are balanced.
\end{lemma}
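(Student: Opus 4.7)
The plan is to compute each of the three matrix elements by direct expansion of the definitions given in Eq.~\eqref{def:CliffGateSet}, using nothing more than the orthonormality $\langle r | s \rangle = \delta_{r,s}$ of the computational basis and the distributive law. This is entirely routine, so the real content is verifying that the resulting expressions fit the template of Definition~\ref{defnbalancedgate}, thereby establishing balancedness. I expect no serious obstacle, but I want to be careful about two minor points: the fact that $2^{-1}$ makes sense modulo $d$ because $d$ is an odd prime, and the invocation of the convention (footnote following Definition~\ref{defnbalancedgate}) that $\delta(g)$ with $g \equiv 0$ denotes the constant distribution with value $1$, which is needed to accommodate $\hat F$ whose matrix elements are all nonzero.

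For $\hat F$, I would insert its defining sum and collapse both Kronecker deltas:
\begin{align*}
\langle Q | \hat F | q \rangle
&= \frac{1}{\sqrt{d}} \sum_{r, s \in \Z_d} \chi(rs)\, \langle Q | r \rangle \langle s | q \rangle
= \frac{1}{\sqrt{d}} \sum_{r, s \in \Z_d} \chi(rs)\, \delta_{Q,r}\, \delta_{s,q}
= \frac{1}{\sqrt{d}}\, \chi(qQ).
\end{align*}
For $\hat R$, which is manifestly diagonal in the computational basis, a single sum collapses to $\chi(2^{-1}q(q-1))\, \delta_{Q,q}$. For $\hat\Sigma$, the expansion
\begin{align*}
\langle Q^{(1)}, Q^{(2)} | \hat\Sigma | q^{(1)}, q^{(2)} \rangle
&= \sum_{r, s \in \Z_d} \langle Q^{(1)} | r \rangle \langle Q^{(2)} | r+s \rangle \langle r | q^{(1)} \rangle \langle s | q^{(2)} \rangle \\
&= \delta_{Q^{(1)}, q^{(1)}}\, \delta_{Q^{(2)}, q^{(1)} + q^{(2)}}
\end{align*}
follows by collapsing $r$ against $q^{(1)}$ and $s$ against $q^{(2)}$.

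To establish balancedness, I would then match each expression against Eq.~\eqref{eq:balancedgate}. For $\hat F$, take $\mathfrak{N} = 1/\sqrt{d}$, $S(q,Q) = (2\pi/d)\, qQ$, and $g \equiv 0$, invoking the aforementioned convention so that $\delta(g) = 1$ identically; every matrix element has the same magnitude $1/\sqrt{d}$. For $\hat R$, take $\mathfrak{N} = 1$, $S(q,Q) = (2\pi/d)\, 2^{-1} q(q-1)$, and $g(q,Q) = Q - q$; the nonzero matrix elements form the diagonal and all have unit magnitude. For $\hat\Sigma$, take $\mathfrak{N} = 1$, $S \equiv 0$, and $g(q^{(1)}, q^{(2)}, Q^{(1)}, Q^{(2)}) = (Q^{(1)} - q^{(1)},\, Q^{(2)} - q^{(1)} - q^{(2)})$; again nonzero matrix elements all have unit magnitude. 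In each case the amplitudes split as a path-independent complex constant times a pure phase times a Kronecker-$\delta$ support condition, which is exactly what Definition~\ref{defnbalancedgate} demands.
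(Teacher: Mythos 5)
Your computations are correct and follow exactly the routine verification that the paper leaves implicit (the paper simply states "These identities are straightforward to verify"), and your matching of each matrix element against Definition~\ref{defnbalancedgate} — including the convention that $\delta(g)\equiv 1$ when $g\equiv 0$ for the Fourier gate, and the invertibility of $2$ modulo an odd prime — is the intended reading. Nothing further is needed.
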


 These identities are straightforward to verify.  

If, at some time-step, a quopit has no gate acting on it, we shall describe the gate as {\em identity} and denote it by $\mathds{1}$.  The identity gate can be obtained by acting the Fourier gate twice in succession, so that one can infer from lemma \ref{CVMatElems} and a short calculation that its contibution to the amplitude is what one expects, namely, $ \delta_{Q,q}$

Following argumentation parallel to that provided in Section \ref{CVClifford}, except where the variables take values in $\Z_d$ as opposed to $\R$, we obtain the following result.
\begin{theorem}
\label{penult}
 Given an $n$-quopit Clifford circuit ${\mf C}$ implementing a unitary $\hat{U} \in C_{d,n}$ the transition amplitudes can be computed by the sum-over-paths expression,
 \begin{equation}
\langle \vec q_N| \hat{U} | \vec{q}_0 \rangle = \mathfrak{N}_{\mf C} \sum_{\gamma \in \P_{{\mf C}}(\vec q_0, \vec q_N)} e^{i S_{{\mf C}}(\gamma)},
\end{equation}
where
\begin{equation}
\mathfrak{N}_{\mf C} = \frac{1}{d^{\#(\hat{F})/2}},
\end{equation}
where
\begin{equation}\label{quopitphasefactor}
S_{{\mf C}}( \gamma) =  \frac{2\pi}{d}  \left( - \sum_{\hat{F} \ {\rm gates}} q({ \rm gate}) Q({\rm gate}) + \sum_{\hat{R} \ {\rm gates}} 2^{-1} q({\rm gate}) \left( q({\rm gate}) - 1\right) \right),
\end{equation}
 and where the set of allowed paths is given by $\P_{{\mf C}}( \vec q_0,\vec q_f)$, defined as the set of paths satisfying the following constraints
\begin{align}\label{Kroneckerdiscrete}
&\forall \;\; \mathds{1} , \htt R \;{\rm gates} \;: \;\tt Q({\rm gate}) = \tt q({\rm gate}) \;\; \nonumber\\
&\forall \;\; \hat{\Sigma} \;{\rm gates} \;:\; \tt Q^{(1)}({\rm gate}) = \tt q^{(1)}({\rm gate}),\;\;\; Q^{(2)}({\rm gate}) = \tt q^{(1)}({\rm gate}) +\tt q^{(2)}({\rm gate}).
\end{align}  
\end{theorem}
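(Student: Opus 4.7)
The plan is to mirror the argument behind Theorem~\ref{CVClPathSumFinal}, replacing integrals over $\R$ with sums over $\Z_d$ and Dirac deltas with Kronecker deltas. The starting point is the general discrete path-sum identity \eqref{jjj2} together with the expression \eqref{PathSumExpression2} that applies when every gate in the circuit is balanced. Since the balanced property has already been noted for each of $\hat F$, $\hat R$, and $\hat\Sigma$, the task reduces to identifying, for each elementary gate, its contribution to $\mathfrak N$, to $S(\gamma)$, and to $\delta(g(\gamma))$, and then aggregating over the circuit according to \eqref{contribution}.

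First I would factorize each time-step unitary $\hat U_k$ into a tensor product of elementary gates acting on disjoint subsets of quopits (with $\mathds{1}$ on any remaining system) and insert a resolution of the identity $\sum_{\vec q_k \in (\Z_d)^n} |\vec q_k\rangle\langle \vec q_k|$ between every pair of consecutive time-steps, so that $\langle \vec q_k|\hat U_k|\vec q_{k-1}\rangle$ becomes an explicit product of single- and two-quopit matrix elements. Each such matrix element is then replaced by its value from the lemma preceding the theorem, expressed in the canonical balanced form $\mathfrak{N}_{\rm gate}\,\chi(S_{\rm gate})\,\delta_{\rm constraint}$.

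Next I would read off the three types of contribution. Every $\hat F$ gate supplies a factor of $d^{-1/2}$ to the normalization and a phase proportional to $q(\mathrm{gate})\,Q(\mathrm{gate})$, with no Kronecker constraint; every $\hat R$ gate supplies a phase proportional to $2^{-1}q(\mathrm{gate})(q(\mathrm{gate})-1)$ together with the constraint $\delta_{Q,q}$; every $\hat\Sigma$ gate supplies only the constraints $\delta_{Q^{(1)},q^{(1)}}\,\delta_{Q^{(2)},q^{(1)}+q^{(2)}}$; and every $\mathds{1}$ gate supplies only $\delta_{Q,q}$. Multiplying the normalizations across the whole circuit yields $\mathfrak N_{\mf C}=d^{-\#(\hat F)/2}$; multiplying the $\chi(\cdot)$ phase factors and using $\chi(a)\chi(b)=\chi(a+b)$ yields $e^{iS_{\mf C}(\gamma)}$ with $S_{\mf C}$ as in \eqref{quopitphasefactor} (after extracting the overall $2\pi/d$ from the definition of $\chi$); and the product of Kronecker deltas defines $\P_{\mf C}(\vec q_0,\vec q_N)$ via the constraints \eqref{Kroneckerdiscrete}.

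I expect the main obstacle to be purely combinatorial bookkeeping rather than any deeper algebraic or analytic difficulty: one must carefully index gates both across time-steps and within a single time-step (parallel sub-gates acting on disjoint quopits), reconcile the sign convention for the $\hat F$ phase with the form stated in \eqref{quopitphasefactor}, and verify that the Kronecker deltas arising from parallel sub-gates combine into a single joint constraint on $\vec q_k$ exactly as prescribed by \eqref{contribution}. Once this bookkeeping is handled, the theorem follows immediately from the balanced-gate path-sum formalism of \eqref{functional4balancedgates}--\eqref{PathSumExpression2} applied to the explicit quopit Clifford gate set.
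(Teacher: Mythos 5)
Your proposal is correct and follows essentially the same route as the paper, which simply asserts that the result follows by ``argumentation parallel to that provided in Section \ref{CVClifford}'': read off each elementary gate's contribution to the normalization, the phase, and the Kronecker constraints from the matrix-element lemma, and aggregate over the circuit via \eqref{contribution}. The sign reconciliation you flag for the $\hat F$ phase is a genuine wrinkle---the lemma gives $\chi(q Q)$ while \eqref{quopitphasefactor} carries $-q(\mathrm{gate})Q(\mathrm{gate})$---but that is an inconsistency in the paper's own conventions rather than a gap in your argument.
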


For the remainder of this Section, we will show that 
$S_{{\mf C}}(\gamma)$ can be understood as a generalised action functional.

As before,  however, we pause here to describe a method of implementing the constraint to the allowed paths in terms of a parameterization.  We denote the initial configurations by $\vec q_0$ and the final configurations by $\vec q_{\rm f}$ and for each $\hat{F}$ gate in the circuit we introduce a free configuration parameter at its output.
We denote the free configuration parameters as $\vec x \equiv (x_1, \ldots, x_L)$, where $L = \#(\hat{F})$. Letting $B^{(i)}(\vec{x})$ denote the configuration of the $i$-th system at the output of the circuit, 
one has that the space of allowed values of  $\vec x$ is
\begin{equation}
\mathbb{F}_{{\mf C}}( \vec q_0,\vec q_f) =\left \{ \vec x \ | \ B^{(i)}(\vec x) = q_f^{(i)}, \ \forall i \in \{1, \ldots, n\} \right\}.
\end{equation}
Just as was found for the continuous case, each of the above equations defines an affine hypersurface in $\Z_d^L$, and so $\mathbb{F}_{{\mf C}}(\vec q_0, \vec q_f)$ is a (possibly empty) subset of $\Z_d^L$ given by the intersection of $n$ affine hypersurfaces.  Given this parameterization, we have
 \begin{equation}
\langle \bt q_{\rm f} | \htt U | \bt q_0 \rangle = 
\mathfrak{N}_{\mf C}  \sum_{\vec x \in \mathbb{F}_{\mf C}(\bt q_0,\bt q_f)} e^{i \tt S_{\mf C}(\gamma(\bt x))}.
\end{equation}

\subsection{Symplectic representation of discrete systems}
\label{Symprep}
Consider the vector space $(\zx_d)^{2n}$ with basis $(q^{(1)},\ldots, q^{(n)}, p^{(1)}, \ldots, p^{(n)})$. One can introduce a symplectic inner product on this space in the usual fashion: Letting 
\begin{equation}
J \equiv \left( \begin{matrix}  0_{n \times n} & \mathds{1}_{n \times n} \\
- \mathds{1}_{n \times n} & 0_{n \times n} \end{matrix} \right),
\end{equation}
one defines for $\vec{u},\vec{v} \in \Z_d^{2n}$,
\begin{equation}\label{sip}
[\vec{u},\vec{v}] \equiv \vec{u}^T J \vec{v}.
\end{equation}
It is readily verified that $[\cdot, \cdot]$ is skew-symmetric and non-degenerate. As such, $[\vec{u},\vec{v}]$ defines a symplectic inner product and  $(\zx_d)^{2n}$ can be understood as a discrete phase space. An element  $S \in \mathrm{End}(\Z_d^{2n})$ is said to be {\em symplectic} if it preserves the symplectic inner product, i.e., if for each $\vec{u},\vec{v} \in \Z_d^{2n}$, $[S\vec{u},S\vec{v}] = [\vec{u},\vec{v}]$. The collection of all such elements forms the symplectic group, denoted by $\mathrm{Sp}(2n, \zx_d)$. 
Furthermore, elements $D_{\vec a}  \in \mathrm{End}(\Z_d^{2n})$ such that $\forall \vec{u} \in \Z_d^{2n}$, $D_{\vec a} \vec{u} = \vec{u} + \vec{a}$ where $\vec{a} \in \Z_d^{2n}$ are said to be phase-space displacements, and the collection of all such elements forms the group $\Z_d^{2n}$. Combinations of the latter two sorts of elements form the group $\mathrm{Sp}(2n, \zx_d) \ltimes \Z_d^{2n}$, which we term the  symplectic affine group.

Following Gross's work on the discrete Wigner representation, one can represent elements of the computational basis as probability distributations on the discrete phase space $\zx_d^{2n}$ and elements of the Clifford group as elements of $\mathrm{Sp}(2n, \zx_d) \ltimes \Z_d^{2n}$ acting thereon~\cite{Gross}. This allows one to define a symplectic representation of quopit Clifford circuits.

The discrete Wigner transformation associates to each density operator $\hat{\rho} \in B((\C^d)^{\otimes n})$ the quasi-probability distribution on $\zx_d^{2n}$ defined by 
\begin{equation}
W_{\hat{\rho}}(\vec{q},\vec{p}) = \frac{1}{d^n} \sum_{\vec{x} \in \zx_d^n} \overline{\chi}(\vec{x} \cdot \vec{p}) \langle \vec{q} + 2^{-1} \vec{x}| \hat{\rho} | \vec{q} - 2^{-1} \vec{x} \rangle.
\end{equation}
For computational basis elements $ |\vec{q}_0 \rangle \langle \vec{q}_0 |$, a simple calculation shows that $W_{| \vec{q}_0 \rangle \langle \vec{q}_0 |}$ is the uniform distribution supported on the phase space line $\{ (\vec{q}_0,\vec{p} ): \vec{p} \in \Z_d^{n} \}$.
Similarly, for an element $|\vec{p}_0 \rangle_P \langle \vec{p}_0 | $ of the momentum basis (the eigenbasis of $\chi(\hat{p})$), $W_{|\vec{p}_0 \rangle_P \langle \vec{p}_0 | }$ is uniformly supported on the phase space line $\{ (\vec{q},\vec{p}_0 ): \vec{q} \in \Z_d^{n} \}$.

For our purposes, the most important of Gross's results is the following \cite{Gross}:
\begin{proposition} 
\label{GrossSymp}
There is a map $\mu: \mathrm{Sp}(2n, \zx_d) \ltimes \Z_d^{2n} \to C_{d,n}$ satisfying
\begin{enumerate}
 \item $\mu(S,\vec{a}) \mu(T,\vec{b}) = e^{i \theta} \mu( ST, S \vec{a} + \vec{b})$ for some $\theta$, i.e., $\mu$ is a projective representation of $\mathrm{Sp}(2n, \zx_d) \ltimes \Z_d^{2n}$,
 \item For each $\hat{U} \in C_{d,n}$ there is an $(S,\vec{a}) \in \mathrm{Sp}(2n, \zx_d) \ltimes \Z_d^{2n}$ such that $\mu(S,\vec{a}) = e^{i \theta} \hat{U}$ for some $\theta$.
 \item For any density operator $\hat{\rho}$ and any $\vec{v} \in \Z_d^{2n}$, $W_{\mu(S,\vec{a}) \hat{\rho} \mu(S,\vec{a})^\dag} (S \vec{v} + \vec{a}) = W_{\hat{\rho}}(\vec{v})$ (covariance property)
\end{enumerate}
\end{proposition}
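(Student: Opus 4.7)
My plan is to construct $\mu$ via the natural conjugation action of $C_{d,n}$ on $\mathcal{G}_{d,n}$, rather than by building $\mu$ gate-by-gate on generators. First I would observe that, since $d$ is an odd prime so that $2^{-1}$ exists in $\Z_d$, every element of the Pauli group admits a canonical form $D(\vec a) \equiv \chi(-2^{-1}\vec a_q\cdot\vec a_p)\chi(\hat p)^{\vec a_p}\chi(\hat q)^{\vec a_q}$ for a unique $\vec a = (\vec a_q,\vec a_p) \in \Z_d^{2n}$, and that these displacement operators satisfy the Weyl relation $D(\vec a)D(\vec b) = \chi(2^{-1}[\vec a,\vec b])D(\vec a+\vec b)$ with $[\cdot,\cdot]$ the symplectic form of \eqref{sip}. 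This exhibits $\mathcal{G}_{d,n}/U(1)$ as an abelian group isomorphic to $\Z_d^{2n}$.

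Next, by Definition~\ref{def:Cliffgroup} every $\hat U \in C_{d,n}$ satisfies $\hat U D(\vec a)\hat U^\dag = e^{i\theta(\hat U,\vec a)}D(f_{\hat U}(\vec a))$ for some $f_{\hat U}:\Z_d^{2n}\to\Z_d^{2n}$. Because conjugation preserves the Weyl commutator $D(\vec a)D(\vec b)D(\vec a)^{-1}D(\vec b)^{-1} = \chi([\vec a,\vec b])\mathds{1}$, the induced map $f_{\hat U}$ preserves the symplectic form up to translation, and a direct comparison of $f_{\hat U}(\vec a+\vec b)$ with $f_{\hat U}(\vec a)+f_{\hat U}(\vec b)-f_{\hat U}(0)$ forces $f_{\hat U}(\vec a) = S_{\hat U}\vec a + \vec b_{\hat U}$ with $(S_{\hat U},\vec b_{\hat U})\in \mathrm{Sp}(2n,\Z_d)\ltimes\Z_d^{2n}$. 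The assignment $\pi:\hat U\mapsto(S_{\hat U},\vec b_{\hat U})$ is a group homomorphism whose kernel is $U(1)\cdot\mathds{1}$, by Schur's lemma applied to the irreducible action of the Paulis on $(\C^d)^{\otimes n}$. I then take $\mu$ to be any set-theoretic section of $\pi$. Item~1 is immediate, since $\mu(S,\vec a)\mu(T,\vec b)$ and $\mu(ST,S\vec b+\vec a)$ both project to the same symplectic-affine element and hence differ by a $U(1)$-scalar. Item~2 is the surjectivity of $\pi$, which follows from Clark's result \cite{Clark}: the generators $\hat F_i,\hat R_i,\hat \Sigma_{i,j}$ map under $\pi$ to known generators of $\mathrm{Sp}(2n,\Z_d)$, while the Paulis themselves lie in $C_{d,n}$ and surject onto the translation subgroup $\Z_d^{2n}$.

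For the covariance property (Item~3), I would rewrite $W_{\hat\rho}$ as a symplectic Fourier transform of the characteristic function $\vec b\mapsto\mathrm{Tr}(\hat\rho D(\vec b))$. Under $\hat\rho\mapsto \mu(S,\vec c)\hat\rho\mu(S,\vec c)^\dag$, the characteristic function at $\vec b$ becomes, up to a phase arising from the Weyl factor, proportional to $\chi([\vec c,\vec b])\mathrm{Tr}(\hat\rho D(S^{-1}\vec b))$. Applying the inverse Fourier transform, performing the change of variables $\vec b\mapsto S\vec b$, and using the invariance $[S\vec v,S\vec b]=[\vec v,\vec b]$, the $S$-dependence becomes a linear reparameterization of the argument of $W$ and the $\vec c$-dependence becomes an additive translation, yielding $W_{\mu\hat\rho\mu^\dag}(S\vec v+\vec c)=W_{\hat\rho}(\vec v)$. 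The main obstacle I anticipate is the phase bookkeeping: the phases $\theta(\hat U,\vec a)$ and the Weyl factors $\chi(2^{-1}[\vec a,\vec b])$ must be tracked meticulously so that the translation component emerges with exactly the correct sign and no residual $S$- or $\vec c$-dependent phase survives in the covariance law. A secondary but inessential obstacle is upgrading Item~1 to an explicit cocycle formula for $\theta$; since only existence is claimed, the section-of-$\pi$ construction above bypasses this calculation entirely.
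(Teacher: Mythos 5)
The paper does not actually prove Proposition~\ref{GrossSymp}; it imports it wholesale from Gross \cite{Gross}, so your proposal is being measured against Gross's construction rather than an in-paper argument. Your overall strategy (conjugation action on a Weyl--Heisenberg displacement basis, a quotient homomorphism to the affine symplectic group, a set-theoretic section, and covariance via the characteristic function) is essentially the standard route and the right one. However, there is a concrete error in how you extract the affine part. Since $D(\vec 0)=\mathds{1}$ and $\hat U\mathds{1}\hat U^\dag=\mathds{1}$, you always have $f_{\hat U}(\vec 0)=\vec 0$, and the additivity comparison you invoke then forces $f_{\hat U}$ to be \emph{linear}, i.e.\ $\vec b_{\hat U}=\vec 0$ for every $\hat U$. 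In particular the displacements themselves satisfy $D(\vec c)D(\vec a)D(\vec c)^\dag=\chi([\vec c,\vec a])D(\vec a)$, so they act trivially on the labels: under your $\pi$ the entire Pauli group lands in the kernel. Consequently the kernel of $\pi$ is $U(1)\cdot\mathcal{G}_{d,n}$ rather than $U(1)\cdot\mathds{1}$ (Schur's lemma only identifies operators that commute with all $D(\vec a)$ \emph{exactly}, not up to phase), the image of $\pi$ misses the translation subgroup entirely, and your surjectivity argument for Item~2 --- ``the Paulis surject onto the translation subgroup'' --- is false for the map as you have defined it.

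The repair is standard but must be made explicit: the translation component lives in the \emph{phases}, not in the label action. Writing $\hat UD(\vec a)\hat U^\dag=e^{i\theta(\hat U,\vec a)}D(S_{\hat U}\vec a)$, the same additivity computation that gives linearity of $S_{\hat U}$ also shows $\vec a\mapsto e^{i\theta(\hat U,\vec a)}$ is a character of $\Z_d^{2n}$; by nondegeneracy of $[\cdot,\cdot]$ it equals $\chi([\vec b_{\hat U},S_{\hat U}\vec a])$ for a unique $\vec b_{\hat U}$, and $\hat U\mapsto(S_{\hat U},\vec b_{\hat U})$ is the homomorphism you want, with kernel $U(1)\cdot\mathds{1}$ and with $D(\vec c)\mapsto(\mathds{1},\vec c)$ giving the translations. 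With that fix your Items~1--3 go through as sketched (the covariance computation via the symplectic Fourier transform is fine in outline, modulo the phase bookkeeping you already flag). One further point: your composition law $\mu(S,\vec a)\mu(T,\vec b)\propto\mu(ST,S\vec b+\vec a)$ is the one consistent with the action $\vec v\mapsto S\vec v+\vec a$ used in Item~3, whereas the Proposition as printed reads $S\vec a+\vec b$; since you are proving the statement as given, you should at least note that you are using the other (internally consistent) convention.
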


Note that property $2$ in Proposition~\ref{GrossSymp} guarantees the existence of a symplectomorphism $(S,\vec{a})$ for every element of the Clifford group, but not necessarily its uniqueness.  Nonetheless, such uniqueness does in fact hold. 
\begin{corollary}
 To each $\hat{U}\in C_{d,n}$, there is a {\em unique} $(S,\vec{a}) \in \mathrm{Sp}(2n, \zx_d) \ltimes \Z_d^{2n}$ such that $\mu(S,\vec{a}) = e^{i \theta} \hat{U}$ for some $\theta$.
\end{corollary}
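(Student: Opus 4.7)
The plan is as follows. Existence is already supplied by Property~2 of Proposition~\ref{GrossSymp}, so the substantive task is uniqueness. Suppose $\mu(S,\vec{a})=e^{i\theta_1}\hat{U}$ and $\mu(S',\vec{a}')=e^{i\theta_2}\hat{U}$ for two pairs in $\mathrm{Sp}(2n,\Z_d)\ltimes\Z_d^{2n}$. Then
\begin{equation*}
\mu(S,\vec{a})\,\mu(S',\vec{a}')^{-1}=e^{i(\theta_1-\theta_2)}\mathds{1}.
\end{equation*}
Applying Property~1 twice (and using that $\mu(I,\vec{0})$ must itself be a phase times $\mathds{1}$, which in turn forces $\mu(g^{-1})$ to equal $\mu(g)^{-1}$ up to a phase), the left-hand side can be rewritten as $e^{i\theta}\,\mu(T,\vec{b})$, where $(T,\vec{b})=(S,\vec{a})(S',\vec{a}')^{-1}$ computed in the semidirect product. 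Thus the problem reduces to showing that if $\mu(T,\vec{b})=e^{i\theta}\mathds{1}$ for some $\theta$, then $(T,\vec{b})=(I,\vec{0})$.

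For this reduced claim I would appeal to the covariance property (Property~3). A global phase commutes with conjugation, so the hypothesis gives $\mu(T,\vec{b})\,\hat\rho\,\mu(T,\vec{b})^\dag=\hat\rho$ for every density operator $\hat\rho$, and Property~3 then yields
\begin{equation*}
W_{\hat\rho}(T\vec{v}+\vec{b})=W_{\hat\rho}(\vec{v}) \quad \text{for every } \hat\rho \text{ and every } \vec{v}\in\Z_d^{2n}.
\end{equation*}
It remains to deduce that the affine map $\vec{v}\mapsto T\vec{v}+\vec{b}$ is the identity, for which it is enough to show that any two distinct points of $\Z_d^{2n}$ can be separated by $W_{\hat\rho}$ for some density operator $\hat\rho$. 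The computational basis projector $|\vec{q}_0\rangle\langle\vec{q}_0|$ has Wigner function uniformly supported on the horizontal line $\{(\vec{q}_0,\vec{p}):\vec{p}\in\Z_d^n\}$, which separates any two points that differ in their $\vec{q}$-components; the momentum basis projectors supported on the analogous vertical lines separate any two points that differ in their $\vec{p}$-components. Together these cover every pair of distinct points, forcing $T\vec{v}+\vec{b}=\vec{v}$ for all $\vec{v}$; setting $\vec{v}=\vec{0}$ yields $\vec{b}=\vec{0}$, and then $T\vec{v}=\vec{v}$ for all $\vec{v}$ yields $T=I$.

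The step I expect to be the main obstacle is this last one---checking that density operators (rather than arbitrary Hermitian operators) already supply enough Wigner functions to separate points of the phase space. This is where the hypothesis that $d$ is an odd prime enters in a nontrivial way: in that setting the discrete Wigner transform is a linear isomorphism from operators on $(\C^d)^{\otimes n}$ to functions on $\Z_d^{2n}$, and the computational and momentum basis projectors used above exploit that richness directly.
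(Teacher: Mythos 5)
Your proof is correct and follows essentially the same route as the paper's: both reduce uniqueness to showing that a symplectomorphism representing a phase times the identity must be trivial, invoke the covariance property (Property~3), and separate distinct phase-space points using the Wigner functions of computational- and momentum-basis projectors. The only differences are cosmetic (you phrase it as a direct reduction via the group inverse rather than a proof by contradiction).
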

The proof of this corollary is included in Appendix A.

Therefore, given a quopit Clifford circuit $\cal{C}$, there is a sequence of symplectomorphisms of $\Z_d^{2n}$, $\Phi_{\mf C}= \{ \phi_k \}_{k=1}^N$, where each $\phi_k$ is the elementary symplectomorphism associated to one of the elementary gates composing $\cal{C}$. 
We will now see that it is possible to define generating functions for these symplectomorphisms using a theory of differential forms on the affine space $\Z_d^{2n}$.

\subsection{The discrete-time analogue of the action functional for discrete systems}
\label{genfundisc}
On first thought, one might think that one can only define symplectic structures on smooth manifolds. However, a careful examination of the material presented in Section \ref{genfunctions} shows that it was not the manifold structure itself which was important but rather the existence of an algebra of differential forms. To generalise to the symplectic vector space $\Z_d^{2n}$ it therefore suffices to construct an analogue of differential forms in this context. Fortunately, the well-known K\"ahler differentials in algebraic geometry were invented for exactly this purpose (\cite{Hartshorne} Chapter II.8)\footnote{To endow a vector space with an algebra of differential forms, it must first be endowed with further geometric structure. When defining differential forms on $\R^n$ we use that this is not just a vector space but also a smooth manifold. For the vector spaces $\Z_d^{2n}$ the appropriate geometric structure is that of an affine scheme.}. Rather then delve headfirst into the theory of K\"ahler differentials, in this Section we will instead give a concrete description which more than suffices for our purposes. For interested readers, more details about the underlying mathematical structure are provided in Appendix B.

One begins by considering the algebra of polynomials in $2n$ variables over $\Z_d$, 
\begin{equation}
\Z_d[\vec q, \vec p]\equiv\Z_d[q^{(1)},\ldots, q^{(n)}, p^{(1)},\ldots, p^{(n)}].
\end{equation}
Elements of $\Z_d[\vec q, \vec p]$ can be formally differentiated using the usual formulas for differentiating polynomial functions, except that one must remember to do all arithmetic operations modulo $d$. One can then define the algebra of K\"ahler differential forms on $\Z_d^{2n}$, denoted $\Omega(\Z_d^{2n})$, as follows.  They are $\Z_d$-linear combinations of terms of the form
\begin{equation}
 f_{i_1,\ldots, i_k, j_1,\ldots j_l} \dee q^{(i_1)} \land \cdots \land \dee q^{(i_k)} \land \dee p^{(j_1)}\land  \cdots \land \dee p^{(j_l)}, 
\end{equation}
subject to the same relations as the usual differential forms on $\R^{2n}$. One can similarly decompose the K\"ahler differentials as
\begin{equation}
\Omega(\Z_d^{2n}) = \oplus_{j=0}^{2n} \Omega^j(\Z_d^{2n}),
\end{equation}
where $\Omega^j(\Z_d^{2n})$ is the vector space of K\"ahler $j$-forms. Finally, there is a differential 
\begin{equation}
\dee: \Omega^j(\Z_d^{2n}) \to \Omega^{j+1}(\Z_d^{2n})
\end{equation}
which is defined just as it is for differential forms on $\R^{2n}$, except that the usual derivative is replaced with the formal derivative explained above.

Just as one does for $\rx^{2n}$, we can extend the symplectic inner product 
on $\Z_d^{2n}$, defined in Eq.~\eqref{sip}, by introducing a symplectic form $\omega \in \Omega^2(\Z_d^{2n})$,
\begin{equation}
\label{discsympform}
\omega = \sum_{i=1}^n \dee q^{(i)} \land \dee p^{(i)}.
\end{equation}
This form satisfies $\dee \omega=0$ and is nondegenerate in the sense outlined in Appendix B. While we will not need the latter property for this paper, we feel that the fundamental role it plays in the formulation of classical dynamics on symplectic manifolds warrants its proof for the affine symplectic spaces we are considering.

In this context, a morphism $\Z_d^{2n} \to \Z_d^{2n}$ is a function $(\vec{q},\vec{p}) \mapsto (\vec{Q},\vec{P})$ such that the components of $\vec{Q}$ and $\vec{P}$ can be written as polynomials in the components of $\vec{q}$ and $\vec{p}$ with coefficients in $\Z_d$. With this definition, we say that a morphism $(\vec{q},\vec{p}) \to (\vec{Q},\vec{P})$ is symplectic if and only if
\begin{equation}
\sum_{i=1}^n \dee q^{(i)} \land \dee p^{(i)} = \sum_{i=1}^n \dee Q^{(i)}(\vec{q}, \vec{p}) \land \dee P^{(i)}(\vec{q}, \vec{p}).
\end{equation}  
There is also a discrete canonical $1$-form $\theta = \sum_{i=1}^n p^{(i)} \dee q^{(i)}$ satisfying $-\dee \theta = \omega$. We can thus make the definition:
\begin{definition}
Given a symplectomorphism $\phi: (\vec{q}, \vec{p}) \mapsto (\vec{Q}, \vec{P})$ there is an $\varepsilon \in \Omega^1(\Z_d^{2n})$ such that
\begin{equation}
\label{ClGenFun}
\sum_{i=1}^n P^{(i)} \dee Q^{(i)} - \sum_{i=1}^N p^{(i)} \dee q^{(i)}=\varepsilon,
\end{equation}
and $\dee \varepsilon = 0$. If further $\exists \tilde{G}(\vec{q},\vec{p}) \in \Z_d[\vec q, \vec p]$ such that $\varepsilon = \dee \tilde{G}(\vec{q}, \vec{p})$ then we call $\tilde{G}(\vec{q}, \vec{p})$ a {\em generating function} associated to $\phi$. 
\end{definition}
This only defines the generating function up to addition by a constant. To remove this ambiguity we will choose the generating function to have no degree $0$ components.

Notice that there exist forms which are closed but not exact and so one cannot necessarily associate a generating function to each symplectomorphism $\phi$. We will see in Section \ref{DiscGenAction} the elementary quopit Clifford symplectomorphisms do indeed have associated generating functions.

Just as in the continuous case, it may be possible to rewrite the generating function $\tilde{G}(\vec{q},\vec{p})$ in terms of $\vec{q}$ and $\vec{Q}$. This can be done exactly when the polynomial expressions for $\vec{Q}=\vec{Q}(\vec{q},\vec{p})$ can be inverted to express $\vec{p}$ in terms of $\vec{q}$ and $\vec{Q}$. As we will see, for the affine symplectomorphism associated to elements of the elementary quopit Clifford gates, it is always possible to do this inversion. As such, from now on when we refer to the generating function, we mean $G(\vec{q}, \vec{Q}) := \tilde{G}(\vec{q},\vec{p}(\vec{q},\vec{Q}))$.

Finally, consider an $n$-quopit Clifford circuit consisting of a sequence of $N$ gates, so that the space of paths through configuration space is $\Z_d^{n(N+1)}$.
Let $\Phi=\{\phi_k\}_{k=1}^N$ be the sequence of symplectomorphisms of $\Z_d^{2n}$ associated to each gate, and denote the generating function associated to $\phi_k$ by
\begin{equation}
G_{\phi_k}(\vec{q}_{k-1},\vec{q}_{k}) \in \Z_d[\vec q_{k-1}, \vec q_k].
\end{equation}
Based on the proposed definition of action functional for discrete-time paths of discrete systems, presented in the introduction, the action functional for a quopit Clifford circuit is as follows.
\begin{definition}
\label{def:genfunctionaldiscrete}
The action functional over paths in $\Z_d^{n(N+1)}$ that is associated to the sequence $\Phi$ of symplectomorphisms of $\Z_d^{2n}$,  denoted $\gamma \mapsto S_{\Phi}(\gamma)$, is
\begin{equation}
S_{\Phi}(\vec{q}_{0},\dots,\vec{q}_{N}) \equiv \frac{2\pi}{d} \sum_{k=1}^N G_{\phi_k}(\vec{q}_{k-1},\vec{q}_{k}).
\end{equation}
\end{definition}

This definition of the action functional over discrete-time paths through the {\em discrete} configuration space $\Z_d^n$ is clearly the precise analogue of the definition of the action functional over discrete-time paths on a {\em continuous} configuration space provided in Definition~\ref{def:genfunctionalcont}.

Discrete systems admit a precise analogue of Proposition~\ref{actclasstraj} as well. 

\begin{proposition}
\label{actclasstraj_discrete}
For a sequence of symplectomorphisms $\Phi = \{\phi_k\}_{k=1}^N$, let a classical discrete-time trajectory through a discrete configuration space $(\Z_d)^{n}$ 
having initial configuration $\bt q_0$ and final configuration $\bt q_N$ be denoted $\gamma^{\rm cl}_\Phi(\bt q_0, \bt q_N)$.  The action functional of Definition~\ref{def:genfunctionaldiscrete}, evaluated on $\gamma^{\rm cl}_\Phi(\bt q_0, \bt q_N)$, yields, up to a factor of $\frac{2 \pi}{d}$, the generating function of the overall symplectomorphism $\Phi_N$;  $S_\Phi(\gamma^{\rm cl}_\Phi(\bt q_0, \bt q_N)) = \frac{2 \pi}{d} G_{\Phi_N}(\bt q_0, \bt q_N)$.
\end{proposition}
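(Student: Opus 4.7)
My plan is to mirror the proof of Proposition~\ref{actclasstraj} from the continuous case, with K\"ahler differentials on $\Z_d^{2n}$ playing the role of smooth differential forms on $\R^{2n}$. The algebra-geometry correspondence set up in Section~\ref{genfundisc} ensures that the formal manipulations transfer intact, with the exterior derivative interpreted as the formal derivative on $\Z_d[\vec q, \vec p]$.

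First I would lift the action functional from configuration space to phase space by defining
\begin{equation*}
\tilde{S}_\Phi(\vec q_0, \vec p_0, \ldots, \vec q_{N-1}, \vec p_{N-1}) \equiv \frac{2\pi}{d} \sum_{k=1}^N \tilde{G}_{\phi_k}(\vec q_{k-1}, \vec p_{k-1}),
\end{equation*}
so that $\tilde{S}_\Phi(\Gamma) = S_\Phi(\Gamma^q)$ for any phase-space path $\Gamma$; it then suffices to prove the identity for $\tilde{S}_\Phi$ evaluated on the classical phase-space trajectory $\Gamma^{\rm cl}_\Phi(\vec q_0, \vec p_0)$, viewed as a polynomial identity in $\Z_d[\vec q_0, \vec p_0]$, and finally substitute the unique $\vec p_0 = \vec p_0(\vec q_0, \vec q_N)$ consistent with the boundary conditions. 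The core step is the telescoping identity in $\Omega^1(\Z_d^{2n})$. Using the shorthand $\vec x \dee \vec y = \sum_i x^{(i)} \dee y^{(i)}$ and writing $\Phi_0 = \mathrm{id}$, one has
\begin{equation*}
\dee \tilde{G}_{\Phi_N}(\Gamma_0) = \vec \Phi_N(\Gamma_0)^p \dee \vec \Phi_N(\Gamma_0)^q - \vec p_0 \dee \vec q_0 = \sum_{k=1}^N \left( \vec \Phi_k(\Gamma_0)^p \dee \vec \Phi_k(\Gamma_0)^q - \vec \Phi_{k-1}(\Gamma_0)^p \dee \vec \Phi_{k-1}(\Gamma_0)^q \right),
\end{equation*}
and Eq.~\eqref{ClGenFun} applied at each step identifies the $k$-th summand as $\dee \tilde{G}_{\phi_k}$ evaluated at $\vec \Phi_{k-1}(\Gamma_0)$. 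This reduces the claim to an equality of two K\"ahler $1$-forms on $\Z_d^{2n}$.

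The concluding step is to lift this agreement of differentials to agreement of the underlying polynomials, and I expect this to be the main obstacle. In characteristic $d$ the kernel of $\dee : \Z_d[\vec q, \vec p] \to \Omega^1(\Z_d^{2n})$ contains more than constants: any polynomial whose monomials have every exponent divisible by $d$, such as $q^d$ or $p^d q^d$, is annihilated by the formal derivative. Fortunately, the elementary quopit Clifford symplectomorphisms are all affine (see Section~\ref{Symprep}), and compositions of affine symplectomorphisms remain affine, so each $\tilde{G}_{\Phi_N}$ is a polynomial of total degree at most two in $\vec q$ and $\vec p$. Since $d \geq 3$ for odd primes, the problematic kernel elements have degree at least $d \geq 3$ and cannot contribute, so together with the convention of discarding any degree-zero component the generating function is pinned down uniquely and the differential identity promotes to the polynomial identity claimed by the proposition.
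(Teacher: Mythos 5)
Your proposal follows the same route as the paper: the paper's proof of Proposition~\ref{actclasstraj_discrete} is simply the remark that K\"ahler differentials can be manipulated exactly like ordinary differentials, so the telescoping argument of Proposition~\ref{actclasstraj} carries over with factors of $\frac{2\pi}{d}$ inserted. Where you go beyond the paper is in the final step. The continuous proof concludes that two functions with equal differentials differ by a constant, and absorbs that constant into the ambiguity in defining generating functions; you correctly observe that this inference is \emph{not} automatic over $\Z_d$, since the kernel of the formal derivative $\dee:\Z_d[\vec q,\vec p]\to\Omega^1(\Z_d^{2n})$ contains all polynomials in $d$-th powers of the variables, not merely constants. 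Your repair---that the quopit Clifford symplectomorphisms and their compositions are affine, so every generating function in sight has total degree at most $2 < d$, forcing the difference to be an honest constant which the degree-zero convention then eliminates---is exactly the argument needed to make the paper's one-line proof rigorous, and it is a point the paper leaves entirely implicit. One small caveat: substituting the affine maps $\Phi_{k-1}(\Gamma_0)$ into the degree-normalized $\tilde G_{\phi_k}$ can reintroduce constant terms into the sum, so strictly speaking the two sides agree only up to an additive constant; but this residual ambiguity is the same one the paper accepts in the continuous case, so your argument is at least as rigorous as the original.
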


Because K\"{a}hler differential can be manipulated in precisely the same way as standard differentials, the proof follows precisely that of Proposition~\ref{actclasstraj} with the insertion of appropriate factors of $\frac{2 \pi}{d}$.  

It remains to determine the precise form of the generating functions for the gates in the generating set of a quopit Clifford circuit.


\subsection{Symplectomorphisms and generating functions for quopit Clifford gates}
\label{DiscGenAction}
In Section \ref{Symprep}, we saw how the discrete Wigner transform provides a representation of a given $n$-quopit $N$ time-step Clifford circuit $\mf C$ on the discrete phase space  $\Z_d^{2n}$ in terms of a sequence of symplectomorphisms $\Phi = \{ \phi_k\}_{k=1}^N$. Further, in Section \ref{genfundisc} we defined the action functional $S_{\Phi}$ of such a sequence in terms of generating functions of the individual elements $\phi_k$. To show that the phase functional $S_{\mf C}(\gamma)$ appearing in the path sum agrees with the action functional $S_{\Phi}(\gamma)$, it remains only to compute the symplectomorphisms and generating functions of the elementary quopit Clifford gates. 

\begin{lemma}
\label{ClElemSymp}
The following symplectomorphisms are associated to the elementary quopit Clifford gates:
\begin{itemize}
\item $\phi_{\hat{F}}:(q,p)\mapsto(-p,q)$
\item  $\phi_{\hat{R}}:(q,p)\mapsto (q,p+q-2^{-1})$
\item $\phi_{\hat{\Sigma}}:(q^{(1)},q^{(2)},p^{(1)},p^{(2)}) \mapsto(q^{(1)},q^{(1)}+q^{(2)},p^{(1)}-p^{(2)},p^{(2)})$
\end{itemize}
These will be called the elementary quopit Clifford symplectomorphisms.
\end{lemma}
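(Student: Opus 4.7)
My plan is to invoke the covariance property (part 3 of Proposition \ref{GrossSymp}) together with the uniqueness assertion in the Corollary immediately following it. Uniqueness implies that the affine symplectomorphism $(S, \vec a)$ associated to a Clifford unitary $\hat U$ is fixed by the action of $\hat U$ on any collection of Wigner functions whose supports jointly pin down both the linear part $S$ and the translation $\vec a$. So, for each of the three elementary gates, I would apply the gate to a small number of test states with easily-computed Wigner functions (computational and momentum basis elements, or states obtained from them by a single Fourier), compute the Wigner function of the image by direct evaluation of its definition, and then read off the symplectomorphism from $W_{\hat U \hat \rho \hat U^\dag}(\phi(\vec v)) = W_{\hat \rho}(\vec v)$.

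For $\hat F$, a direct computation using the matrix form of \eqref{def:CliffGateSet} shows that $\hat F|q_0\rangle = \frac{1}{\sqrt{d}}\sum_Q \chi(Qq_0)|Q\rangle$ has Wigner function uniform on the horizontal line $p = q_0$, whereas $|q_0\rangle\langle q_0|$ has Wigner function uniform on the vertical line $q = q_0$. Covariance then forces $\phi_{\hat F}$ to send the vertical line $q = q_0$ to the horizontal line $p = q_0$; combined with an analogous computation starting from a momentum-basis eigenstate (or with the observation that $\hat F^2$ implements $|q\rangle \mapsto |{-q}\rangle$, so that $\phi_{\hat F}^2$ must be $(q,p)\mapsto(-q,-p)$), this leaves only the linear map $(q,p)\mapsto (-p,q)$. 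For $\hat\Sigma$, its action in the computational basis, $|q_1, q_2\rangle \mapsto |q_1, q_1+q_2\rangle$, immediately fixes the $q$-components of $\phi_{\hat\Sigma}$. The $p$-components can be fixed by the same method applied to two-quopit momentum eigenstates, or, more conveniently, by computing the four conjugations $\hat\Sigma \chi(\hat q^{(i)}) \hat\Sigma^\dag$ and $\hat\Sigma \chi(\hat p^{(i)}) \hat\Sigma^\dag$ directly in the computational basis: each yields another Pauli displacement whose phase-space vector is the image of the original under the linear part $S$.

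For $\hat R$, diagonality in the computational basis shows that each vertical line $q = q_0$ is stabilized, so $\phi_{\hat R}$ must take the form $(q, p) \mapsto (q, p + f(q))$ for some $f$. To determine $f$, I would compute the Wigner function of $\hat R |+\rangle$, with $|+\rangle := \hat F|0\rangle$, whose Wigner function is uniform on the line $p = 0$. The phase inside the Wigner integral for $\hat R |+\rangle\langle +|\hat R^\dag$ is $\chi$ evaluated at
\[
2^{-1}\bigl[(q+2^{-1}x)(q+2^{-1}x-1) - (q-2^{-1}x)(q-2^{-1}x-1)\bigr],
\]
which, after the $q^2$ and $x^2$ terms cancel, collapses to $x(q - 2^{-1})$. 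Summing $\bar\chi(xp)\chi(x(q - 2^{-1}))$ over $x \in \Z_d$ produces a Kronecker delta at $p = q - 2^{-1}$, forcing $f(q) = q - 2^{-1}$ and giving $\phi_{\hat R}(q,p) = (q, p + q - 2^{-1})$. The main obstacle is purely bookkeeping: several sign and convention choices (the direction of the Fourier gate in \eqref{def:CliffGateSet}, the momentum-eigenstate convention implicit in Section \ref{Symprep}, and the handling of $2^{-1}$ modulo the odd prime $d$) have to be tracked consistently so that the minus signs in $\phi_{\hat F}$ and $\phi_{\hat\Sigma}$, and the $-2^{-1}$ shift in $\phi_{\hat R}$, land exactly as stated.
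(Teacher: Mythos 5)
Your proposal is correct, and it reaches the stated symplectomorphisms by a genuinely different route than the paper. The paper's proof is a \emph{verification}: it writes down the candidate $\phi$ for each gate and checks the covariance identity $W_{\hat{U}\hat{\rho}\hat{U}^\dag}(\phi(\vec v)) = W_{\hat{\rho}}(\vec v)$ for an \emph{arbitrary} density operator $\hat{\rho}$ by direct manipulation of the Wigner transform. You instead \emph{derive} $\phi$ from finitely many probes: existence of an affine symplectomorphism (property 2 of Proposition \ref{GrossSymp}) plus covariance (property 3) forces the support of $W_{\hat{U}\hat{\rho}\hat{U}^\dag}$ to be the image under $\phi$ of the support of $W_{\hat{\rho}}$, so computing the images of a few phase-space lines, together with the symplecticity constraint (e.g.\ preservation of each vertical line forces the form $(q,p)\mapsto(q,p+\beta q+\gamma)$ for $\hat R$) and the uniqueness corollary, pins $\phi$ down. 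Your computations check out: $\hat F|q_0\rangle$ is indeed supported on $p=q_0$, the $\hat F^2 = $ parity argument does eliminate the residual two-parameter family $(q,p)\mapsto(-p+\alpha q+\beta,\,q)$, and the phase in the Wigner integral for $\hat R\hat F|0\rangle$ does collapse to $\chi(x(q-2^{-1}))$. What your approach buys is less algebra per gate (test states rather than arbitrary $\hat\rho$) and a derivation rather than a post-hoc check; what it costs is heavier reliance on properties 1--2 and the uniqueness corollary, and the convention-tracking you rightly flag. One small caution: for $\hat\Sigma$, the ``more convenient'' alternative of conjugating Pauli displacements determines only the linear part $S$ (the translation cancels in $\mu(S,\vec a)\mu(\mathds{1},\vec u)\mu(S,\vec a)^{-1}$), so it cannot by itself exclude a translation in the momentum directions; your primary route via two-quopit momentum eigenstates does close this, so the argument as a whole is complete.
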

\begin{proof}
These are determined by direct calculation using property $3$ of Proposition~\ref{GrossSymp}, making frequent use of the identity,
\begin{equation}
\sum_{\zeta \in \Z_d} \chi(\zeta \cdot x) = d \delta_{x,0}.
\end{equation}
\begin{itemize}
 \item For the $\hat{F}$ gate, one has,
 \be{W_{\hat{F} \hat{\rho} \hat{F}^\dag} (-p,q) &=& \frac{1}{d}\sum_x \overline{\chi}(xq) \left\langle -p + 2^{-1} x \left| \left( \frac{1}{\sqrt{d}}\sum_{a,b} \chi(ab) |a \rangle \langle b| \right) \hat{\rho} \left( \frac{1}{\sqrt{d}}\sum_{s,t} \overline{\chi}(st)|s \rangle \langle t| \right) \right| -p-2^{-1} x \right\rangle \nn
&=& \frac{1}{d^2} \sum_{x,b,s} \overline{\chi}(x(q-2^{-1}(b+s)) \overline{\chi}(p(b-s)) \langle b | \hat{\rho} | s \rangle \nn
&=& \frac{1}{d} \sum_{b,s} \overline{\chi}(p(b-s)) \delta_{q,2^{-1}(b+s)} \langle b | \hat{\rho} | s \rangle \nn
&=& \frac{1}{d} \sum_{x,s} \overline{\chi}(px) \delta_{q,2^{-1}(x+2s)} \langle x + s| \hat{\rho}| s \rangle \nn
&=& \frac{1}{d} \sum_{x} \overline{\chi}(px) \langle q+ 2^{-1} x | \hat{\rho} | q-2^{-1} x \rangle = W_{\hat{\rho}} (q,p), }
where in the second last line, we used the change of variables $x=b-s$.
\item For the $\hat{R}$ gate, one has
\begin{align} 
& W_{\hat{R} \hat{\rho} \hat{R}^\dag} (q, p+q-2^{-1}) \nonumber \\
&= \frac{1}{d} \sum_x \overline{\chi}(x(p+q-2^{-1})) \left\langle q+2^{-1}x \left| \left( \sum_s \chi(2^{-1}s(s-1)) |s \rangle \langle s| \right) \hat{\rho} \left( \sum_t \overline{\chi}(2^{-1}t(t-1))|t\rangle \langle t| \right) \right| q-2^{-1}x \right\rangle \nonumber \\
&= \frac{1}{d} \sum_{x,s,t} \overline{\chi}\left(x(p+q-2^{-1})-2^{-1}s(s-1)+2^{-1}t(t-1)\right) \left\langle q+2^{-1} x | s \rangle \langle s | \hat{\rho} | t\rangle \langle t| q-2^{-1} x \right\rangle \nonumber \\
&= \frac{1}{d} \sum_x \overline{\chi}(x p) \left\langle q+2^{-1} x \left| \hat{\rho} \right| q-2^{-1} x \right\rangle = W_{\hat{\rho}}(q,p).
\end{align}
\item For the $\hat{\Sigma}$ gate, one has
\begin{eqnarray}
 &&\mkern-60mu W_{\hat{\Sigma} \hat{\rho} \hat{\Sigma}^\dag}(q^{(1)}, q^{(1)}+q^{(2)}, p^{(1)}-p^{(2)}, p^{(2)}) \nonumber \\ &=& \frac{1}{d^2} \sum_{x^{(1)},x^{(2)}} \overline{\chi}\left(x^{(1)}(p^{(1)}-p^{(2)})+x^{(2)} p^{(2)}\right)  \big\langle q^{(1)}+2^{-1} x^{(1)},q^{(1)}+q^{(2)}+2^{-1}x^{(2)} \big|
 \nonumber\\
 &&\quad\times \left( \sum_{a,b} | a,a+b \rangle  \langle a,b | \right) \hat{\rho} \left( \sum_{s,t} | s,t \rangle \langle s,s+t | \right)
  \big| q^{(1)}-2^{-1}x^{(1)},q^{(1)}+q^{(2)}-2^{-1}x^{(2)} \big\rangle \nonumber \\
 &=& \frac{1}{d^2} \sum_{x^{(1)},x^{(2)}} \overline{\chi}\left(x^{(1)}p^{(1)} + (x^{(2)}-x^{(1)})p^{(2)}\right)  \nonumber \\
 &&\quad \times \left\langle q^{(1)}+2^{-1}x^{(1)}, q^{(2)}+2^{-1}(x^{(2)}-x^{(1)}) \middle| \hat{\rho} \middle| q^{(1)}-2^{-1}x^{(1)}, q^{(2)}-2^{-1}(x^{(2)}-x^{(1)})\right\rangle \nonumber \\
 &=& \frac{1}{d^2} \sum_{x^{(1)},\tilde{x}^{(2)}} \overline{\chi}\left(x^{(1)}p^{(1)} + \tilde{x}^{(2)}p^{(2)}\right) \left\langle q^{(1)}+2^{-1}x^{(1)}, q^{(2)}+2^{-1}\tilde{x}^{(2)} \middle| \hat{\rho} \middle| q^{(1)}-2^{-1}x^{(1)}, q^{(2)}-2^{-1}\tilde{x}^{(2)}\right\rangle \nonumber \\
 &=& W_{\hat{\rho}}(q^{(1)},q^{(2)},p^{(1)},p^{(2)}),
\end{eqnarray}
where in the third line, we have made the substitution $\tilde{x}^{(2)} = x^{(2)}-x^{(1)}$. 

\end{itemize}
\end{proof}

\begin{lemma}
\label{ClElemGen}
The following are the generating functions of the symplectomorphisms associated to the elementary quopit Clifford gates:
\begin{itemize}
\item  $G_{\phi_{\hat{F}}}(q,Q) = qQ$;
\item  $G_{\phi_{\hat{R}}}(q,Q) = 2^{-1}q(q-1)$;
\item $G_{\phi_{\hat{\Sigma}}}(q^{(1)},q^{(2)},Q^{(1)},Q^{(2)}) = 0$.
\end{itemize}
\end{lemma}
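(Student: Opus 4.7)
The plan is to prove the lemma by direct computation, exactly paralleling the proof of Lemma~\ref{CVClElemGen} for the CV case but working with K\"{a}hler differentials on $\Z_d[\vec q, \vec p]$ rather than smooth differential forms on $\R^{2n}$. For each elementary quopit Clifford symplectomorphism $\phi:(\vec q,\vec p)\mapsto(\vec Q,\vec P)$ listed in Lemma~\ref{ClElemSymp}, I substitute the polynomial expressions for $\vec Q$ and $\vec P$ into the defining equation~\eqref{ClGenFun}, namely $\sum_i P^{(i)}\dee Q^{(i)}-\sum_i p^{(i)}\dee q^{(i)}=\dee\tilde G(\vec q,\vec p)$, simplify using the standard algebraic rules of $\Omega(\Z_d^{2n})$ (all arithmetic modulo $d$, with $2^{-1}$ the multiplicative inverse in $\Z_d$), read off $\tilde G(\vec q,\vec p)$ as an antiderivative, and finally invert the polynomial relation $\vec Q=\vec Q(\vec q,\vec p)$ to eliminate $\vec p$ in favour of $(\vec q,\vec Q)$, producing $G_\phi(\vec q,\vec Q)$.

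Concretely, for $\phi_{\hat F}$ one has $Q=-p$, $P=q$, so $P\dee Q-p\dee q=-q\dee p-p\dee q=\dee(-qp)=\dee(qQ)$, giving $G_{\phi_{\hat F}}(q,Q)=qQ$. For $\phi_{\hat R}$ one has $Q=q$, $P=p+q-2^{-1}$, so $P\dee Q-p\dee q=(q-2^{-1})\dee q=\dee\bigl(2^{-1}q(q-1)\bigr)$, giving $G_{\phi_{\hat R}}(q,Q)=2^{-1}q(q-1)$. For $\phi_{\hat\Sigma}$, substituting $Q^{(1)}=q^{(1)}$, $Q^{(2)}=q^{(1)}+q^{(2)}$, $P^{(1)}=p^{(1)}-p^{(2)}$, $P^{(2)}=p^{(2)}$ into the sum causes all terms to cancel, giving $G_{\phi_{\hat\Sigma}}=0$.

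There is no serious obstacle here: because all three symplectomorphisms are affine with polynomial components of degree at most one in $\vec p$, the $1$-form $\varepsilon$ is explicitly exact as a polynomial differential, the inversion $\vec p=\vec p(\vec q,\vec Q)$ is straightforward (trivial for $\hat R$ and $\hat\Sigma$, and linear for $\hat F$ where $p=-Q$), and no subtlety of the K\"{a}hler formalism beyond the formal Leibniz rule and $\dee\circ\dee=0$ (already established in Appendix~B and Section~\ref{genfundisc}) is invoked. The only points worth flagging in the write-up are that $2$ is invertible in $\Z_d$ since $d$ is an odd prime (so $2^{-1}$ makes sense in the coefficient ring $\Z_d$), and that by the convention fixed just after the definition of the generating function in Section~\ref{genfundisc} we discard the overall constant of integration so that $\tilde G$ has no degree-zero component, matching the stated expressions.
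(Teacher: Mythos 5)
Your proposal is correct and matches the paper's proof, which is exactly the same direct computation from the defining relation for the generating function (the paper simply states that the result "follows by direct computation from the definition," in parallel with the CV case of Lemma~\ref{CVClElemGen}). Your explicit verifications for $\phi_{\hat F}$, $\phi_{\hat R}$, and $\phi_{\hat\Sigma}$ check out, and your remarks about the invertibility of $2$ in $\Z_d$ and the convention of dropping the degree-zero component are consistent with the conventions fixed in Section~\ref{genfundisc}.
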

\begin{proof}
Much like Lemma \ref{CVClElemGen}, this follows by direct computation from the definition in \eq{ClGenFun}.
\end{proof}

 As before, instances of the identity gate correspond to the identity symplectomorphism $( \tt q, \tt p) \mapsto (\tt q, \tt p)$ and have generating function equal to 0, and if gates act in parallel on different systems, the generating function for the overall gate is simply the sum of the generating functions of the component gates.

 Let $\mf C$ be an $n$-system quopit Clifford circuit constituting a sequence of $N$ time-steps, and let $\Phi=\{ \phi_k\}_{k=1}^N$ denote the sequence of  symplectomorphisms of $\Z_d^{2n}$ associated to the circuit. Using the quopit Clifford symplectomorphisms described in Lemma \ref{ClElemSymp}, and Definition \ref{def:genfunctionaldiscrete}, the action functional over discrete-time paths associated to $\Phi$, denoted $\gamma \mapsto \tt S_{\Phi}(\gamma)$, is the sum of the generating functions associated to these symplectomorphisms. Specifically, Lemma  \ref{ClElemGen} implies that
\begin{equation}
 \tt S_{\Phi}(\gamma) =  \frac{2\pi}{d} \left( - \sum_{\hat{F} \ {\rm gates}} q({ \rm gate}) Q({\rm gate}) + \sum_{\hat{R} \ {\rm gates}} 2^{-1} q({\rm gate}) ( q({\rm gate}) - 1) \right),
\end{equation}
which clearly corresponds to the phase factor in \eq{quopitphasefactor}.

\begin{theorem}
\label{MainDisc}
Consider an $n$-quopit Clifford circuit $\mf C$, associated in quantum theory with a unitary $\htt U \in C_{d,n}$ 
and associated, in its classical counterpart, to a symplectomorphism $\Phi$.
The functional $\tt S_{\Phi}(\gamma)$ that specifies, via Definition~\ref{def:genfunctionaldiscrete}, the action of the discrete-time path $\gamma$
through the classical counterpart of the circuit  is precisely equal to the functional $\tt S_{\mf C}(\gamma)$ that defines the phase assigned to $\gamma$ in the sum-over-paths expression for the transition amplitude of the quantum circuit.
\end{theorem}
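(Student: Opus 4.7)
The plan is to prove Theorem~\ref{MainDisc} by direct substitution and term-by-term comparison. All of the mathematical content has already been assembled in Section~\ref{Clifford}: Proposition~\ref{GrossSymp} together with its corollary associates a unique affine symplectomorphism to each quopit Clifford unitary, Lemma~\ref{ClElemSymp} writes out these symplectomorphisms on the elementary gate set, Lemma~\ref{ClElemGen} computes their generating functions in the sense of K\"ahler differentials, and Theorem~\ref{penult} gives the explicit sum-over-paths expression for $S_{\mf C}(\gamma)$. The theorem thus reduces to an accounting step rather than requiring any new machinery.

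First, for a fixed quopit Clifford circuit $\mf C$ of depth $N$, let $\Phi = \{\phi_k\}_{k=1}^N$ be the associated sequence of symplectomorphisms on $\Z_d^{2n}$. I would observe that at each time-step $k$, $\phi_k$ is a product of elementary Clifford symplectomorphisms acting on disjoint subsets of the $n$ quopits. Because the defining $1$-form $\varepsilon$ in Eq.~\eqref{ClGenFun} is additive across disjoint phase-space coordinates, the generating function of $\phi_k$ is the sum of the generating functions of its constituent elementary pieces. Reorganising the double sum over time-steps and gates-within-a-time-step into a single sum over every elementary gate of $\mf C$, I obtain
\begin{equation*}
\sum_{k=1}^N G_{\phi_k}(\vec q_{k-1}, \vec q_k) = \sum_{\text{gates of }\mf C} G_{\phi_{\text{gate}}}\bigl(q(\text{gate}), Q(\text{gate})\bigr).
\end{equation*}

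Next, I would substitute the generating functions computed in Lemma~\ref{ClElemGen}: identity and $\hat\Sigma$ gates contribute nothing, each $\hat F$ gate contributes the bilinear $q(\text{gate}) Q(\text{gate})$, and each $\hat R$ gate contributes the quadratic $2^{-1} q(\text{gate})\bigl(q(\text{gate})-1\bigr)$. Multiplying through by the factor $2\pi/d$ supplied by Definition~\ref{def:genfunctionaldiscrete}, which converts the $\Z_d$-valued action into the real-valued phase that actually enters the amplitude, and comparing to the explicit expression \eqref{quopitphasefactor} for $S_{\mf C}(\gamma)$, the two functionals agree term-by-term as polynomials in the configuration coordinates evaluated along $\gamma$. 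This completes the proof.

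The only vigilance required is in the sign and coefficient conventions that propagate across Definition~\ref{def:genfunctionaldiscrete}, Lemma~\ref{ClElemGen}, and Theorem~\ref{penult} — in particular one must confirm that the convention adopted in Eq.~\eqref{ClGenFun} for the direction of $\varepsilon$ and the overall $2\pi/d$ factor match the phase read off from the matrix elements of $\hat F$ and $\hat R$. The genuinely substantive steps — establishing that K\"ahler differentials yield a well-defined symplectic calculus on $\Z_d^{2n}$, demonstrating that each elementary Clifford symplectomorphism admits an honest polynomial generating function whose position variables $q,Q$ are independent, and using discrete Wigner covariance to fix the symplectomorphism attached to each Clifford unitary uniquely — have all been discharged earlier in this section, so no further mathematical obstacle stands between these ingredients and the stated identity.
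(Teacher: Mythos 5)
Your proof is correct and is essentially identical to the paper's own argument: the paper likewise observes that generating functions add over parallel gates and over time-steps, substitutes the values from Lemma~\ref{ClElemGen} (with identity and $\hat\Sigma$ gates contributing zero), and compares term-by-term with the expression for $S_{\mf C}(\gamma)$ in Theorem~\ref{penult}. The sign vigilance you flag at the end is warranted---Lemma~\ref{ClElemGen} gives $G_{\phi_{\hat F}}(q,Q)=+qQ$ while Eq.~\eqref{quopitphasefactor} carries $-\sum q({\rm gate})Q({\rm gate})$, a discrepancy already present in the paper's own bookkeeping---but it does not affect the structure of your argument, which matches the paper's.
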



\section{Concluding Remarks}\label{Conclusion}

In order to evaluate whether one can express quantum circuit dynamics by a sum-over-paths expression wherein the relative phases of paths are determined by a classical action functional, we began this article with a proposal for how to define an action functional for discrete-time classical dynamics.  
Such a proposal was also made in the work of Baez and Gilliam~\cite{Baez}, who sought to provide a Lagrangian formulation of discrete-time dynamics for discrete degrees of freedom.  It is currently unclear, however, precisely how our proposal relates to theirs.  
Their approach bears many similarities to ours, in that it has its foundations in algebraic geometry and the theory of K\"ahler differentials. Indeed, their article was a significant source of inspiration for the current work. The action functionals they consider take the same mathematical form as ours, namely a sum of polynomials in variables describing adjacent time steps.  However, their action functionals are used to define a discrete-time version of the Euler-Lagrange equations, whereas we associate ours to a Hamiltonian description of the discrete-time dynamics. 
A natural strategy for future attempts to relate the two approaches is to develop a generalisation of the Legendre transform for discrete-time dynamics. 

The sum-over-paths methodology has demonstrated its utility for proving relationships between quantum complexity classes. Just as it was applied by Dawson {\em et al.} to prove upper bounds on the power of arbitrary quantum circuits, so it can be applied, using the results of this paper, to prove upper bounds on the power of Clifford circuits. For instance, as we show in \cite{GK}, Theorem \ref{penult} can be used to provide an alternative proof of (a variant of) the Gottesman-Knill Theorem for quopit Clifford circuits, which states that Clifford circuits can be efficiently simulated by a classical computer \cite{gottesman1998heisenberg}. More precisely, the variant that we are able to prove in \cite{GK} states that Clifford circuits can be efficiently simulated in the $\strong(n)$ sense\footnote{A $\strong(n)$-simulation is a special case of a $\strong(f(n))$-simulation, obtained by setting $f(n)=n$. As defined in \cite{Koh}, a $\strong(f(n))$-simulation of a set of computational tasks is a deterministic classical algorithm that on input $\langle T, I, y_{|I|}\rangle$, where $T$ is a description of a quantum circuit on $n$ registers, $I = \{i_1,\ldots, i_{f(n)}\} \subseteq [n]$ and $y_{|I|} = \{y_{i_1}, \ldots, y_{i_{f(n)}}\}$, outputs the probability $p_T^{I} (y_{i_1},\ldots, y_{i_{f(n)}})$ that the outcome $y_{|I|}$ is observed.}, where a $\strong(n)$-simulation is defined to be a deterministic classical algorithm which takes as input a pair $\langle T, y\rangle$, where $T$ is a description of a quantum circuit on $n$ registers and $y \in \mathbb F_p^n$ is a length $n$-string, and outputs the probability $p_T (y)$ that the outcome $y$ is observed \cite{Koh}. Note that this result is weaker than the strong form of the Gottesman-Knill Theorem \cite{van2010classical}, which requires that the classical simulation compute not just the joint probabilities, but also any arbitrary marginal probabilities.

One can also consider other restricted models of quantum circuits, like matchgate circuits, which are circuits consisting of a certain class of two-qubit gates. Such circuits have been shown to be classically simulable under particular conditions \cite{Valiant, Jozsa, Brod}. An open question raised in \cite{Brod} is whether 
one can understand the classical simulability of such circuits via a classical hidden-variable model.  In light of our work, a further question that can be asked is whether such circuits can be described in terms of a classical action, if one describes such circuits using the sum-over-paths formulation.
 

 \color{black}

 When considering discrete variables, we have focused on Clifford circuits for systems having dimensions which are primes larger than $2$. As {\em qubit} Clifford circuits are the most familiar for those in the quantum computing community, it is natural to ask to what extent our results extend to this case.  The elementary gates for qubit Clifford circuits are
 \be{
\hat{H} &=& \frac{1}{\sqrt{2}} \sum_{q,q' \in \Z_2} (-1)^{qq'} |q \rangle \langle q'|, \nn
\hat{S} &=& \sum_{q \in \Z_2} i^q |q \rangle \langle q|, \nn
\CNOT &=& \sum_{q,q' \in \Z_2} |q,q+q' \rangle \langle q,q'|,
}
which are known as the Hadamard, Phase and CNOT gates, respectively. Because these gates are balanced, one can write a sum-over-paths expression for the transition amplitudes as in Eq.~\eqref{PathSumExpression}. One can readily see that the phase factor of a path through such a circuit is of the form $$e^{\frac{2 \pi i}{4} S(\gamma)}$$ where
\begin{equation}\label{Sforqubit}
S(\gamma) = \sum_{\hat{H} \ {\rm gates}} 2 q({\rm gate}) Q({\rm gate}) + \sum_{\hat{S} \ {\rm gates}} q({\rm gate}) 
\end{equation}
is a polynomial with coefficients in the ring $\Z_4$. The surprise here is that this is a polynomial over $\Z_4$ rather than $\Z_2$.  
It is surprising because the classical configuration variable associated to the computation basis of a qubit, which defines the space of paths in the path-sum, takes its values in $\Z_2$ rather than  $\Z_4$.

 Related to this fact, there is a significant obstruction to understanding $S(\gamma)$ as an action functional for a classical counterpart to qubit Clifford circuits. For quopit Clifford circuits, we obtained the classical counterpart by looking at Gross's discrete Wigner representation~\cite{Gross}. The property of this representation that we exploited is its so-called {\em Clifford covariance}, proven by Gross and recalled in our Proposition~\ref{GrossSymp}. It is due to this covariance property that one can represent the elementary gates by symplectomorphisms of the discrete phase space. 
However, the Wigner representation introduced by Gross is not defined for qubit systems. And while there are alternate approaches to Wigner representations of qubit systems, such as those introduced by Gibbons {\em et al.}~\cite{Wootters}, these representations are only required to be covariant under phase-space displacements, that is, unitaries generated by the Weyl operators $\{ e^{ip\hat{q}}: p \in \mathbb{R}\}$ and $\{ e^{iq\hat{p}}: q \in \mathbb{R}\}$,
 rather than the full Clifford group. Covariance under the full Clifford group is a very strong requirement to place on a Wigner representation. For $n$ quopit systems, where $d$ denotes the dimension, there are $(d^n)^{d^n +1}$ distinct Wigner representations that are covariant under phase-space displacements, but Gross's Wigner representation is the unique representation which is Clifford covariant. For qubits, it is an even stronger requirement; indeed, it has recently been shown by Zhu~\cite{Zhu} that it is {\em impossible} to define a Clifford-covariant Wigner representation for qubits. 

This rules out using a Wigner representation of qubit Clifford circuits to determine their classical counterpart. While we have no alternate proposal at present, the algebra-geometry correspondence gives some hint as to what the structure of the discrete phase space must be. In particular, since $S(\gamma)$ is a polynomial with coefficients in $\Z_4$, the discrete phase space should be some space over $\Z_4$. Further, since in the sum over paths one only sums over the 
 amplitude 
on paths with configuration variables that take their values in $\Z_2$, 
the space will not simply be $(\Z_4)^{2n}$, but rather something more complicated such as a subspace or quotient thereof. Indeed, phase spaces of this sort have been previously considered in the literature~\cite{Blasiak}, . Wallman and Bartlett~\cite{BartWall} define a positive quasi-probability representation of single qubit Clifford circuits having underlying phase space
\begin{equation}
 \{(q,p) \ | \ q\in \{0,2\}\subset \Z_4, \ p \in \Z_4\}.
\end{equation}
Unfortunately the permutation underlying the phase gate cannot be written as a polynomial map of $(q,p)$ and hence it is not possible to apply the techniques presented in this paper. Nonetheless, exploring spaces over $\Z_4$ as phase spaces for qubit Clifford circuits does suggest a new line of inquiry, 
namely, for each $n\geq 0$ one could look for a symplectic space over $\Z_4$ carrying an action of the $n$-qubit Clifford group such that the associated action functional is the one appearing in the expression for the amplitude of a path.

\black



\section*{Acknowledgements} 

RWS acknowledges Stephen Bartlett for useful discussions at an early stage of this project.  The research was begun while DEK and MDP were completing the Perimeter Scholars International Masters program at Perimeter Institute and benefitted from subsequent visits, in particular, on the occasion of the 2015 Convergence conference. \black Research at Perimeter Institute is supported in part by the Government of Canada through NSERC and by the Province of Ontario through MRI. 

DEK is supported by the National Science Scholarship from the Agency for Science, Technology and Research (A*STAR).  
MDP is supported by NSERC through the Doctoral Postgraduate Scholarship and by Corpus Christi College, Oxford.

\newpage
\appendix

\section{Proof of Corollary 1}

The proof proceeds by assuming that there are two distinct symplectomorphisms, $(S, \vec{a})$ and $(T, \vec{b})$, associated to the same unitary $\hat{U} \in C_{d,n} $, such that $\hat{U}=\mu(S,\vec{a}) = e^{i \theta} \mu(T, \vec{b})$, and deriving a contradiction.

Note that if the unitary $\hat{V}$ is associated to the symplectomorphism $(T, \vec{b})$, then the inverse unitary, $\hat{V}^{-1}$, is associated to the inverse of the latter, $(T^{-1}, -T^{-1} \vec{b})$. Note also that composition of unitaries corresponds to composition of the symplectomorphisms.  

Consider the symplectomorphism $(T^{-1}, -T^{-1}\vec{b})\circ (S, \vec{a})$.  The associated unitary is
 \begin{align}\label{inverse}
 \mu\left((T^{-1}, -T^{-1}\vec{b})\circ (S, \vec{a})\right) \propto \mu(T^{-1}, -T^{-1} \vec{b})\; \mu(S, \vec{a})\propto  \hat{U}^{\dag} \hat{U} \propto \mathds{1},
 \end{align}
 where $\propto$ denotes that the two sides differ by a unit complex number. Property $3$ of Proposition~\ref{GrossSymp} implies the identity
 \begin{align}
 W_{\mu\left((T^{-1}, -T^{-1}\vec{b})\circ (S, \vec{a})\right) \hat{\rho} \mu\left((T^{-1}, -T^{-1}\vec{b})\circ (S, \vec{a})\right) } \left(  (T^{-1}, -T^{-1}\vec{b})\circ (S, \vec{a})  [\vec{v}] \right)= 
W_{\hat{\rho}}(\vec{v}).
 \end{align}
Combining this with Eq.~\eqref{inverse}, we infer that
 \begin{align}\label{WW}
W_{\hat{\rho}} \left(  (T^{-1}, -T^{-1}\vec{b})\circ (S, \vec{a})  [\vec{v}] \right)=  W_{\hat{\rho}}(\vec{v}).
 \end{align}

Given that $(S, \vec{a})$ and $(T, \vec{b})$ are by assumption distinct, the element $(T^{-1}, -T^{-1}\vec{b})\circ (S, \vec{a})$ is distinct from the identity symplectomorphism.  It follows that there exists a $\vec{v} \in \Z_d^{2n}$ such that 
\begin{align}\label{vvv}
\vec{v} \ne \vec{v}^{\;\prime} \equiv (T^{-1}, -T^{-1}\vec{b})\circ (S, \vec{a})[\vec{v}].
\end{align}
But for any pair of vectors, $(\vec{v},\vec{v}^{\;\prime})$ such that $\vec{v} \ne \vec{v}^{\;\prime}$, there exists a $\hat{\rho}$ such that 
\begin{align}\label{Wvvv}
W_{\hat{\rho}} (\vec{v}^{\;\prime}) \ne W_{\hat{\rho}} (\vec{v}).
\end{align}
Letting $\vec{v} \equiv (q_1,\dots, q_n, p_1,\dots, p_n)$, and similarly for $\vec{v}'$, we see this as follows.  If $q_i \ne q'_i$, then take $\hat{\rho}$ to be an eigenstate of $\chi(\hat{q_i})$ with eigenvalue $\chi(q_i)$, in which case, $W_{\hat{\rho}} (\vec{v}) \ne 0$ while $W_{\hat{\rho}} (\vec{v}^{\;\prime}) = 0$.  Similarly, if $p_i \ne p'_i$, then take $\hat{\rho}$ to be an eigenstate of $\chi(\hat{p_i})$ with eigenvalue $\chi(p_i)$. 

Eqs.~\eqref{Wvvv} and \eqref{WW} provide our contradiction.

\section{K\"ahler differential forms}
In this section we will first discuss some further mathematical details of the K\"ahler differential forms introduced in Section \ref{genfundisc}, and we will then prove that the discrete symplectic form \eq{discsympform} is nondegenerate. The material on K\"ahler differentials covered here is well-known in the algebraic geometry literature (\cite{Hartshorne}, Chapter II.8) but we will nonetheless review the basic structures for the benefit of readers who are likely unfamiliar with this subject.

K\"ahler differentials can be defined by any commutative algebra $B$ over a field $\kx$. The case considered in the main text are when $\kx=\Z_d$ and $B=\Z_d[\vec{q}, \vec{p}]$ but for now we will not restrict to this case.

To begin, one first defines K\"ahler $1$-forms, $\Omega^1(B)$. These will be a module over $B$, which is essentially to say that $\Omega^1(B)$ will be a vector space where the scalars come from $B$ instead of just $\kx$.

\begin{definition}
\label{K1form}
 The module $\Omega^1(B)$ of K\"ahler $1$-forms is the $B$-module generated by elements of the form $\{ \dee b \ | \ b \in B\}$ where for each $b_1,b_2 \in B$ and $\lambda \in \kx$, the following relations hold
 \begin{enumerate}
  \item $\dee (b_1 + b_2) = \dee b_1 + \dee b_2$
  \item $\dee (b_1 b_2) = \dee(b_1) b_2 + b_1 \dee b_2$
  \item $\dee \lambda = 0$.
 \end{enumerate}
\end{definition}

\begin{definition}
 The algebra of K\"ahler differential forms $\Omega(B)$ is defined to be the exterior algebra on $\Omega^1(B)$. More explicitly, $\Omega(B)$ consists of $B$-linear combinations of terms of the form
 \begin{equation}
  \dee b_1 \land \cdots \land \dee b_j, \quad b_i \in B,
 \end{equation}
subject to the relations generated by those in Definition \ref{K1form} as well as those generated by for each $b_1, b_2 \in B$,
\begin{enumerate}
 \item  $\dee b_1 \land \dee b_2 = - \dee b_2 \land \dee b_1$
 \item $\dee b_1 \land \dee b_1 = 0$.
\end{enumerate}
\end{definition}

The K\"ahler differential forms are not just an algebra, but indeed a commutative differential graded algebra (CDGA). 
\begin{definition}
Let $B$ be an algebra over a field $\kx$. We say that $B$ is graded if it has a decomposition into vector spaces
\begin{equation}
B = B_0 \oplus B_1 \oplus B_2 \oplus \cdots,
\end{equation}
such that if $b_i \in B_i$ and $b_j \in B_j$ then $b_i b_j \in B_{i+j}$. $B$ is called commutative if $b_i b_j = (-1)^{ij} b_j b_i$.
\end{definition}

The grading on the K\"ahler differentials comes from the decomposition $\Omega(B) = \oplus_{j=0}^\infty \Omega^j(B)$, where $\Omega^j(B)$ is the vector space of $j$-forms in the usual sense.

\begin{definition}
A differential graded algebra (DGA) over a field $\kx$ is a graded algebra $B$ over $\kx$ equipped with a differential $\dee:B \to B$ such that for $b_i \in B_i$ and $b_j \in B_j$,
\begin{enumerate}
\item $\dee$ is $\kx$-linear
\item $\dee (b_i) \in B_{i+1}$
\item $\dee^2 = 0$
\item $\dee(b_i b_j) = \dee (b_i) b_j + (-1)^i b_i \dee(b_j)$
\end{enumerate}
\end{definition}

The differential on the K\"ahler differentials is defined by
\begin{equation}
 \dee ( b_0 \dee b_1 \land \cdots \land b_j) = \dee b_0 \land \dee b_1 \land \cdots \land \dee b_j
\end{equation}
and extending $\kx$-linearly.

Returning to the case when $\kx= \Z_d$ and $B= \Z_d[\vec{q}, \vec{p}]$, since $\dee$ is $\Z_d$-linear and satisfies the Leibniz rule, for any $f \in \Z_d[\vec{q}, \vec{p}]$, 
\begin{equation}
 \dee f = \sum_{i=1^n} \frac{ \partial f}{\partial q^{(i)}} \dee q^{(i)} + \sum_{i=1^n} \frac{ \partial f}{\partial p^{(i)}} \dee p^{(i)},
\end{equation}
where by $\frac{ \partial f}{\partial q^{(i)}}$ and $\frac{ \partial f}{\partial p^{(i)}}$ the formal derivative is meant. Thus, the claim in Section \ref{genfundisc} that K\"ahler differential forms on the affine space $\Z_d^{2n}$ can be manipulated according to the familiar rules for differential forms on $\R^{2n}$ holds.

We now turn to the question of whether or not the symplectic form \eq{discsympform} is non-degenerate. Recall that a $2$-form $\varepsilon$ on $\R^{2n}$ is a skew-symmetric form that takes two vector fields and outputs a smooth function. We say that such a form is non-degenerate if the map
\begin{equation}
\Gamma(T\R^{2n}) \ni X \mapsto \ \varepsilon(X,-) \in \Omega^1(\R^{2n})
\end{equation}
 is an isomorphism. For our affine space $\Z_d^{2n}$, we wish to define a notion of non-degeneracy for elements of $\Omega^2(\Z_d^{2n})$.

 First, recall that a vector field on $\R^{2n}$ is a $\R$-linear map $v: C^\infty(\R^{2n}) \to C^\infty(\R^{2n})$ which satisfies for each $f_1, f_2 \in C^\infty (\R^{2n})$, $v(f_1 f_2) = v(f_1) f_2 + f_1 v(f_2)$. Analogously we define
\begin{definition}
A $\Z_d$-linear map $D: \Z_d[\vec{q}, \vec{p}]\to \Z_d[\vec{q}, \vec{p}]$ is said to be a derivation if for each $f_1,f_2 \in \Z_d[\vec{q}, \vec{p}]$, $v(f_1 f_2) = v(f_1) f_2 + f_1 v(f_2)$. The space of all such derivations, $\mathrm{Vec}(\Z_d^{2n})$, forms an $\Z_d[\vec{q}, \vec{p}]$-module which we call the algebraic vector fields on $\Z_d^{2n}$.
\end{definition}
Next, we define a pairing between algebraic $1$-forms and vector fields by
 \begin{equation}
\langle \cdot, \cdot \rangle : \Omega^1(\Z_d^{2n}) \times \mathrm{Vec}(\Z_d^{2n}) \to \Z_d[\vec{q}, \vec{p}], \quad \langle \dee a, v \rangle = v(a),
\end{equation} 
and extend by $\Z_d[\vec{q}, \vec{p}]$-linearity in each argument. That is for $f_i, g_i, h_j \in \Z_d[\vec{q}, \vec{p}]$ and $v_j \in \mathrm{Vec}(\Z_d^{2n})$, 
\begin{equation}
\left\langle \sum_i f_i \dee g_i, \sum_j h_j v_j \right\rangle = \sum_{i,j} f_i h_j v_j(g_i).
\end{equation}
Using this pairing we reconsider elements of $\Omega^2(\Z_d^{2n})$ as skew-symmetric $\Z_d[\vec{q}, \vec{p}]$-bilinear forms on $\mathrm{Vec}(\Z_d^{2n})$. Given $ \varepsilon = \sum_i f_i \dee g_i\land \dee h_i$ and $u,v \in \mathrm{Vec}(\Z_d^{2n})$, we define
\begin{equation}
\varepsilon(u,v) = \sum_i f_i (u(g_i)v(h_i) - u(h_i) v(g_i)) = -\varepsilon(v,u).
\end{equation}

\begin{definition}
Let  $\varepsilon \in \Omega^2(\Z_d^{2n})$. $\varepsilon$ is non-degenerate if the map $\mathrm{Vec}(\Z_d^{2n}) \ni u \to \varepsilon(u, -)$ is an isomorphism of $\Z_d[\vec{q}, \vec{p}]$-modules. 
\end{definition}

\begin{theorem}
 The symplectic form $\omega = \sum_{i=1}^n \dee q^{(i)} \land \dee p^{(i)} \in \Omega^2(\Z_d^{2n})$ is nondegenerate.
\end{theorem}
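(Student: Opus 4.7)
The plan is to make the problem concrete by choosing explicit bases for $\mathrm{Vec}(\Z_d^{2n})$ and $\Omega^1(\Z_d^{2n})$ and then to show that, relative to these bases, the map $u \mapsto \omega(u,-)$ is given by multiplication by the standard symplectic matrix $J$, which has determinant $\pm 1$ in $\Z_d$ and is therefore invertible.

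First I would establish that $\mathrm{Vec}(\Z_d^{2n})$ is the free $\Z_d[\vec q,\vec p]$-module of rank $2n$ with basis $\{\partial/\partial q^{(i)}, \partial/\partial p^{(i)}\}_{i=1}^n$, where each such formal partial derivative acts on polynomials in the familiar way. This follows because any $\Z_d$-linear derivation $D$ on $\Z_d[\vec q,\vec p]$ is uniquely determined by its values on the generators, and conversely the assignment $q^{(i)}\mapsto a_i$, $p^{(i)}\mapsto b_i$ with $a_i,b_i\in\Z_d[\vec q,\vec p]$ extends uniquely via the Leibniz rule to a derivation, namely $\sum_i a_i\,\partial/\partial q^{(i)} + \sum_i b_i\,\partial/\partial p^{(i)}$. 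Dually, the universal property built into Definition~\ref{K1form} (together with the grading) implies that $\Omega^1(\Z_d^{2n})$ is free of rank $2n$ with basis $\{\dee q^{(i)}, \dee p^{(i)}\}_{i=1}^n$, and under the pairing $\langle\cdot,\cdot\rangle$ these two bases are mutually dual, i.e.\ $\langle \dee q^{(i)}, \partial/\partial q^{(j)}\rangle = \delta_{ij}$, etc.

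Next I would directly compute the map $\Psi: u\mapsto \omega(u,-)$ on the basis of $\mathrm{Vec}(\Z_d^{2n})$. From the definition of $\omega$, for any $v\in\mathrm{Vec}(\Z_d^{2n})$,
\begin{equation}
\omega(\partial/\partial q^{(i)}, v) = \sum_{j=1}^n \bigl(\delta_{ij}\, v(p^{(j)}) - 0\cdot v(q^{(j)})\bigr) = v(p^{(i)}),
\end{equation}
so $\Psi(\partial/\partial q^{(i)}) = \dee p^{(i)}$, and similarly $\Psi(\partial/\partial p^{(i)}) = -\dee q^{(i)}$. Hence, writing a general element of $\mathrm{Vec}(\Z_d^{2n})$ as a column vector $(a_1,\dots,a_n,b_1,\dots,b_n)^T$ of coefficients in $\Z_d[\vec q,\vec p]$, the map $\Psi$ is represented, in the $\{\dee q^{(i)},\dee p^{(i)}\}$ basis of $\Omega^1(\Z_d^{2n})$, by the matrix $J$ of the introduction. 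Since $\det J = (-1)^n$ is a unit in $\Z_d$, $J$ is invertible as a matrix over $\Z_d[\vec q,\vec p]$, and thus $\Psi$ is an isomorphism of free $\Z_d[\vec q,\vec p]$-modules, which is exactly the required nondegeneracy.

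The one step that warrants some care is the foundational one: verifying that $\mathrm{Vec}(\Z_d^{2n})$ really is free of rank $2n$ with the claimed basis, and that the pairing $\langle\cdot,\cdot\rangle$ in terms of which nondegeneracy was defined actually agrees with the evaluation pairing between $\Omega^1$ and $\mathrm{Vec}$ on basis elements. Both facts are standard for polynomial rings over a field (the polynomial algebra is smooth of dimension $2n$, so its module of K\"ahler differentials is free of rank $2n$, and its dual gives the derivations), but since the paper is addressing a readership unfamiliar with K\"ahler differentials, I would spell these identifications out explicitly before invoking the matrix computation. Once that is done, the remainder of the argument is a one-line determinant calculation.
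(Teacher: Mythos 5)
Your proof is correct and is essentially the paper's own argument in different packaging: the paper verifies injectivity and surjectivity of $u \mapsto \omega(u,-)$ directly from the identity $\omega(u,-) = \sum_{k}\bigl(u(q^{(k)})\,\dee p^{(k)} - u(p^{(k)})\,\dee q^{(k)}\bigr)$, using exactly the two foundational facts your matrix computation rests on, namely that $\Omega^1(\Z_d^{2n})$ is free on the coordinate differentials and that a derivation is freely determined by its values on the generators $q^{(k)}, p^{(k)}$. Recasting the map as multiplication by $J$ and invoking $\det J = (-1)^n$ being a unit is a clean equivalent of the paper's explicit construction of the inverse in its surjectivity step.
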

\begin{proof}
 Let $u,v \in \mathrm{Vec}(\Z_d^{2n})$ be such that $\omega(u,-) = \omega(v,-)$. This implies that
\begin{equation}
\sum_{k=1}^n \left(u(q^{(k)}) \dee p^{(k)} - u(p^{(k)}) \dee q^{(k)} \right) = \sum_{k=1}^n \left(v(q^{(k)}) \dee p^{(k)} - v(p^{(k)}) \dee q^{(k)} \right).
\end{equation}
However, since $\Omega^1(\Z_d^{2n})$ is a free $\Z_d[\vec{q}, \vec{p}]$-module generated by $\{\dee q^{(1)},\ldots, \dee q^{(n)}, \dee p^{(1)},\ldots, \dee p^{(n)}\}$  the above equality implies that $u(q^{(k)})=v(q^{(k)})$ and $u(p^{(k)})=v(p^{(k)})$. Since $\Z_d[\vec{q}, \vec{p}]$ is freely generated as an algebra by the $q^{(k)}$ and $p^{(k)}$, $u=v$. We have thus proven injectivity.

Let $\beta \in \Omega^1(\Z_d^{2n})$. $\beta$ may be written as
\begin{equation}
\beta = \sum_{k=1}^n \left(\beta^1_k \dee q^{(k)} + \beta^2_k \dee p^{(k)}\right).
\end{equation}
Let $u$ be the derivation defined as $u(q^{(k)}) =  \beta^2_k$ and $u(p^{(k)}) =-\beta^1_k$. Then clearly $\omega(u,-) = \beta$, and hence it is surjective.
\end{proof}

\bibliographystyle{plain}
\bibliographystyle{unsrt}
\bibliography{Bib}

\end{document}